    \definecolor{orange}{cmyk}{0,0.4,0.8,0.2}
    \definecolor{darkorange}{rgb}{.71,0.21,0.01}
    \definecolor{darkgreen}{rgb}{.12,.54,.11}
    \definecolor{myteal}{rgb}{.26, .44, .56}
    \definecolor{gray}{gray}{0.45}
    \definecolor{lightgray}{gray}{.95}
    \definecolor{mediumgray}{gray}{.8}
    \definecolor{inputbackground}{rgb}{.95, .95, .85}
    \definecolor{outputbackground}{rgb}{.95, .95, .95}
    \definecolor{traceback}{rgb}{1, .95, .95}
    \definecolor{red}{rgb}{.6,0,0}
    \definecolor{green}{rgb}{0,.65,0}
    \definecolor{brown}{rgb}{0.6,0.6,0}
    \definecolor{blue}{rgb}{0,.145,.698}
    \definecolor{purple}{rgb}{.698,.145,.698}
    \definecolor{cyan}{rgb}{0,.698,.698}
    \definecolor{lightgray}{gray}{0.5}
    \definecolor{darkgray}{gray}{0.25}
    \definecolor{lightred}{rgb}{1.0,0.39,0.28}
    \definecolor{lightgreen}{rgb}{0.48,0.99,0.0}
    \definecolor{lightblue}{rgb}{0.53,0.81,0.92}
    \definecolor{lightpurple}{rgb}{0.87,0.63,0.87}
    \definecolor{lightcyan}{rgb}{0.5,1.0,0.83}
\def\PY@reset{\let\PY@it=\relax \let\PY@bf=\relax%
    \let\PY@ul=\relax \let\PY@tc=\relax%
    \let\PY@bc=\relax \let\PY@ff=\relax}
\def\PY@tok#1{\csname PY@tok@#1\endcsname}
\def\PY@toks#1+{\ifx\relax#1\empty\else%
    \PY@tok{#1}\expandafter\PY@toks\fi}
\def\PY@do#1{\PY@bc{\PY@tc{\PY@ul{%
    \PY@it{\PY@bf{\PY@ff{#1}}}}}}}
\def\PY#1#2{\PY@reset\PY@toks#1+\relax+\PY@do{#2}}
\def\csname PY@tok@gd\endcsname{\def\PY@tc##1{\textcolor[rgb]{0.63,0.00,0.00}{##1}}}
\def\csname PY@tok@gu\endcsname{\let\PY@bf=\textbf\def\PY@tc##1{\textcolor[rgb]{0.50,0.00,0.50}{##1}}}
\def\csname PY@tok@gt\endcsname{\def\PY@tc##1{\textcolor[rgb]{0.00,0.27,0.87}{##1}}}
\def\csname PY@tok@gs\endcsname{\let\PY@bf=\textbf}
\def\csname PY@tok@gr\endcsname{\def\PY@tc##1{\textcolor[rgb]{1.00,0.00,0.00}{##1}}}
\def\csname PY@tok@cm\endcsname{\let\PY@it=\textit\def\PY@tc##1{\textcolor[rgb]{0.25,0.50,0.50}{##1}}}
\def\csname PY@tok@vg\endcsname{\def\PY@tc##1{\textcolor[rgb]{0.10,0.09,0.49}{##1}}}
\def\csname PY@tok@m\endcsname{\def\PY@tc##1{\textcolor[rgb]{0.40,0.40,0.40}{##1}}}
\def\csname PY@tok@mh\endcsname{\def\PY@tc##1{\textcolor[rgb]{0.40,0.40,0.40}{##1}}}
\def\csname PY@tok@go\endcsname{\def\PY@tc##1{\textcolor[rgb]{0.53,0.53,0.53}{##1}}}
\def\csname PY@tok@ge\endcsname{\let\PY@it=\textit}
\def\csname PY@tok@vc\endcsname{\def\PY@tc##1{\textcolor[rgb]{0.10,0.09,0.49}{##1}}}
\def\csname PY@tok@il\endcsname{\def\PY@tc##1{\textcolor[rgb]{0.40,0.40,0.40}{##1}}}
\def\csname PY@tok@cs\endcsname{\let\PY@it=\textit\def\PY@tc##1{\textcolor[rgb]{0.25,0.50,0.50}{##1}}}
\def\csname PY@tok@cp\endcsname{\def\PY@tc##1{\textcolor[rgb]{0.74,0.48,0.00}{##1}}}
\def\csname PY@tok@gi\endcsname{\def\PY@tc##1{\textcolor[rgb]{0.00,0.63,0.00}{##1}}}
\def\csname PY@tok@gh\endcsname{\let\PY@bf=\textbf\def\PY@tc##1{\textcolor[rgb]{0.00,0.00,0.50}{##1}}}
\def\csname PY@tok@ni\endcsname{\let\PY@bf=\textbf\def\PY@tc##1{\textcolor[rgb]{0.60,0.60,0.60}{##1}}}
\def\csname PY@tok@nl\endcsname{\def\PY@tc##1{\textcolor[rgb]{0.63,0.63,0.00}{##1}}}
\def\csname PY@tok@nn\endcsname{\let\PY@bf=\textbf\def\PY@tc##1{\textcolor[rgb]{0.00,0.00,1.00}{##1}}}
\def\csname PY@tok@no\endcsname{\def\PY@tc##1{\textcolor[rgb]{0.53,0.00,0.00}{##1}}}
\def\csname PY@tok@na\endcsname{\def\PY@tc##1{\textcolor[rgb]{0.49,0.56,0.16}{##1}}}
\def\csname PY@tok@nb\endcsname{\def\PY@tc##1{\textcolor[rgb]{0.00,0.50,0.00}{##1}}}
\def\csname PY@tok@nc\endcsname{\let\PY@bf=\textbf\def\PY@tc##1{\textcolor[rgb]{0.00,0.00,1.00}{##1}}}
\def\csname PY@tok@nd\endcsname{\def\PY@tc##1{\textcolor[rgb]{0.67,0.13,1.00}{##1}}}
\def\csname PY@tok@ne\endcsname{\let\PY@bf=\textbf\def\PY@tc##1{\textcolor[rgb]{0.82,0.25,0.23}{##1}}}
\def\csname PY@tok@nf\endcsname{\def\PY@tc##1{\textcolor[rgb]{0.00,0.00,1.00}{##1}}}
\def\csname PY@tok@si\endcsname{\let\PY@bf=\textbf\def\PY@tc##1{\textcolor[rgb]{0.73,0.40,0.53}{##1}}}
\def\csname PY@tok@s2\endcsname{\def\PY@tc##1{\textcolor[rgb]{0.73,0.13,0.13}{##1}}}
\def\csname PY@tok@vi\endcsname{\def\PY@tc##1{\textcolor[rgb]{0.10,0.09,0.49}{##1}}}
\def\csname PY@tok@nt\endcsname{\let\PY@bf=\textbf\def\PY@tc##1{\textcolor[rgb]{0.00,0.50,0.00}{##1}}}
\def\csname PY@tok@nv\endcsname{\def\PY@tc##1{\textcolor[rgb]{0.10,0.09,0.49}{##1}}}
\def\csname PY@tok@s1\endcsname{\def\PY@tc##1{\textcolor[rgb]{0.73,0.13,0.13}{##1}}}
\def\csname PY@tok@kd\endcsname{\let\PY@bf=\textbf\def\PY@tc##1{\textcolor[rgb]{0.00,0.50,0.00}{##1}}}
\def\csname PY@tok@sh\endcsname{\def\PY@tc##1{\textcolor[rgb]{0.73,0.13,0.13}{##1}}}
\def\csname PY@tok@sc\endcsname{\def\PY@tc##1{\textcolor[rgb]{0.73,0.13,0.13}{##1}}}
\def\csname PY@tok@sx\endcsname{\def\PY@tc##1{\textcolor[rgb]{0.00,0.50,0.00}{##1}}}
\def\csname PY@tok@bp\endcsname{\def\PY@tc##1{\textcolor[rgb]{0.00,0.50,0.00}{##1}}}
\def\csname PY@tok@c1\endcsname{\let\PY@it=\textit\def\PY@tc##1{\textcolor[rgb]{0.25,0.50,0.50}{##1}}}
\def\csname PY@tok@kc\endcsname{\let\PY@bf=\textbf\def\PY@tc##1{\textcolor[rgb]{0.00,0.50,0.00}{##1}}}
\def\csname PY@tok@c\endcsname{\let\PY@it=\textit\def\PY@tc##1{\textcolor[rgb]{0.25,0.50,0.50}{##1}}}
\def\csname PY@tok@mf\endcsname{\def\PY@tc##1{\textcolor[rgb]{0.40,0.40,0.40}{##1}}}
\def\csname PY@tok@err\endcsname{\def\PY@bc##1{\setlength{\fboxsep}{0pt}\fcolorbox[rgb]{1.00,0.00,0.00}{1,1,1}{\strut ##1}}}
\def\csname PY@tok@mb\endcsname{\def\PY@tc##1{\textcolor[rgb]{0.40,0.40,0.40}{##1}}}
\def\csname PY@tok@ss\endcsname{\def\PY@tc##1{\textcolor[rgb]{0.10,0.09,0.49}{##1}}}
\def\csname PY@tok@sr\endcsname{\def\PY@tc##1{\textcolor[rgb]{0.73,0.40,0.53}{##1}}}
\def\csname PY@tok@mo\endcsname{\def\PY@tc##1{\textcolor[rgb]{0.40,0.40,0.40}{##1}}}
\def\csname PY@tok@kn\endcsname{\let\PY@bf=\textbf\def\PY@tc##1{\textcolor[rgb]{0.00,0.50,0.00}{##1}}}
\def\csname PY@tok@mi\endcsname{\def\PY@tc##1{\textcolor[rgb]{0.40,0.40,0.40}{##1}}}
\def\csname PY@tok@gp\endcsname{\let\PY@bf=\textbf\def\PY@tc##1{\textcolor[rgb]{0.00,0.00,0.50}{##1}}}
\def\csname PY@tok@o\endcsname{\def\PY@tc##1{\textcolor[rgb]{0.40,0.40,0.40}{##1}}}
\def\csname PY@tok@kr\endcsname{\let\PY@bf=\textbf\def\PY@tc##1{\textcolor[rgb]{0.00,0.50,0.00}{##1}}}
\def\csname PY@tok@s\endcsname{\def\PY@tc##1{\textcolor[rgb]{0.73,0.13,0.13}{##1}}}
\def\csname PY@tok@kp\endcsname{\def\PY@tc##1{\textcolor[rgb]{0.00,0.50,0.00}{##1}}}
\def\csname PY@tok@w\endcsname{\def\PY@tc##1{\textcolor[rgb]{0.73,0.73,0.73}{##1}}}
\def\csname PY@tok@kt\endcsname{\def\PY@tc##1{\textcolor[rgb]{0.69,0.00,0.25}{##1}}}
\def\csname PY@tok@ow\endcsname{\let\PY@bf=\textbf\def\PY@tc##1{\textcolor[rgb]{0.67,0.13,1.00}{##1}}}
\def\csname PY@tok@sb\endcsname{\def\PY@tc##1{\textcolor[rgb]{0.73,0.13,0.13}{##1}}}
\def\csname PY@tok@k\endcsname{\let\PY@bf=\textbf\def\PY@tc##1{\textcolor[rgb]{0.00,0.50,0.00}{##1}}}
\def\csname PY@tok@se\endcsname{\let\PY@bf=\textbf\def\PY@tc##1{\textcolor[rgb]{0.73,0.40,0.13}{##1}}}
\def\csname PY@tok@sd\endcsname{\let\PY@it=\textit\def\PY@tc##1{\textcolor[rgb]{0.73,0.13,0.13}{##1}}}
\newcommand{\be}{\begin{equation}}
\newcommand{\ee}{\end{equation}}
\newcommand{\Prob}{\mathbb{P}}
\newcommand{\R}{\mathbb{R}}
\newcommand{\esp}{\mathbb{E}}
\newcommand{\mino}{<}
\newcommand{\maj}{>}
\newcommand{\til}{~}
\renewcommand{\Re}{\mathrm{Re}}
\renewcommand{\Im}{\mathrm{Im}}
    \definecolor{incolor}{rgb}{0.0, 0.0, 0.5}
    \definecolor{outcolor}{rgb}{0.545, 0.0, 0.0}
\begin{document}
    
     \hypersetup{ breaklinks=true,  
      colorlinks=true,
      urlcolor=blue,
      linkcolor=darkorange,
      citecolor=darkgreen,      
      }

\newpage 

\section{Introduction}

We consider an asset price $S_T$ at some fixed date $T > 0$.
We denote $\mathbb P$ the pricing (T-forward) measure, so that the price of a call option with maturity $T$ and strike $K$ is given by $B(0,T) \esp^{\Prob}[(S_T - K)^+]$, where $B(0,T)$ denotes the current price of the zero-coupon bond with maturity $T$.
We work with dimensionless quantities: we denote $k = \log(K/F)$ the forward log-strike, where $F = \esp^{\Prob}[S_T]$ denotes the forward price for maturity $T$, and $v(k) = \sqrt T \sigma_{\mathrm{BS}}(T,k)$ the dimensionless (or ``total'')
 implied volatility.
Recall that $v(k)$ is defined for all
$k \in \mathbb R$ by the equation $\mathrm{Call_{BS}}(k,v(k)) = \esp \bigl[\bigl(\frac{S_T}F - e^k\bigl)^+ \bigr]$, where $\mathrm{Call_{BS}}(k,v) = N(d_1(k,v)) - e^k N(d_2(k,v))$, $N(\cdot)$ is the standard normal cdf, and $d_i(k,v) = \frac{-k}{v} + (\frac 32 -i) v$.

It is well-known that any $C^2$ (or convex) payoff $\varphi(S_T)$ with linear growth can be statically replicated with a strip of call and put options, so that its price can be written via Carr--Madan's formula $B(0,T)\esp[\varphi(S_T)] = B(0,T)\varphi(F) + \int_0^F \varphi''(K) P(K) dK + \int_F^{\infty} \varphi''(K) C(K) dK$, where $P(K)$ and $C(K)$ denote put and call prices for the maturity $T$, see e.g.\til\cite[Eq (11.1)]{GathBook}.
On the other hand, if the law of $S_T$ under $\Prob$ is absolutely continuous 
with respect to the Lebesgue measure on $[0,\infty)$, using the well-known Breeden-Litzenberger relations,
one can write
\be \label{e:density} 
\esp[\varphi(S_T)] = \int_0^\infty \varphi(K) \frac{d^2}{dK^2} \esp[(S_T - K)^+] dK = 
F \int_\R \varphi(K) \frac{d^2}{dK^2} \mathrm{Call_{BS}}(k,v(k))|_{k = \log(K/F)} \, dK.
\ee
Applying the chain rule to the rightmost integrand in \eqref{e:density} leads to an integral formula containing Black-Scholes Greeks with respect to strike and volatility, and the derivatives of the implied volatility smile $v(\cdot)$ up to order two.
A stream of literature \cite{Mat2000, GathBook,Fukasawa2012,Bergomi2016} studies the possibility of re-expressing Equation \eqref{e:density} in such a way that the derivatives of the implied volatility do not appear any more on the right hand side. This is a relevant feature in practice, because observed market data is (in any case) discrete.
Such investigations required the introduction of the concept of normalizing transformation of the implied volatility smile, introduced by Chriss and Morokoff \cite{{ChrissMorok99}} and Matytsin \cite{Mat2000} and formalized in the seminal work of Fukasawa \cite{Fukasawa2012}, that we recall below.

One of the most important examples in this field is the following formula for the implied variance of the log contract\footnote{which coincides with the fair strike of the Variance Swap, under the assumption that $(S_t)_{t \le T}$ follows a diffusion process.
When $T=30\;\mathrm{days}$ and $S$ is the S\&P500 stock index, the left hand side of \eqref{e:varSwapFormula} defines the (theoretical value) of $\mbox{VIX}^2$ at $t=0$.
}, see Chriss and Morokoff \cite{ChrissMorok99} or Gatheral \cite{GathBook}: 
\be \label{e:varSwapFormula}
\esp^{\Prob} \left[-\frac 2T \log \biggl(\frac{S_T}{F} \biggr) \right]
= \frac 1T \int_{\R} v(g_2(z))^2 \phi(z) dz.
\ee
In \eqref{e:varSwapFormula}, $\phi$ is the standard normal density, and $g_2: \R \to \R$ is the inverse of the function (called \emph{second normalizing transformation})
\[
f_2(k) := -d_2(k,v(k)) = \frac{k}{v(k)} + \frac{v(k)}2.
\]
Similarly, the first normalizing transformation (used later on) is given by $f_1(k) := -d_1(k,v(k)) = \frac{k}{v(k)} - \frac{v(k)}2$.
Apart from its appealing compactness, the formula \eqref{e:varSwapFormula} is amenable for numerical approximations, notably in view of the use of Gauss-Hermite quadrature -- see the discussions in \cite[Remark 4.9]{Fukasawa2012} and Bergomi \cite[p.\til143]{Bergomi2016}.
Other examples of similar formulas include: other derivatives such as the $S \ln S$ contract (related to the Gamma Swap, see again \cite{Fukasawa2012} and Section \ref{s:duality} of this paper), and a formula for the characteristic function of $X_T = \log(S_T/F)$ due to Matytsin \cite{Mat2000}, see below. 
The important property that the map $f_2: \R \to \R$ is actually invertible for \emph{any} arbitrage-free implied volatility $v(\cdot)$, implicitly assumed in the aforementioned works, was first proven by Fukasawa \cite{Fukasawa2012}.

\textbf{Matytsin's formula for characteristic functions and
Bergomi's formula for $\mathbb{E}[ (S_T/F)^p ]$}. 

Denote $v_2(z) = v(g_2(z))$. 
Assuming that $v_2$ is differentiable, Matytsin \cite{Mat2000} gives the following formula for the characteristic function of $X_T$
\be \label{e:MatytsinFormula}
\mathbb E \left[e^{i \eta X_T} \right] = \int_{\mathbb R} e^{-i \eta v_2(z) \left(\frac12 v_2(z) - z \right) } \left(1 - i \eta v_2'(z) \right) \phi(z) dz, 
\qquad \eta \in \mathbb R.
\ee
The proof of \eqref{e:MatytsinFormula} is (only) sketched in Matytsin's slides.
Building on this work, Bergomi \cite[Section 4.3.1]{Bergomi2016} derives a formula for the moments of $S_T/F$ of order
$p \in [0,1]$:
\begin{equation} \label{BergomiFormula}
\mathbb{E}\left[\left(\frac{S_T}{F}\right)^p \right] = \int_{\mathbb R} e^{\frac12 p(p-1) v^{p}(z)^2} \phi(z) dz,
\qquad p \in [0,1],
\end{equation}
where the ``$p$-normalized'' implied
volatility $v^{p}(\cdot)$ is defined in the following way: consider the
convex interpolation
\[
f(p,k) = p f_1(k) + (1-p) f_2(k)
 = \frac{k}{v(k)} + \left(\frac12-p\right) \frac{v(k)}2
\]
of the two normalizing transformations $f_1$ and
$f_2$.
We know from Fukasawa \cite{Fukasawa2012} that the two maps $k \mapsto f_1(k)$ and $k \mapsto f_2(k)$ are strictly
increasing from $\mathbb R$ to $\mathbb R$: therefore, so is
$k \mapsto f(p,k)$, for every $p \in [0,1]$. Let bow $g(p,\cdot)$ be the
inverse of $f(p,\cdot)$ on $\mathbb R$: $v^{p}(\cdot)$ is defined by
\be \label{e:pNormalizedImplVolDef}
v^{p}(z) = v(g(p,z)), 
\qquad
\mbox{for all } z \in \mathbb R
\ee
(hence, with reference to Fukasawa's notation, we have $v^{1} = v_1$, $v^{0} = v_2$).
Note we have the following nice interpretation of \eqref{BergomiFormula}: in the Black-Scholes model, where $S_T = F e^{\sigma W_T - \frac 12 \sigma^2 T}$ is a geometric Brownian motion with constant volatility parameter $\sigma = \frac v{\sqrt T}$, one has $\esp\bigl[\bigl(\frac{S_T}{F}\bigr)^p \bigr] = e^{\frac12 p(p-1)v^2} = \int_{\R} e^{\frac12 p(p-1) v^2} \phi(z) dz$.
Therefore, we can see Equation \eqref{BergomiFormula} as an extension of the pricing formula for power payoffs, from the Black-Scholes world to models with non-constant implied volatility.

The formulas \eqref{e:MatytsinFormula} and \eqref{BergomiFormula} are the starting point of this work.
As mentioned above, Bergomi \cite{Bergomi2016} derives \eqref{BergomiFormula}
from \eqref{e:MatytsinFormula}.
Here, we will follow a different route: our starting point is the work of Fukasawa. We first extend the formula for expectations of functions of $X_T$ with polynomial growth given in \cite[Theorem 4.6]{Fukasawa2012} to exponential functions --  carrying out in details the plan addressed in \cite[Remark 4.8]{Fukasawa2012}.
This provides a formula for the generalized characteristic function
$p \in \mathbb{C} \mapsto \mathbb E [e^{p X_T}]$ on its analyticity domain, written directly in terms of the implied volatility smile.
Matytsin's \eqref{e:MatytsinFormula} and Bergomi's \eqref{BergomiFormula} formulas are embedded in this representation as special cases (along with a dual version of the first, and an extension to the complex plane of the second).
By taking real values of $p$, this formula allows to numerically evaluate the (finite) risk-neutral moments of the underlying asset price from the market smile -- therefore identifying model-free quantities that can be used as targets in the calibration of a parametric model.

As addressed in \cite[Remark 4.8]{Fukasawa2012}, it is natural, when evaluating expectations of the form $\esp\bigl[\bigl(\frac{S_T}{F}\bigr)^p \bigr]$, to exploit Lee's moment formulas \cite{Lee04} relating the critical moments of $S_T$ to the asymptotic slopes of the implied volatility for large and small strikes. 
We stress that our approach here goes the other way round: we prove an integral representation for $\esp\bigl[\bigl(\frac{S_T}{F}\bigr)^p \bigr]$ \emph{without} making use of Lee's result.
Then, as a by-product, we can deduce sharp bounds on the exponents that appear in the moment formulas.

\textbf{The example of the SSVI parameterisation}.
Gatheral and Jacquier \cite{SSVI} propose the following parameterisation for total implied variance (the square of the dimensionless implied volatility $v$):

\begin{equation} \label{SSVI}
v_{\mathrm{SSVI}}^2(k) = \frac{\theta_T}2 \left(1 + \rho \varphi(\theta_T) k + \sqrt{(\varphi(\theta_T) k + \rho)^2 + 1 -\rho^2} \right),
\end{equation}

where $\theta_T > 0$, $\varphi: (0,\infty) \to \mathbb R_+ $ and $\rho \in (-1,1)$. The RHS of \eqref{SSVI} provides a parameterisation of the whole implied variance surface, for every log-strike $k$ and
every maturity $T$. In the present setting, we are interested in parameterisations of a single arbitrage-free smile for a fixed maturity $T$: for simplicity, we will therefore drop the index $T$ from the notation, and denote $\theta = \theta_T$ and $\varphi = \varphi(\theta)$.
Important no-arbitrage properties of the SSVI model will be recalled in Section \ref{the-ssvi-parameterisation}.

\section{Extension of Fukasawa's pricing formula} \label{extension-of-thm-4.6-in-fukasawa-mf2012}

Our standing assumptions on the implied volatility are the following:

\begin{assumption} \label{standingAssumptionsV}
\begin{enumerate}
\def\labelenumi{(\roman{enumi})}
\item
  $v$ is twice differentiable on $\mathbb R$ and $v(k) > 0$ for all
  $k \in \mathbb{R}$.
\item
  $\lim_{k \to -\infty} d_2(k,v(k)) =\lim_{k \to -\infty} \left( \frac{-k}{v(k)} - \frac{v(k)}2 \right) = +\infty$
\end{enumerate}
\end{assumption}

Denote $\mu$ the law of $S_T/F$ under $\Prob$.
It is classical that the second differentiability of $v(\cdot)$ is equivalent to the existence of
a density with respect to Lebesgue measure for the restriction of $\mu$ to $(0,\infty)$.
Moreover, the strict positivity of $v(\cdot)$ is equivalent to the two conditions $\inf \{ \mbox{supp} (\mu)\} = 0$ and $\sup \{ \mbox{supp} (\mu) \} = \infty$, see \cite{RogTeh}.
In its turn, Assumption \ref{standingAssumptionsV} (ii) is equivalent to
$\mu(\{0\}) = \mathbb P(S_T = 0) = 0$. (In general, we have
$\lim_{k \to -\infty} d_2(k,v(k)) = -N^{-1}(\mathbb P(S_T = 0))$, see \cite{TehPresentation}, \cite{Fukasawa2010}.)

\begin{remark}
Recall that $f_2(k) \ge \sqrt{2k}$ for every $k > 0$ from the arithmetic mean-geometric mean inequality, therefore we always have $\lim_{k \to \infty} f_2(k) = \infty$.
Analogously, $f_1(k) = \frac{-|k|}{v(k)} - \frac{v(k)}2 \le -\sqrt{2|k|}$ for every $k < 0$, hence $\lim_{k \to -\infty} f_1(k) = -\infty$.
The limit of $f_1$ as $k \to +\infty$ is related to the arbitrage freeness of $v(k)$: indeed, $\lim_{k \to +\infty} f_1(k) = +\infty$ is equivalent to the no-arbitrage condition $\lim_{k \to +\infty} \mathrm{Call_{BS}}(k,v(k)) = 0$.
Therefore. $f_1$ always maps $\R$ onto $\R$.
Assumption \ref{standingAssumptionsV} (ii) ensures that we have $\lim_{k \to -\infty} f_2(k) = -\infty$, so that $f_2$ is surjective, too.
\end{remark}

\hspace{4mm} Fukasawa \cite{Fukasawa2012} proves the following result:

\begin{theorem}[Theorem 4.6 in \cite{Fukasawa2012}] \label{FukasawaFormulaThm} 
Let $\Psi$ be an absolutely continuous function with derivative $\Psi'$ of polynomial growth, and assume that there exists $q>0$ such that $\mathbb E[S_T^{-q}] < \infty$. Then,
\begin{equation} \label{PsiFormula}
\mathbb E \left[\Psi\left( \log \frac{S_T}{F} \right) \right] 
= \int_{-\infty}^{+\infty} \bigl[\Psi(g_2(z)) - \Psi'(g_2(z)) + \Psi'(g_1(z)) e^{-g_1(z)} \bigr] \phi(z) dz.
\end{equation}
Analogously, if there exists $q>0$ such that $\mathbb E[S_T^{1+q}] < \infty$, then $\mathbb E \left[\frac{S_T}{F} \Psi\left( \log \frac{S_T}{F} \right) \right] 
= \int_{-\infty}^{+\infty} \bigl[\Psi(g_1(z)) + \Psi'(g_1(z)) - \Psi'(g_2(z)) e^{g_2(z)} \bigr] \phi(z) dz$.
\end{theorem}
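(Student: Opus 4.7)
The strategy is to express $\esp[\Psi(X_T)]$ (with $X_T:=\log(S_T/F)$) as an integral against the density of $X_T$, read that density off the implied-volatility surface via Breeden--Litzenberger, and then reduce to the claimed form by the change of variables $k=g_2(z)$ together with a single integration by parts. The second identity will follow from the first applied under the stock-price measure $d\mathbb Q/d\mathbb P=S_T/F$, which exchanges the roles of $f_1$ and $f_2$.

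I would first record two elementary identities that drive the whole computation: $\phi(f_1(k))=e^k\phi(f_2(k))$, which follows from $f_2^2-f_1^2=(f_2-f_1)(f_2+f_1)=v\cdot(2k/v)=2k$, and $f_2'(k)-f_1'(k)=v'(k)$, which follows from $f_2-f_1=v$. Combined with the Black--Scholes Greeks $\partial_k\mathrm{Call_{BS}}=-e^kN(d_2)$ and $\partial_v\mathrm{Call_{BS}}=\phi(d_1)$, direct differentiation of $C(k):=\mathrm{Call_{BS}}(k,v(k))$ together with the Breeden--Litzenberger relation $\Prob(X_T>k)=-e^{-k}C'(k)$ collapses to
\[ \Prob(X_T>k) \;=\; N(-f_2(k))-\phi(f_2(k))\,v'(k). \]
One further differentiation, using $\phi'(z)=-z\phi(z)$, yields the clean density
\[ p_{X_T}(k)=\phi(f_2(k))\bigl[f_2'(k)(1-f_2(k)v'(k))+v''(k)\bigr]. \]

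I would then plug this into $\esp[\Psi(X_T)]=\int_\R\Psi(k)\,p_{X_T}(k)\,dk$ and change variables $k=g_2(z)$, using $f_2'(g_2(z))g_2'(z)=1$ together with the chain-rule identity $v''(g_2(z))g_2'(z)=\tfrac{d}{dz}v'(g_2(z))$. This produces
\[ \esp[\Psi(X_T)] = \int_\R \Psi(g_2(z))\phi(z)\,dz + \int_\R \Psi(g_2(z))\Bigl[\tfrac{d}{dz}v'(g_2(z)) - z\,v'(g_2(z))\Bigr]\phi(z)\,dz. \]
An integration by parts against the $\tfrac{d}{dz}v'$ term in the second integral, using $\phi'=-z\phi$, exactly cancels the $-zv'(g_2(z))\phi(z)$ contribution and leaves $-\int\Psi'(g_2(z))v'(g_2(z))g_2'(z)\phi(z)dz$. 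Undoing the change of variables and splitting $v'=f_2'-f_1'$ gives
\[ \int_\R\Psi'(k)v'(k)\phi(f_2(k))\,dk = \int_\R\Psi'(g_2(z))\phi(z)\,dz - \int_\R\Psi'(k)f_1'(k)e^{-k}\phi(f_1(k))\,dk, \]
where the identity $\phi(f_2)=e^{-k}\phi(f_1)$ is used on the last piece; a final substitution $z=f_1(k)$ turns that piece into $\int\Psi'(g_1(z))e^{-g_1(z)}\phi(z)dz$. Assembling the three contributions yields \eqref{PsiFormula}.

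The main technical obstacle will be the rigorous justification of the integration by parts and of Fubini, which requires that boundary terms such as $[\Psi(g_2(z))v'(g_2(z))\phi(z)]_{-\infty}^{+\infty}$ vanish and that all integrands are absolutely integrable. This is where the hypothesis $\esp[S_T^{-q}]<\infty$ enters: by a Lee-type moment estimate it enforces $\limsup_{k\to-\infty}v(k)^2/|k|<2$, which in turn controls the growth of $g_2(z)$ as $z\to-\infty$ and ensures that the Gaussian weight $\phi$ dominates the polynomial growth of $\Psi$ and $\Psi'$ coming from the assumption on $\Psi'$. The second formula then follows by replaying the whole argument under $d\mathbb Q/d\mathbb P=S_T/F$: the Black--Scholes drift becomes $+v^2/2$, so the roles of $f_1$ and $f_2$ swap, and the symmetric hypothesis $\esp[S_T^{1+q}]<\infty$ plays the analogous role on the right tail.
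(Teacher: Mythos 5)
Your proposal is correct and follows essentially the same route as the paper's (and Fukasawa's) argument: express the law of $X_T$ through the Breeden--Litzenberger/Black--Scholes identity $\phi(f_2(k))v'(k)=N(-f_2(k))-\Prob(X_T>k)$ and the density $\phi(f_2(k))[f_2'(k)(1-f_2(k)v'(k))+v''(k)]$, then change variables via $f_1,f_2$ and integrate by parts, with the boundary term $\Psi(k)v'(k)\phi(f_2(k))$ vanishing thanks to the left-wing control provided by $\esp[S_T^{-q}]<\infty$ — exactly the computations in \eqref{identity1}, \eqref{identity2}, \eqref{density} and Lemma \ref{auxiliaryIntegrabilityAndLimits}, just run from the expectation toward the integral formula rather than the reverse. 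Your derivation of the second identity via the measure change $d\hat\Prob/d\Prob=S_T/F$ likewise matches the duality relations $\hat g_1(z)=-g_2(-z)$, $\hat g_2(z)=-g_1(-z)$ of Section \ref{s:duality}, so no genuinely different ideas are involved.
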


Equation \eqref{PsiFormula} can be proven starting from \eqref{e:density} and then applying judicious integration by parts, and change of variables by means of the transformations $f_1$ and $f_2$.
Important steps in this proof, see \cite{Fukasawa2012}, are (i) ensuring the integrability of the involved integrands, and (ii) checking that the boundary terms arising from integration by parts do give zero contribution (point (ii) amounts to showing that $\lim_{k \to \pm \infty} \Psi(k) v'(k) \phi(f_2(k))=0$, see Lemma \ref{auxiliaryIntegrabilityAndLimits} in the Appendix).

Setting $\Psi(k) = e^{pk}$ in \eqref{PsiFormula} would give us a formula
for the moment of $S_T/F$ of order $p$. Unless $p=0$, such a function $\Psi$ falls outside the class of functions covered by Theorem
\ref{FukasawaFormulaThm}.
We therefore proceed to extend Theorem \ref{FukasawaFormulaThm} to this setting.

\subsection{Functions with exponential
growth}\label{functions-with-exponential-growth}

Define the two functions
\be \label{e:pPlusMinusDef}
p_+(\beta) = \frac12 \left(\frac 1{\beta} + \frac {\beta}4 + 1 \right),
\qquad
p_-(\beta) = \frac12 \left(\frac 1{\beta} + \frac {\beta}4 - 1 \right),
\qquad \beta \in (0,2],
\ee
and set $p_\pm(\beta) = +\infty$ if $\beta = 0$. It is easy to see that $p_+(\beta) \ge 1$ and $p_-(\beta) \ge 0$, for all $\beta \in [0,2]$.
The expressions above of the functions $p_\pm(\cdot)$ arise from certain integrability conditions of the integrand in \eqref{PsiFormula} when $\Psi(k) = e^{pk}$, as we will explain more precisely below. 
Now, if we denote
\be \label{e:criticalExpDef} 
p^* = \sup \{p > 0: \esp[(S_T)^p] < \infty \},
\qquad
q^* = \sup \{q > 0: \esp[(S_T)^{-q}] < \infty \},
\ee
the right, resp. left, critical exponent of $S_T$ (note we define $q^*$ to be positive), then Roger Lee's moment formula \cite{Lee04} states that:
\be \label{e:momFormula}
p^* = p_+(\beta_+), \qquad q^* = p_-(\beta_-)
\ee
where
\be \label{e:wingSlopes}
\sqrt{\beta_+} = \limsup_{k \to \infty} \frac{v(k)}{\sqrt{k}},
\qquad 	\qquad 
\sqrt{\beta_-} = \limsup_{k \to -\infty} \frac{v(k)}{\sqrt{|k|}}.
\ee

As pointed out in the introduction, in the present work we do \emph{not} make use of Lee's result in order to extend Equation \eqref{PsiFormula} to functions $\Psi$ with exponential growth.
In other words, we do not use the information $p^* = p_+(\beta_+)$ or $q^* = p_-(\beta_-)$ at any point.

\begin{example} \label{ex:SSVI_slopes}
For the SSVI parameterisation \eqref{SSVI} we have
\be \label{e:SSVI_slopes}
\beta_\pm(\mathrm{SSVI}) = \theta \varphi \frac{(1\pm\rho)}{2}.
\ee 
As we will recall in section \ref{no-arbitrage-conditions} below, a necessary condition for no arbitrage is
$\theta \varphi (1\pm\rho) \leq 4$, so that $0 \leq \beta_\pm(\mathrm{SSVI}) \leq 2$.
\end{example}

\hspace{4mm} The quantitative link between \eqref{PsiFormula} and \eqref{e:momFormula} can be highlighted with some simple calculations: assume (only within this paragraph) that the first equation in \eqref{e:wingSlopes} hold as a limit, and that $\beta_+ \notin \{0, 2 \}$. Then, from the definition of $f_{1}$ and $f_{2}$,
\be \label{e:f12Asymptotics}
f_2(k) \sim \left(\frac 1{\sqrt{\beta_+}} + \frac{\sqrt{\beta_+}}2 \right) \sqrt{k},
\qquad \qquad
f_1(k) \sim \left(\frac 1{\sqrt{\beta_+}} - \frac{\sqrt{\beta_+}}2 \right) \sqrt{k}
\qquad \qquad
\mbox{as } k \to \infty.
\ee
Using the definition $f_i(g_i(z)) = z$ of the maps $g_{i}$, it is easy to see that Eq \eqref{e:f12Asymptotics} implies a quadratic behavior of $g_1$ and $g_2$ for large $z$:
\be \label{e:g12Asymptotics}
g_2(z)
\sim \frac{z^2}{\Bigl(\frac 1{\sqrt{\beta_+}} + \frac{\sqrt{\beta_+}}2\Bigr)^2}
= \frac{z^2}{2 p_+(\beta_+)},
\qquad \quad
g_1(z)
\sim \frac{z^2}{\Bigl(\frac 1{\sqrt{\beta_+}} - \frac{\sqrt{\beta_+}}2\Bigr)^2}
= \frac{z^2}{2 p_-(\beta_+)}
\qquad \quad
\mbox{as } z \to \infty.
\ee
Now, a formal application of \eqref{PsiFormula} to $\Psi(k) = e^{pk}$ leads to $\esp\bigl[\bigl(\frac{S_T}{F}\bigr)^p \bigr] = \int_{\R} \bigl[p e^{(p-1) g_1(z)} + (1-p) e^{p g_2(z)} \bigr] \phi(z) dz$.
Applying \eqref{e:g12Asymptotics}, a straightforward calculation yields
\[
\begin{array}{ll}
l_1(z) := e^{(p-1) g_1(z)} \phi(z)
= \exp \Bigl(\frac12 \frac{p - p_+(\beta_+)}{p_-(\beta_+)} z^2 \ (1+o(1)) \Bigr)
&\qquad \mbox{as } z \to \infty,
\\ \\
l_2(z) := e^{p g_2(z)} \phi(z)
= \exp \Bigl(\frac12 \frac{p - p_+(\beta_+)}{p_+(\beta_+)} z^2 \ (1+o(1)) \Bigr)
&\qquad \mbox{as } z \to \infty.
\end{array}
\]
The last estimates above show that $p = p_+(\beta_+)$ is precisely the threshold between Gaussian integrability (when $p \mino p_+(\beta_+)$) or, on the contrary, exploding behavior (when $p \maj p_+(\beta_+)$) for the two functions $l_1$ and $l_2$ inside the integral (for positive $z$).
The analogous argument allows to show that  $p = -p_-(\beta_-)$ is the threshold between integrability and explosion of the integrand for negative $z$.
Some more work (which we perform in the proof of Theorem \ref{extensionFukThm} below) is required to study the asymptotic behavior of the linear combination $p \, l_1(\cdot) + (1-p) l_2(\cdot)$, and to deal with the remaining cases $\beta_{\pm} \in \{0,2\}$.


\begin{definition} \label{def:expGrowth}
We say that a function $\Psi: \mathbb R \to \mathbb R$ has exponential
growth of order $p$ for some $p \in \mathbb R$ if the function
$k \mapsto e^{-pk} \Psi(k)$ is bounded.
\end{definition}


\begin{lemma} \label{auxiliaryIntegrability}
Let $p \in \mathbb R$ such that
\[
- p_-(\beta_-) < p < p_+(\beta_+)
\]
where $p_{\pm}(\cdot)$ are defined in \eqref{e:pPlusMinusDef}, and let $\Psi$ be an absolutely continuous function such that $\Psi$ and $\Psi'$ have exponential growth of order $p$.
Then, the functions

\begin{equation} \label{integrability}
z \mapsto \Psi(g_2(z)) \phi(z),
\qquad z \mapsto \Psi'(g_2(z)) \phi(z),
\qquad \mbox{and} \qquad z \mapsto \Psi'(g_1(z)) e^{-g_1(z)} \phi(z)
\end{equation}

are integrable on $\mathbb R$.
\end{lemma}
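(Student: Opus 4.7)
The plan is to reduce the three integrability claims to Gaussian integrability of two ``extremal'' integrands, and then extract sharp upper bounds on $g_1$ and $g_2$ at $\pm\infty$ from the wing slopes $\beta_\pm$. The exponential growth hypothesis provides $C>0$ with $|\Psi(k)|,|\Psi'(k)| \le C e^{pk}$ for all $k\in\mathbb R$, so the three integrands in \eqref{integrability} are dominated (up to multiplicative constants) by $e^{p g_2(z)}\phi(z)$ and $e^{(p-1)g_1(z)}\phi(z)$. It therefore suffices to prove that these two functions lie in $L^1(\mathbb R)$.

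To control $g_1,g_2$ asymptotically I would make rigorous the heuristic displayed just after \eqref{e:g12Asymptotics}. For $k>0$, set $x=v(k)/\sqrt k$, so that $f_2(k)/\sqrt k = 1/x + x/2 =: h(x)$ and $f_1(k)/\sqrt k = 1/x - x/2 =: \tilde h(x)$. A direct computation gives $h(\sqrt\beta)^2 = 2p_+(\beta)$ and $\tilde h(\sqrt\beta)^2 = 2p_-(\beta)$, and both $h,\tilde h$ are strictly decreasing on $(0,\sqrt 2\,]$. Assuming $\beta_+<2$, the limsup definition yields, for every $\beta'\in(\beta_+,2)$, an eventual bound $x \le \sqrt{\beta'}$; monotonicity then gives $f_2(k)^2 \ge 2p_+(\beta')\,k$ and $f_1(k)^2 \ge 2p_-(\beta')\,k$ for $k$ large, which invert to
\[
g_2(z) \le \frac{z^2}{2 p_+(\beta')}, \qquad g_1(z) \le \frac{z^2}{2 p_-(\beta')} \qquad \text{as } z\to+\infty.
\]
A symmetric argument at $-\infty$ (using $f_2(k)/\sqrt{|k|} = -\tilde h(x)$ and $f_1(k)/\sqrt{|k|} = -h(x)$ for $k<0$) produces $|g_2(z)| \le z^2/(2 p_-(\beta''))$ and $|g_1(z)| \le z^2/(2 p_+(\beta''))$ for any $\beta''\in(\beta_-,2)$.

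Substituting, $e^{p g_2(z)}\phi(z) \le \exp\bigl(\tfrac{z^2}{2}(p/p_+(\beta')-1)\bigr)$ for $z$ large positive, and by continuity of $p_+$ together with the hypothesis $p<p_+(\beta_+)$ one may choose $\beta'$ with $p<p_+(\beta')$, forcing Gaussian decay. The analogous condition for $e^{(p-1)g_1(z)}\phi(z)$ is $p-1 < p_-(\beta')$, equivalent to the previous one since $p_+=p_-+1$. The hypothesis $p>-p_-(\beta_-)$ handles both dominants at $-\infty$ in the same way. Whenever the relevant exponent has the ``benign'' sign on a given half-line (for instance $p\le 0$ at $z\to+\infty$, or $p\ge 1$ at $z\to-\infty$), the exponential factor is at most $1$ and the integrand is directly dominated by $\phi$, with no control on $g_1$ or $g_2$ needed.

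The main technical obstacle is the boundary cases $\beta_\pm\in\{0,2\}$, where the continuity trick breaks down. For $\beta_\pm=0$, the ratio $v(k)/\sqrt{|k|}\to 0$ sends $h(x),\tilde h(x)\to+\infty$, hence $g_1(z),g_2(z)=o(z^2)$, giving Gaussian integrability for every $p$ (consistent with $p_\pm(0)=+\infty$). For $\beta_\pm=2$, the strict hypothesis pins $p$ into the regime $p<1$ (resp.\ $p>0$); one then invokes the universal arithmetic-geometric mean inequalities $f_2(k)^2\ge 2k$ for $k>0$ and $|f_1(k)|^2\ge 2|k|$ for $k<0$, which give $g_2(z),|g_1(z)| \le z^2/2 = z^2/(2 p_+(2))$ and suffice for strict Gaussian decay. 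The remaining ``easy'' integrand on each side falls within the benign-sign regime just mentioned and is dominated by $\phi$.
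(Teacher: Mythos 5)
Your proposal is correct and follows essentially the same route as the paper's proof: dominate the three integrands by $e^{p g_2(z)}\phi(z)$ and $e^{(p-1)g_1(z)}\phi(z)$, convert the limsup definition of $\beta_\pm$ into the eventual bounds $g_2(z)\le \frac{z^2}{2p_+(\beta')}$, $g_1(z)\le \frac{z^2}{2p_-(\beta')}$ (and their mirror images at $-\infty$), and conclude Gaussian decay, with the benign-sign half-lines dominated directly by $\phi$. The only cosmetic difference is that you obtain the key wing estimate via monotonicity of $h(x)=1/x+x/2$ on $(0,\sqrt2\,]$, whereas the paper derives the same inequality \eqref{f2behavior} by the arithmetic--geometric mean trick with the optimal weight $a=2/\beta$; your explicit treatment of $\beta_\pm\in\{0,2\}$ matches the paper's use of the universal bounds \eqref{g2estimate}.
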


With a view on Lemma \ref{auxiliaryIntegrability}, note that the bound $f_2(k) \ge \sqrt{2k}$ for $k \ge 0$ (resp.
$f_1(k) \le -\sqrt{2|k|}$ for $k \le 0$) entails

\begin{equation} \label{g2estimate}
g_2(z) \le \frac{z^2}2 \mbox{ for $z$ large enough} \qquad \mbox{(resp. $g_1(z) \ge -\frac{z^2}2$ for $z$ small enough).}
\end{equation}

Since $g_2(z)$ is eventually positive for large $z$, the estimate \eqref{g2estimate} follows from
$z = f_2(g_2(z)) \ge \sqrt{2 g_2(z)}$ (resp.
$z = f_1(g_1(z)) \le -\sqrt{2 |g_1(z)|}$ for $z$ sufficiently small).
Therefore,  for every $p \in (0,1)$,
\begin{equation} \label{integrabilityForPIn01}
\begin{array}{lll}
p g_2(z) \to_{z \to -\infty} -\infty, & \qquad & e^{p g_2(z)} \phi(z) \le e^{(p-1) \frac{z^2}2} \ \mbox{ for $z$ large enough}
\\
\\
e^{(p-1) g_1(z)} \phi(z) \le e^{-p\frac{z^2}2} \ \mbox{ for $z$ small enough}, & \qquad & (p-1) g_1(z) \to_{z \to \infty} -\infty.
\end{array}
\end{equation}
For every function $F(\cdot)$ with exponential growth of order $p$, we have
$|F(g_2(z)) \phi(z)| \le \mbox{const.} \times e^{p g_2(z)} \phi(z)$ and
$|F(g_1(z)) e^{-g_1(z)} \phi(z)| \le \mbox{const.} \times e^{(p-1) g_1(z)} \phi(z)$.
Consequently, the estimates in \eqref{integrabilityForPIn01} show that the functions in
\eqref{integrability} are always integrable for every $p \in (0,1) = (p_-(2), p_+(2))$ and for every arbitrage-free smile $v$ (regardless of the values of $\beta_\pm$).

\begin{theorem} \label{extensionFukThm}
Let $p \in \mathbb R$ such that
\begin{equation} \label{pBounds}
- p_-(\beta_-) < p < p_+(\beta_+)
\end{equation}
where $p_{\pm}(\cdot)$ are defined in \eqref{e:pPlusMinusDef} and $\beta_\pm$ in \eqref{e:wingSlopes}. Let $\Psi$ be an absolutely continuous function such that $\Psi$ and $\Psi'$ have exponential growth of order $p$.
Then, 
\begin{equation} \label{extensionFukasawa}
\int_{-\infty}^{+\infty} \bigl[\Psi(g_2(z)) - \Psi'(g_2(z)) + \Psi'(g_1(z)) e^{-g_1(z)} \bigr] \phi(z) dz =
\mathbb E \left[\Psi\left( \log{\frac{S_T}{F}} \right) \right].
\end{equation}
In particular, for all $- p_-(\beta_-) < p < p_+(\beta_+)$,
\begin{equation} \label{moment}
L(p) := \int_{-\infty}^{+\infty} \bigl[p e^{(p-1) g_1(z)} + (1-p) e^{p g_2(z)} \bigr] \phi(z) dz =
\mathbb E \left[\left( \frac{S_T}{F} \right)^p \right] =: M(p) < \infty.
\end{equation}
\end{theorem}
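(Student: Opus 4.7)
The plan is a truncation argument. Fix a smooth $\chi: \R \to [0,1]$ with $\chi \equiv 1$ on $[-1,1]$, $\chi \equiv 0$ outside $[-2,2]$, chosen radially non-increasing, and set $\chi_n(k) := \chi(k/n)$ and $\Psi_n := \Psi\,\chi_n$. Each $\Psi_n$ is absolutely continuous with compact support, so $\Psi_n$ and $\Psi_n'$ are bounded. Theorem \ref{FukasawaFormulaThm} applies to $\Psi_n$ — its integrability hypothesis $\esp[S_T^{-q}]<\infty$ enters the proof only to control the factor $e^{-g_1(z)}$ near $z=-\infty$, which is redundant when $\Psi_n'$ is compactly supported, and the boundary terms in the integration by parts vanish for the same reason — giving
\[
\esp\bigl[\Psi_n(\log(S_T/F))\bigr] = \int_\R \bigl[\Psi_n(g_2(z)) - \Psi_n'(g_2(z)) + \Psi_n'(g_1(z)) e^{-g_1(z)}\bigr] \phi(z) dz
\]
for every $n$. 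The goal is to pass to the limit $n \to \infty$, in two stages.

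\textbf{Step 1 (the moment formula \eqref{moment}).} I specialize to $\Psi(k) = e^{pk}$, which together with $\Psi'(k) = p e^{pk}$ has exponential growth of order $p$. The radial monotonicity of $\chi$ makes $\chi_n$ pointwise non-decreasing in $n$, so $\Psi_n(\log(S_T/F)) = (S_T/F)^p \chi_n(\log(S_T/F))$ is non-negative and increases a.s.\ to $(S_T/F)^p$; monotone convergence yields $\esp[\Psi_n(\log(S_T/F))] \uparrow \esp[(S_T/F)^p] \in [0,\infty]$. On the integral side, using $\|\chi_n\|_\infty \le 1$ and $\|\chi_n'\|_\infty \le \|\chi'\|_\infty$, the three integrands are uniformly dominated by constant multiples of $e^{p g_2(z)}\phi(z)$ and $e^{(p-1) g_1(z)}\phi(z)$, both integrable by Lemma \ref{auxiliaryIntegrability}. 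Dominated convergence then gives the integral $\to L(p) < \infty$, whence $\esp[(S_T/F)^p] = L(p) < \infty$.

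\textbf{Step 2 (general $\Psi$).} From Step~1, $\esp[(S_T/F)^p] < \infty$, and by the exponential-growth bound $|\Psi(k)| \le C e^{pk}$ this gives $\esp[|\Psi(\log(S_T/F))|] \le C\,\esp[(S_T/F)^p] < \infty$. Since $|\Psi_n| \le |\Psi|$ pointwise, dominated convergence yields $\esp[\Psi_n(\log(S_T/F))] \to \esp[\Psi(\log(S_T/F))]$. On the integral side, the identities $|\Psi_n(g_i(z))| \le |\Psi(g_i(z))|$ and $|\Psi_n'(g_i(z))| \le |\Psi'(g_i(z))| + \|\chi'\|_\infty |\Psi(g_i(z))|$ (for $i=1,2$) provide uniform-in-$n$ integrable majorants via Lemma \ref{auxiliaryIntegrability} together with the explicit exponential majorants already used in Step~1. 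A final pass of dominated convergence delivers \eqref{extensionFukasawa}.

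\textbf{Main obstacle.} The real difficulty is the risk of circularity: one cannot take $\esp[(S_T/F)^p]<\infty$ for granted, since this is precisely what Lee's moment formula asserts and the authors have committed not to invoke. Step~1 circumvents this by relying on monotone convergence (rather than dominated convergence) on the expectation side, so that the finiteness of the moment is \emph{deduced} from the finiteness of the integral supplied by Lemma \ref{auxiliaryIntegrability}. A subsidiary point, routine but worth flagging, is that Theorem \ref{FukasawaFormulaThm}'s negative-moment hypothesis is vacuous for the compactly supported $\Psi_n$, so the theorem can legitimately be invoked as a black box at this truncated level.
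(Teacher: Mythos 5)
There is a genuine gap, and it sits exactly at the foundation of your argument: the step ``Theorem \ref{FukasawaFormulaThm} applies to $\Psi_n$.'' The hypothesis of that theorem, $\esp[S_T^{-q}]<\infty$ for some $q>0$, is a condition on the \emph{law of $S_T$}, not on the test function, so it cannot be ``vacuous for the compactly supported $\Psi_n$'': if $q^*=0$ (which is perfectly compatible with Assumption \ref{standingAssumptionsV}, since $\Prob(S_T=0)=0$ does not imply any finite negative moment), Fukasawa's theorem as stated is simply not applicable, to $\Psi_n$ or to anything else. Your workaround — asserting that the hypothesis ``enters the proof only to control $e^{-g_1(z)}$ near $z=-\infty$ and the boundary terms,'' which are harmless under compact support — is a plausible but unverified claim about the internals of a cited proof, and verifying it is precisely the analytic core that the paper's own proof carries out: the identity $\phi(f_2(k))v'(k)=N(-f_2(k))-\Prob(X_T>k)$, the wing estimates \eqref{f2behavior} and \eqref{f2secondEstimate}, Lemma \ref{auxiliaryIntegrabilityAndLimits}, and the Breeden--Litzenberger/integration-by-parts computation of Proposition \ref{p:extensionFukThmWeaker}. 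Indeed the paper explicitly advertises that its appendix proof \emph{removes} the limitation $q^*>0$ from \cite[Theorem 4.6]{Fukasawa2012}; your proof needs that removal (at the truncated level) but does not supply it. To close the gap you would have to prove the identity for compactly supported absolutely continuous $\Psi$ with bounded derivative directly — which, with compact support trivializing integrability and boundary terms, amounts to rerunning the change-of-variables/IBP argument of the appendix rather than citing Theorem \ref{FukasawaFormulaThm}.

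Granting that base case, the rest of your argument is correct and is in fact a genuinely different, and in one respect slicker, route than the paper's. Your monotone-convergence trick in Step 1 (with the increasing cutoffs $\chi_n$ and the dominations $e^{pg_2(z)}\phi(z)$, $e^{(p-1)g_1(z)}\phi(z)$ from Lemma \ref{auxiliaryIntegrability}) delivers $M(p)=L(p)<\infty$ on the whole interval $(-p_-(\beta_-),p_+(\beta_+))$ in one stroke, so the lower bounds $p^*\ge p_+(\beta_+)$, $q^*\ge p_-(\beta_-)$ come out as a corollary; the paper instead first proves the formula only on $(\max(-p_-(\beta_-),-q^*),\min(p_+(\beta_+),p^*))$ and then needs the Laplace-transform non-extendability result (Lemma \ref{e:lemmaLaplTransform}, used in Lemma \ref{l:criticalMomentsFirstInequality}) to enlarge the interval — a step your scheme bypasses entirely. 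Note that the paper sketches a closely related approximation argument in a remark (with $\Psi_n(x)=(1+px/n)^n$) and points out there that it requires $q^*>0$ and extra work; your compactly supported truncation is better adapted, but only once the truncated identity is actually proven rather than imported.
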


The proofs of Lemma \ref{auxiliaryIntegrability} and Theorem \ref{extensionFukThm} are given in Appendix \ref{s:appendix}.
The formula \eqref{moment} for the moments of $S_T$ was mentioned in the introduction of \cite{Fukasawa2012}, the extension of Thm \ref{FukasawaFormulaThm} to functions with exponential growth being implicitly assumed therein.

\hspace{4mm} We stress once again that our proof of Theorem \ref{extensionFukThm} does not make use of Lee's result \cite{Lee04}.
As an immediate consequence of Theorem \ref{extensionFukThm}, we have the following bounds:

\begin{corollary} \label{c:LeeLowerBound}
Let $p_+(\beta_+), p_-(\beta_-)$ be defined by \eqref{e:pPlusMinusDef} and $p^*,q^*$ by \eqref{e:criticalExpDef}.
Then
\[
p^* \ge p_+(\beta_+)
\qquad \mbox{and} \qquad
q^* \ge p_-(\beta_-).
\]
In particular, $p^* = p_+(\beta_+)$ if $p_+(\beta_+) = \infty$ and $q^* = p_-(\beta_-)$ if $p_-(\beta_-) = \infty$.
\end{corollary}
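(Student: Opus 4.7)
The plan is that this corollary is a direct readout of Theorem \ref{extensionFukThm}, requiring essentially no new work beyond translating the statement $M(p) < \infty$ into the language of critical exponents defined in \eqref{e:criticalExpDef}. Since we are allowed to assume Theorem \ref{extensionFukThm}, the argument reduces to noting that $\mathbb{E}[(S_T/F)^p] = F^{-p} \mathbb{E}[S_T^p]$, so finiteness of one is equivalent to finiteness of the other.

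For the bound $p^* \ge p_+(\beta_+)$, I would pick an arbitrary $p \in [1,p_+(\beta_+))$ (this interval is nonempty because $p_+(\beta_+) \ge 1$ as noted right after \eqref{e:pPlusMinusDef}). Since $p_-(\beta_-) \ge 0$, we have $-p_-(\beta_-) \le 0 < p < p_+(\beta_+)$, so \eqref{pBounds} holds and Theorem \ref{extensionFukThm} gives $\mathbb{E}[S_T^p] = F^p M(p) < \infty$. By the definition \eqref{e:criticalExpDef}, this means $p \le p^*$, and taking the supremum over $p \in [1,p_+(\beta_+))$ yields $p^* \ge p_+(\beta_+)$.

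The second bound $q^* \ge p_-(\beta_-)$ is proved by the same device applied on the left of zero. If $p_-(\beta_-) = 0$ there is nothing to prove. Otherwise, for any $q \in (0,p_-(\beta_-))$, set $p = -q$ so that $-p_-(\beta_-) < p < 0 < p_+(\beta_+)$; Theorem \ref{extensionFukThm} gives $\mathbb{E}[S_T^{-q}] = F^{-q} M(-q) < \infty$, hence $q \le q^*$, and taking the supremum yields $q^* \ge p_-(\beta_-)$.

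Finally, the ``in particular'' clause is immediate: if $p_+(\beta_+) = \infty$, the first inequality forces $p^* = \infty = p_+(\beta_+)$, and analogously for $q^*$. There is no real obstacle here — the content of the corollary is entirely in Theorem \ref{extensionFukThm}; the only thing worth highlighting explicitly in the write-up is that the argument uses the integral representation \eqref{moment} rather than Lee's moment formula \eqref{e:momFormula}, consistent with the logical independence from \cite{Lee04} stressed in the paragraph preceding Example \ref{ex:SSVI_slopes}.
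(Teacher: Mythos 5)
Your proposal is correct and follows essentially the same route as the paper, whose proof simply reads off $M(p)<\infty$ for all $p\in(-p_-(\beta_-),p_+(\beta_+))$ from Equation \eqref{moment} and invokes the definitions \eqref{e:criticalExpDef}. The only cosmetic point is your restriction to $p\in[1,p_+(\beta_+))$, which is empty in the boundary case $p_+(\beta_+)=1$ (i.e.\ $\beta_+=2$); taking $p$ in the full interval $(0,p_+(\beta_+))$, or noting that $p^*\ge 1$ always holds since $\esp[S_T]=F<\infty$, removes this harmless gap.
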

\begin{proof}
From Equation \eqref{moment}, we have $M(p) \mino \infty$ for all $p \in (-p_-(\beta_-), p_+(\beta_+))$. 
The claim then follows from the definition of $p^*$ and $q^*$.
\end{proof}


\begin{remark}
Our proof of Theorem \ref{extensionFukThm} in Appendix \ref{s:appendix} is obtained essentially by rerunning the explicit computations linking the LHS to the RHS in \eqref{extensionFukasawa}, showing that i) all the required integrability conditions are met, and ii) that integration by parts produce zero boundary terms.
Another approach to the proof of \eqref{moment}
would go as follows: apply Fukasawa's result (Theorem \ref{FukasawaFormulaThm} above) to the functions $\Psi_n(x) = \left(1 + \frac{px}n \right)^n$, which have polynomial growth for every $n$.
Under the assumption $q^*\maj 0$, this yields $\esp[\Psi_n(X_T)] = \int \bigl[ \bigl(1 + \frac{p g_2(z)}n \bigr)^n  - p \bigl(1 + \frac{p g_2(z)}n \bigr)^{n-1} + p \bigl(1 + \frac{p g_1(z)}n \bigr)^{n-1} e^{-g_1(z)} \bigr]\phi(z) dz$.
By monotone convergence, the left hand side $\esp[\Psi_n(X_T)]$ converges to $\esp[e^{p X_T}]$ for every $p \in (-q^*,p^*)$. 
On the other hand, if $p \in (-p_-(\beta_-), p_+(\beta_+))$ the RHS converges to $\int \bigl[p e^{(p-1) g_1(z)} + (1-p) e^{p g_2(z)} \bigr] \phi(z) dz$ by dominated convergence, thanks to Lemma \ref{auxiliaryIntegrability}.
This argument proves Eq \eqref{moment} for every $p$ such that $\max(-p_-(\beta_-),-q^*) < p < \min(p_+(\beta_+), p^*)$; some additional work then allows to prove the inequalities $p^* \ge p_+(\beta_+)$ and $q^* \ge q_-(\beta_-)$ (we use a similar argument in Appendix \ref{s:appendix}), and the claimed result then follows for every $p$ in \eqref{pBounds}.

Following our (alternative) proof in Appendix \ref{s:appendix}, we get rid of the limitation $q^*\maj 0$ (or $p^* \maj 1$) from Theorem 4.6 in \cite{Fukasawa2012}.  
This is related to the growth condition we consider on the function $\Psi$, which is ``one-sided'': while, in Definition \ref{def:expGrowth}, $\Psi$ is allowed to grow faster than a polynome for large arguments (if, say, $p \maj 0$), on the other side $\Psi(x)$ has to go to zero for $x \to -\infty$ (as opposed to the polynomial growth condition in \cite[Theorem 4.6]{Fukasawa2012}, which allows $\Psi$ to diverge on both sides).
\end{remark}

\begin{remark}[About the proof of the converse inequalities for $p^*, p_+$
and $q^*, p_-$
, using Theorem \ref{extensionFukThm}]
The function $p \mapsto M(p) = \esp \bigl[ \bigl(\frac{S_T}F \bigr)^p \bigr]$ in Theorem \ref{extensionFukThm} is the bilateral Laplace transform of a probability measure on $\R$ (the law of $X_T$). 
Its positive and negative abscissas of convergence are, respectively, $p^*$ and $-q^*$ defined in \eqref{e:criticalExpDef}; moreover, $M$ has a unique holomorphic extension to the strip $D^* = \{p \in \mathbb{C}: -q^* < \mathrm{Re}(p) < p^*\}$.

Assume $p^* \maj p_+(\beta_+)$: then, $M$ would be a holomorphic extension of the function $p \mapsto L(p)$, defined on $(-p_-(\beta_-), p_+(\beta_+))$, to $D^*$. 
If we prove that this is not possible -- that is, that $L$ cannot be extended analytically to any neighbourhood of $p_+(\beta_+)$ -- then we would have shown by contradiction that $p^* \le p_+(\beta_+)$, therefore $p^* = p_+(\beta_+)$ by Lemma \ref{c:LeeLowerBound}. The symmetric argument of course applies to the inequality $q^* \le p_-(\beta_-)$.

It is not difficult to prove, using similar arguments to the proof of Theorem \ref{extensionFukThm}, that the function $L$ (as defined by the left hand side of Eq \eqref{moment}) is infinite for every $p \notin [-p_-(\beta_-), p_+(\beta_+)]$ if the implied volatility slope has a limit. More precisely, we can show:
\begin{itemize}
\item Assume that $\lim_{k \to \infty} \frac{v(k)}{\sqrt{k}}$ exists. Then, $L(p) = +\infty$ for every $p \maj p_+(\beta_+)$.
\item Assume that $\lim_{k \to -\infty} \frac{v(k)}{\sqrt{|k|}}$ exists. Then, $L(p) = +\infty$ for every $p \mino -p_-(\beta_-)$.
\end{itemize}
Note we are not (yet) showing here that $L$ does not admit a holomorphic extension to a neighbourhood of $p_+(\beta_+)$ or $-p_-(\beta_-)$.
Let us just observe, for the moment, that the proof of this statement (that we leave for future work) might not be so straightforward.
We can see $L$ as a linear combination of Laplace transforms of positive functions, $L(p) = p \, L_1(p) + (1-p) \, L_2(p)$. 
Since the coefficients $p$ and $1-p$ have \emph{opposite signs} as soon as $p \maj 1$ or $p \mino 0$, even if each Laplace transform $L_i$ cannot be extended analytically above $p_+(\beta_+)$ (or below $-p_-(\beta_-)$), the behavior of their linear combination is more subtle to study.
\end{remark}

\subsection{Extension to the complex plane}\label{extension-to-the-complex-plane}

We can extend Theorem \ref{extensionFukThm} to the following

\begin{proposition} \label{p:complexExtension}
The identity 
\begin{equation} \label{extensionComplexP}
\mathbb E \left[e^{p X_T} \right] 
= \int_{-\infty}^{+\infty} \bigl[ p e^{g_1(z)(p-1)} + (1-p) e^{p g_2(z)} \bigr] \phi(z) dz
\end{equation}
holds for all $p \in\mathbb C$ with $-p_-(\beta_-) < \mathrm{Re}(p) < p_+(\beta_+)$.
\end{proposition}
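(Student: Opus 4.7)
\textbf{Proof plan for Proposition \ref{p:complexExtension}.} The plan is a standard analytic continuation argument: show that both sides of \eqref{extensionComplexP} define holomorphic functions on the open vertical strip
\[
D := \{p \in \mathbb{C} : -p_-(\beta_-) < \mathrm{Re}(p) < p_+(\beta_+)\},
\]
and that they coincide on the real segment $D \cap \mathbb{R} = (-p_-(\beta_-), p_+(\beta_+))$ by Theorem~\ref{extensionFukThm}. The identity principle (applied on the connected open set $D$) then forces equality throughout $D$.

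For the left-hand side, the key point is that $|e^{p X_T}| = e^{\mathrm{Re}(p) X_T}$, so for every $p \in D$ one has $\esp[|e^{p X_T}|] = M(\mathrm{Re}(p)) < \infty$ by Theorem~\ref{extensionFukThm} (equivalently, Corollary~\ref{c:LeeLowerBound} gives $\mathrm{Re}(p) \in (-q^*, p^*)$). The holomorphy of $p \mapsto \esp[e^{pX_T}]$ on $D$ is then a classical fact about bilateral Laplace transforms: one checks local uniform integrability by dominating $|e^{pX_T}|$ by $e^{a X_T} + e^{b X_T}$ for $a < \mathrm{Re}(p) < b$ inside a compact sub-strip, and concludes by Morera's theorem (or by differentiation under the integral sign).

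For the right-hand side, I would argue essentially the same way. Fix a compact set $K \subset D$, and choose $a, b \in \R$ with $-p_-(\beta_-) < a < \mathrm{Re}(p) < b < p_+(\beta_+)$ for all $p \in K$. Then for every $p \in K$,
\[
\bigl|p e^{(p-1)g_1(z)} + (1-p) e^{p g_2(z)}\bigr| \phi(z)
\le C_K \bigl( e^{(a-1) g_1(z)} + e^{(b-1) g_1(z)} + e^{a g_2(z)} + e^{b g_2(z)} \bigr) \phi(z),
\]
with $C_K = \sup_{p \in K}(|p| + |1-p|) < \infty$. Each of the four terms on the right is integrable on $\R$ by Lemma~\ref{auxiliaryIntegrability} applied to $\Psi(k) = e^{ak}$ (resp.\ $e^{bk}$), because $a, b \in (-p_-(\beta_-), p_+(\beta_+))$. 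Thus the RHS integral converges absolutely and uniformly on compacts; combined with the fact that the integrand is entire in $p$ for every fixed $z$, Morera's theorem (or a direct differentiation under the integral sign) yields holomorphy of the RHS on $D$.

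With both sides holomorphic on the connected open set $D$ and equal on the real axis of $D$ by \eqref{moment}, the identity theorem concludes the proof. There is no real obstacle here; the only thing to verify carefully is the local-uniform domination needed for holomorphy of the RHS, and this reduces transparently to Lemma~\ref{auxiliaryIntegrability} through the elementary bound above.
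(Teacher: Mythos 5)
Your proposal is correct and follows essentially the same route as the paper: holomorphy of the right-hand side on the strip via Lemma~\ref{auxiliaryIntegrability}, holomorphy of $p \mapsto \esp[e^{pX_T}]$ there (the paper uses that the strip sits inside $\{-q^* < \mathrm{Re}(p) < p^*\}$ thanks to Corollary~\ref{c:LeeLowerBound}), agreement on the real interval by Theorem~\ref{extensionFukThm}, and the identity theorem. Your added detail on the local-uniform domination of the integrand is just a spelled-out version of what the paper leaves implicit.
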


\begin{proof}
By Lemma \ref{auxiliaryIntegrability}, Equation \eqref{moment} defines a holomorphic function $L$ on $D = \{p \in \mathbb{C} : -p_-(\beta_-) < \mathrm{Re}(p) < p_+(\beta_+) \}$.
In its turn, the function $p \mapsto \mathbb E \left[\left( \frac{S_T}{F} \right)^p \right]$ is holomorphic on $D^* = \{-q^* < \mathrm{Re}(p) < p^* \}$.
From Theorem \ref{extensionFukThm}, these two functions coincide on the interval $(-p_-(\beta_-), p_+(\beta_+))$, therefore they also coincide on $D \cap D^* = D$.
\end{proof}

\section{Duality for normalized implied volatilities} \label{s:duality}

In this section, we investigate the consequences of put-call duality on the normalized implied volatilities $v^{p}(\cdot)$, $p \in [0,1]$.

Let us briefly recall the put-call duality relation and its (standard) consequences for the implied volatility $v(\cdot)$.
Let us drop the index $T$ from notation, and write $S:= S_T = F M$, where $M$ is a positive random variable with $\esp^{\Prob}[M] = 1$. 
Consider the change of measure $\frac{d\hat \Prob}{d\Prob} := M = \frac S F$. 
Set $\hat{S} = \frac F M = \frac{F^2}{S}$ (note that $\esp^{\hat \Prob}[\hat{S}] = F$).
We look for a relation between the implied volatilities of $S$ under $\Prob$ and $\hat{S}$ under $\hat{\Prob}$.
We have:
\be \label{e:putCallDuality1}
\esp^{\hat \Prob}[(\hat{S} - K)_+]
= \esp^{\Prob}\Bigl[M \Bigl(\frac F M  - K\Bigr)^+\Bigr]
= \esp^{\Prob}\Bigl[\Bigl(F - \frac S F K\Bigr)^+\Bigr]
= \frac{K}{F} \esp^{\Prob}\Bigl[\Bigl(\frac{F^2}{K} - S\Bigr)^+\Bigr]
\ee
Consider now the particular case of the Black-Scholes model where $M = \mathcal{E}_T(\sigma B)$, where $\mathcal{E}$ is the Doleans--Dade exponential, $B$ a Brownian motion and $\sigma \maj 0$.
Then, the distributions of $M$ under $\Prob$ and $\frac{1}{M}$ under $\hat \Prob$ are the same.
The equality between the outermost terms in the above chain of equalities rewrites:
\be \label{e:putCallDualityBS}
C_{\mathrm{BS}}(K, F, V) = \frac{K}{F} P_{\mathrm{BS}}\Bigl(\frac{F^2}K, F, V \Bigr)
\ee
where $C_{\mathrm{BS}}(K, F, V)$ (resp.\til$P_{\mathrm{BS}}(K, F, V) $) denotes the price of the Black-Scholes call (resp.\til put) option with strike $K$, forward value $F$,
and total implied volatility $V = \sigma \sqrt T$.
Now, by the definition of the implied volatility we have $\esp^{\hat \Prob}[(\hat{S}- K)_+] = C_{\mathrm{BS}}(K, F, \hat V(K))$ and $\esp^{\Prob}\bigl[\bigl(\frac{F^2}{K} - S \bigr)^+ \bigr] = P_{\mathrm{BS}}\bigl(\frac{F^2}K, F, V\bigl(\frac{F^2}K \bigr) \bigr)$.
Applying \eqref{e:putCallDuality1} and \eqref{e:putCallDualityBS}, we get
\[
C_{\mathrm{BS}}(K, F, \hat V(K)) = C_{\mathrm{BS}}\Bigl(K, F, V\Bigl(\frac{F^2}K \Bigr) \Bigr),
\]
therefore $\hat V(K) = V\bigl(\frac{F^2}K \bigr)$ for all $K > 0$. 
In terms of $v(k) = V(F e^k)$ (resp. $\hat v(k) = \hat V(F e^k)$), this reads
\be \label{e:dualityImpliedVol}
\hat v (k) = v(-k),
\qquad
\mbox{for all } k \in \R.
\ee

We immediately have the following:

\begin{lemma} \label{l:ualityNormalizingTransforms}
Let $f_1, f_2$ (resp.\til$\hat{f_1}, \hat{f_2}$) the first and second normalizing transformations associated to $S$ (resp.\til to $\hat{S} = \frac{F^2}{S}$).
Let $f(p,k) = p f_1(k) + (1-p) f_2(k)$, $p \in \R$, and $g(p,\cdot)$ the inverse function of $f(p,\cdot)$ for $p \in [0,1]$ (and likewise for $\hat f(p,\cdot)$ and $\hat g(p,\cdot)$).
Then 
\[
\hat f(p,k) = -f(1-p,-k) 
\qquad \mbox{for all } k \in \R, \ p \in \R.
\]
and
\[
\hat g(p,z) = -g(1-p,-z) 
\qquad \mbox{for all } z \in \R, \ p \in [0,1].
\]
In particular
\be \label{e:dualityNormalizingTransforms}
\begin{array}{ccc}
\hat{f_1}(k) = -f_2(-k), & \qquad & \hat{g_1}(z) = -g_2(-z)
\\
\hat{f_2}(k) = -f_1(-k), & \qquad & \hat{g_2}(z) = -g_1(-z).
\end{array}
\ee
\end{lemma}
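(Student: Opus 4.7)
The plan is to start from the put-call duality relation \eqref{e:dualityImpliedVol}, namely $\hat v(k) = v(-k)$, and substitute it directly into the definitions of the normalizing transformations, then take the convex combination. This is essentially a symbol-pushing proof; the only subtle point is making sure the domain restrictions for invertibility are honored when passing to the inverses.

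First, I would write out $\hat f_1(k) = \frac{k}{\hat v(k)} - \frac{\hat v(k)}{2}$ and $\hat f_2(k) = \frac{k}{\hat v(k)} + \frac{\hat v(k)}{2}$, plug in $\hat v(k) = v(-k)$, and change variable $k \mapsto -k$. A one-line computation gives $\hat f_1(k) = -f_2(-k)$ and $\hat f_2(k) = -f_1(-k)$, which is exactly \eqref{e:dualityNormalizingTransforms} on the $f$-side.

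Next I would assemble the general claim by linearity of the convex combination:
\[
\hat f(p,k) = p \hat f_1(k) + (1-p)\hat f_2(k) = -p f_2(-k) - (1-p) f_1(-k) = -\bigl[(1-p) f_1(-k) + p f_2(-k)\bigr] = -f(1-p,-k),
\]
valid for every $p \in \R$ since no invertibility is used yet. For the second identity, I would invoke Fukasawa's result (quoted in the Remark after Assumption \ref{standingAssumptionsV}) which guarantees that $k \mapsto f(p,k)$ is a bijection from $\R$ to $\R$ for every $p \in [0,1]$; the same applies to $\hat f(p,\cdot)$ because $\hat v$ is also an arbitrage-free smile. Then from $\hat f(p,k) = -f(1-p,-k)$, setting $z = \hat f(p,k)$ yields $-z = f(1-p,-k)$, hence $-k = g(1-p,-z)$, i.e. $\hat g(p,z) = -g(1-p,-z)$. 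The special cases in \eqref{e:dualityNormalizingTransforms} on the $g$-side are then obtained by choosing $p=1$ and $p=0$.

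I do not foresee any real obstacle: the whole proof is three or four lines of algebra once \eqref{e:dualityImpliedVol} is in hand. The only thing to be careful about is pointing out that the second identity requires $p \in [0,1]$ precisely because that is the range on which $f(p,\cdot)$ (and thus $\hat f(p,\cdot)$) is guaranteed to be invertible on all of $\R$; outside this range one of the two maps may fail to be monotone.
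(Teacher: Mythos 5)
Your proposal is correct and follows essentially the same route as the paper: substitute the duality relation $\hat v(k) = v(-k)$ into the definition of the (interpolated) normalizing transformation to get $\hat f(p,k) = -f(1-p,-k)$, then invert using $\hat f(p,\hat g(p,z)) = z$ to obtain $\hat g(p,z) = -g(1-p,-z)$, with the special cases at $p=0,1$. The only cosmetic difference is that you derive the $f_1,f_2$ identities first and combine by linearity, whereas the paper computes directly with $f(p,k) = \frac{k}{v(k)} + (\frac12-p)\frac{v(k)}{2}$; your explicit remark that invertibility (hence the restriction $p\in[0,1]$) is needed only for the $g$-identity matches the paper's setup.
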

\begin{proof}
The first bullet point follows immediately from \eqref{e:dualityImpliedVol} and the definition of the normalizing transformations: 
$
\hat f(p,k) = \frac{k}{\hat v(k)} + \left(\frac12-p\right) \frac{\hat v(k)}2
= -\frac{-k}{v(-k)} - \left(\frac12 - (1-p)\right) \frac{v(-k)}2 = - f(1-p,-k)$, and the claim follows.
The second bullet point is a direct consequence of the first:
$\hat	f(p,\hat g(p,z)) = z$ rewrites $-f(1-p,-\hat g(p,z)) = z$, therefore $-\hat g(p,z) = g(1-p,-z)$.
Equations \eqref{e:dualityNormalizingTransforms} are special cases	 for $p = 1$ and $p=0$ (recall that, with our notation, $f(0,\cdot) = f_2(\cdot)$).
\end{proof}

\begin{proposition} \label{p:dualityNormalizedVol}
Let $v^p$ (resp. $\hat v^p$) the $p$-normalized implied volatility associated to $S$ (resp. $\hat{S}$), as defined in \eqref{e:pNormalizedImplVolDef}.
Then,
\be \label{e:dualityNormalizedVol}
\hat v^p(z) = v^{1-p}(-z),
\qquad \mbox{for all } z \in \R.
\ee
In particular, $\hat v_1(z) = v_2(-z)$ and $\hat v_2(z) = v_1(-z)$.
\end{proposition}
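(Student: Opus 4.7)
The plan is to chain together the put-call duality for the implied volatility itself, Equation \eqref{e:dualityImpliedVol}, with the duality for the normalizing transformations, Equation \eqref{e:dualityNormalizingTransforms} (and more precisely the general identity $\hat g(p,z) = -g(1-p,-z)$ from Lemma \ref{l:ualityNormalizingTransforms}). Starting from the definition $\hat v^p(z) = \hat v(\hat g(p,z))$ and substituting $\hat g(p,z) = -g(1-p,-z)$, we would rewrite this as $\hat v(-g(1-p,-z))$; then using $\hat v(k) = v(-k)$, we collapse the two minus signs to obtain $v(g(1-p,-z)) = v^{1-p}(-z)$.

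The special cases $\hat v_1(z) = v_2(-z)$ and $\hat v_2(z) = v_1(-z)$ then follow by taking $p = 1$ and $p = 0$ respectively, recalling the conventions $v^1 = v_1$ and $v^0 = v_2$ fixed after \eqref{e:pNormalizedImplVolDef}.

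There is really no obstacle: all the analytic work (invertibility of $f(p,\cdot)$ for every $p \in [0,1]$, which relies on Fukasawa's result, and the duality identity for $v$ obtained from \eqref{e:putCallDuality1}--\eqref{e:putCallDualityBS}) has been done before the statement. The proof is a two-line verification, and the only thing one must be careful about is to keep the sign and the shift $p \leftrightarrow 1-p$ consistent with the conventions used for $f(p,\cdot)$; this is automatic from Lemma \ref{l:ualityNormalizingTransforms}.
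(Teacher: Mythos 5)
Your proposal is correct and follows essentially the same route as the paper: both proofs chain the identity $\hat v(k) = v(-k)$ with $\hat g(p,z) = -g(1-p,-z)$ from Lemma \ref{l:ualityNormalizingTransforms} to get $\hat v^p(z) = \hat v(\hat g(p,z)) = v(g(1-p,-z)) = v^{1-p}(-z)$, and then read off the cases $p=1$, $p=0$. Nothing is missing.
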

\begin{proof}
Applying \eqref{e:dualityImpliedVol} and Lemma \ref{l:ualityNormalizingTransforms}, we have $\hat v^p(z) = \hat v(\hat g(p,z)) = v(-\hat g(p,z))) = v(g(1-p,-z))) = v^{1-p}(-z)$.
The last claim is a special case of \eqref{e:dualityNormalizedVol}, recalling that, with our notation, $v_1 = v^1$, $v_2 = v^0$.
\end{proof}

As an example, using formula \eqref{e:varSwapFormula} for the price of the log contract and Proposition \ref{p:dualityNormalizedVol}, we obtain the following formula for the implied variance of the $S \ln S$ contract (corresponding to the de-annualized fair strike of the Gamma Variance Swap in a diffusion model):
\[
2 \esp^{\Prob} \left[\frac{S_T}{F} \log \biggl(\frac{S_T}{F}\biggr) \right]
= 2 \esp^{\hat \Prob} \left[ -\log \biggl(\frac{\hat S_T}{F}\biggr) \right]
= \int_{\R} \hat v_2(z)^2 \phi(z) dz
= \int_{\R} v_1(-z)^2 \phi(z) dz
= \int_{\R} v_1(z)^2 \phi(z) dz,
\]
which is formula (1.2) in \cite{Fukasawa2012}.

\begin{remark}
As another application of the duality \eqref{e:dualityNormalizingTransforms}, we can deduce the invertibility of the first normalizing transformation from the invertibility of the second (or vice-versa).
Precisely, we have the following: the map $f_1$ (resp.\til $f_2$) is strictly increasing for every arbitrage-free smile $v(\cdot)$ if and only if $f_2$ (resp.\til $f_1$) is such.
\end{remark}

\begin{remark} \label{r:dualityMoments}
The formula \eqref{moment} for the moments of $S_T$ is invariant under the duality transformations $p \to 1-p$, $g_1(z) \to -g_2(-z)$ and $g_2(z) \to -g_1(-z)$.
This is consistent with Lemma \ref{l:ualityNormalizingTransforms}:
indeed, note that 
\[
M(p) =
\mathbb{E}^{\Prob} \left[\left(\frac{S_T}F \right)^p \right]
= \mathbb{E}^{\Prob} \left[ \frac{S_T}F \left(\frac{S_T}F \right)^{p-1} \right]
= \mathbb{E}^{\hat \Prob}\Biggl[ \biggl(\frac{\hat S_T}F \biggr)^{1-p} \Biggr]
=: \hat M(1-p).
\]
Equation \eqref{moment} yields, for every $p \in (-p_-(\beta_-), p_+(\beta_+))$,
\[
\hat M(1-p) = \hat L(1-p)
:= \int_{\R} \left[(1-p) e^{-p \hat g_1(z)} + p e^{(1-p) \hat g_2(z)}\right] \phi(z) dz.
\]
Applying the identities $\hat g_1(z) = -g_2(-z)$ and $\hat g_2(z) = -g_1(-z)$ in Lemma \ref{l:ualityNormalizingTransforms}, it is immediate to check that the rightmost term in the above equation coincides with $L(p)$, therefore with $M(p)$ as expected.
\end{remark}

\section{Alternative representations and Matytsin's
formula}\label{alternative-representations-and-matytsins-formula}

The formula \eqref{extensionComplexP} is written in terms of the two normalizing
transformations $g_1$ and $g_2$. We can recast it into an equivalent formula containing only the normalized implied volatility
$v_1(\cdot)$ and its derivative (or yet, a dual formula in terms of $v_2(\cdot)$). The
original formula \eqref{e:MatytsinFormula} of Matytsin \cite{Mat2000} is a special case of this representation.

Recall the definition $v_2 = v \circ g_2$ and the following relation
between $v_2$ and $g_2$: by definition, we have
$g_2(z) = k \Leftrightarrow f_2(k) = z \Leftrightarrow \frac k{v(k)} + \frac {v(k)}2 = z \Leftrightarrow k = z v(k) - \frac {v(k)^2}2$,
which yields
\begin{equation} \label{gAndV2}
g_2(z) = z v_2(z) - \frac {v_2(z)^2}2, \qquad \qquad \mbox{ for all } z \in \mathbb R. 
\end{equation}
Analogously,
\begin{equation} \label{gAndV1}
g_1(z) = z v_1(z) + \frac {v_1(z)^2}2, \qquad \qquad \mbox{ for all } z \in \mathbb R. 
\end{equation}

\begin{proposition} \label{AlternativeFormulaProp2}
For every $p \in \mathbb C$ with $\mathrm{Re}(p)\in (-p_-(\beta_-), p_+(\beta_+))$,
\begin{equation} \label{extensionComplexPg2v2}
\mathbb E \left[ e^{p X_T} \right]
= \int_{\mathbb R} e^{p \left(z v_2(z) - \frac 12 v_2^2(z) \right)} [1 - p v_2'(z)]  \phi(z) dz.
\end{equation}
\end{proposition}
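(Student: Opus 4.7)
\emph{Proof plan.} The starting point is Proposition \ref{p:complexExtension}, which yields
\[
\mathbb{E}[e^{p X_T}] = I_1 + I_2, \qquad I_1 := \int_{\mathbb{R}} p\, e^{(p-1) g_1(z)} \phi(z)\, dz, \qquad I_2 := \int_{\mathbb{R}} (1-p)\, e^{p g_2(z)} \phi(z)\, dz.
\]
The strategy is to rewrite $I_1$ as an integral in $v_2$ by changing variables so that the $g_1$-integrand becomes a $g_2$-integrand, and then to combine with $I_2$; the identity $g_2(z) = z v_2(z) - \tfrac{1}{2} v_2^2(z)$ from \eqref{gAndV2} then produces the exponent in \eqref{extensionComplexPg2v2}.

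First I would perform two successive changes of variable in $I_1$. Substituting $z = f_1(k)$ (legitimate since $f_1:\mathbb{R} \to \mathbb{R}$ is a $C^1$ bijection under Assumption~\ref{standingAssumptionsV}) gives
\[
I_1 = p \int_{\mathbb{R}} e^{(p-1) k}\, \phi(f_1(k))\, f_1'(k)\, dk,
\]
and then substituting $k = g_2(\tilde z)$, i.e.\ $\tilde z = f_2(k)$, yields an integral over $\tilde z$. The key observation is the algebraic identity $f_1(k) = f_2(k) - v(k)$, which in the new variable reads $f_1(g_2(\tilde z)) = \tilde z - v_2(\tilde z)$, so $f_1'(k)\, dk = (1 - v_2'(\tilde z))\, d\tilde z$.

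The central computation is then the Gaussian ratio
\[
\frac{\phi(\tilde z - v_2(\tilde z))}{\phi(\tilde z)} = \exp\!\left(\tilde z v_2(\tilde z) - \tfrac{1}{2} v_2^2(\tilde z)\right) = e^{g_2(\tilde z)},
\]
obtained from the definition of $\phi$ and \eqref{gAndV2}. Combining, $I_1$ transforms into
\[
I_1 = p \int_{\mathbb{R}} e^{(p-1) g_2(\tilde z)} \cdot e^{g_2(\tilde z)} (1 - v_2'(\tilde z))\, \phi(\tilde z)\, d\tilde z = \int_{\mathbb{R}} p\, e^{p g_2(z)} (1 - v_2'(z))\, \phi(z)\, dz.
\]
Adding this to $I_2 = \int (1-p) e^{p g_2(z)} \phi(z)\, dz$ collapses the coefficients to $[p(1-v_2'(z)) + (1-p)] = [1 - p v_2'(z)]$, and substituting $g_2(z) = z v_2(z) - \tfrac{1}{2} v_2^2(z)$ gives exactly \eqref{extensionComplexPg2v2}.

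The main thing to be careful about is justifying the two changes of variables: $f_1$ and $f_2$ are differentiable bijections of $\mathbb{R}$ (recalled after Assumption~\ref{standingAssumptionsV}), and $v_2 = v \circ g_2$ is differentiable since $v$ is $C^2$ and $g_2$ is differentiable as the inverse of $f_2$, so $v_2'$ appearing in the formula is well-defined. Integrability of the transformed integrand on $\mathbb{R}$ is not an extra issue, since after transformation we recover precisely the linear combination of the integrands in \eqref{moment}/\eqref{extensionComplexP}, whose integrability for $\mathrm{Re}(p) \in (-p_-(\beta_-), p_+(\beta_+))$ is guaranteed by Lemma~\ref{auxiliaryIntegrability}. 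The only genuinely novel step is spotting the Gaussian-ratio identity above, which is the engine of the whole computation.
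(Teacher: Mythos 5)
Your proposal is correct and is essentially the paper's own argument: your Gaussian-ratio identity $\phi(\tilde z - v_2(\tilde z)) = e^{g_2(\tilde z)}\phi(\tilde z)$ is just the paper's identity $e^{-k}\phi(f_1(k)) = \phi(f_2(k))$ rewritten in the $\tilde z = f_2(k)$ variable, and the rest (using $f_1 = f_2 - v$, the chain rule for $v_2'$, and \eqref{gAndV2}) matches the paper's computation. The only difference is organizational — you convert the $g_1$-integral directly into a $g_2$-integral, whereas the paper merges both terms in the $k$-variable before changing back to $z$ via $f_2$ — which does not change the substance.
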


\begin{proof} 
Starting from \eqref{extensionComplexP}, we have
\be \label{e:eq1}
\begin{aligned}
\mathbb E \left[\left( \frac{S_T}{F} \right)^p \right] 
&= \int_{-\infty}^{+\infty} [p e^{(p-1)g_1(z)} + (1-p) e^{p g_2(z)}] \phi(z) dz
\\
&= \int_{-\infty}^{+\infty} \left\{ p e^{(p-1)k} \phi(f_1(k)) f'_1(k) + (1-p) e^{p k} \phi(f_2(z)) f'_2(k) \right\} dk
\\
&= \int_{-\infty}^{+\infty} e^{pk} \phi(f_2(k)) [p f'_1(k) + (1-p) f'_2(k)] dk
\end{aligned}
\ee
where we used the identity $e^{-k} \phi(f_1(k)) = \phi(f_2(k))$. By definition of $f_1$ and $f_2$, we have $f_1'(k) = f_2'(k) - v'(k)$, which yields
\be \label{e:eq2}
e^{pk} \phi(f_2(k)) \left[ p f'_1(k) + (1-p) f'_2(k) \right]
= e^{pk} \phi(f_2(k)) [f'_2(k) - p v'(k)]
= e^{pk} \phi(f_2(k)) [1 - p v_2'(f_2(k))] f'_2(k)
\ee
for $\frac d{dk} v(k) = \frac d{dk} v_2(f_2(k)) = v_2'(f_2(k)) f_2'(k)$.
Plugging \eqref{e:eq2} into \eqref{e:eq1}, we obtain
\[
\mathbb E \left[\left( \frac{S_T}{F} \right)^p \right] 
= \int_{-\infty}^{+\infty} e^{pk} \phi(f_2(k)) [1 - p v_2'(f_2(k))] f'_2(k) dk
= \int_{-\infty}^{+\infty} e^{p g_2(z)} \phi(z) [1 - p v_2'(z)] dz
\]
from which the claim follows using \eqref{gAndV2}.
\end{proof}

\hspace{4mm} We have the following dual formula:

\begin{proposition}
For every $p \in \mathbb C$ with $\mathrm{Re}(p)\in (-p_-(\beta_-), p_+(\beta_+))$,
\begin{equation} \label{extensionComplexPg1v1}
\mathbb E \left[ e^{p X_T} \right] = 
\int_{\mathbb R} e^{(p-1) \left(z v_1(z) + \frac 12 v_1^2(z)\right)} [1 + (1-p) v_1'(z)]  \phi(z) dz.
\end{equation}
\end{proposition}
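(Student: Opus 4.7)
The plan is to mirror the proof of Proposition \ref{AlternativeFormulaProp2}, but to funnel all the manipulations through the first normalizing transformation $f_1$ instead of the second. Starting from \eqref{extensionComplexP}, I change variables $z = f_i(k)$ in each of the two terms to express the integrand against $dk$:
\[
\mathbb E[e^{pX_T}] = \int_{-\infty}^{+\infty}\bigl\{p e^{(p-1)k}\phi(f_1(k)) f_1'(k) + (1-p) e^{pk}\phi(f_2(k)) f_2'(k)\bigr\}\,dk.
\]

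Next, I invoke the standard Black--Scholes identity $\phi(f_2(k)) = e^{-k}\phi(f_1(k))$, which this time I use to factor out $e^{(p-1)k}\phi(f_1(k))$ (rather than $e^{pk}\phi(f_2(k))$ as in \eqref{e:eq2}). The bracketed factor becomes $p f_1'(k) + (1-p) f_2'(k)$, and using the relation $f_2'(k) = f_1'(k) + v'(k)$, it simplifies to $f_1'(k) + (1-p) v'(k)$. Since $v = v_1 \circ f_1$, the chain rule gives $v'(k) = v_1'(f_1(k)) f_1'(k)$, and thus the integrand equals
\[
e^{(p-1)k}\phi(f_1(k)) f_1'(k) \bigl[1 + (1-p) v_1'(f_1(k))\bigr].
\]

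I then change variables back $z = f_1(k)$, $k = g_1(z)$, noting that $f_1$ is a $C^1$ bijection of $\mathbb R$ onto $\mathbb R$ under the standing assumptions (and the preceding remark), to obtain
\[
\mathbb E[e^{pX_T}] = \int_{\mathbb R} e^{(p-1) g_1(z)} [1 + (1-p) v_1'(z)]\,\phi(z)\,dz.
\]
Plugging in \eqref{gAndV1}, $g_1(z) = z v_1(z) + \tfrac12 v_1(z)^2$, yields \eqref{extensionComplexPg1v1}.

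There is no real obstacle here beyond the manipulations being algebraically the ``conjugate'' of those in the proof of Proposition \ref{AlternativeFormulaProp2}; indeed, an equivalent route would be to apply Remark \ref{r:dualityMoments} together with Proposition \ref{p:dualityNormalizedVol} (which gives $\hat v_2(z) = v_1(-z)$) to \eqref{extensionComplexPg2v2} written under $\hat{\Prob}$ for $\hat S_T / F$ at exponent $1-p$, and then change variables $z \mapsto -z$; the symmetry of $\phi$ and the identity $\hat M(1-p) = M(p)$ from Remark \ref{r:dualityMoments} immediately deliver \eqref{extensionComplexPg1v1}. The only subtle point in either approach is the justification of the change of variables, which is immediate since the integrand on the right-hand side of \eqref{extensionComplexP} is absolutely integrable on $\mathbb R$ by Lemma \ref{auxiliaryIntegrability} (with $\Psi(k) = e^{pk}$ when $\mathrm{Re}(p) \in (-p_-(\beta_-), p_+(\beta_+))$), and $f_1, f_2$ are strictly increasing $C^1$-diffeomorphisms of $\mathbb R$.
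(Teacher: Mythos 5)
Your proposal is correct and is precisely what the paper intends by its one-line proof ("mimic Proposition \ref{AlternativeFormulaProp2}, using $f_2'(k)=f_1'(k)+v'(k)$ and \eqref{gAndV1}"): you factor out $e^{(p-1)k}\phi(f_1(k))$ instead of $e^{pk}\phi(f_2(k))$ and change variables through $f_1$ rather than $f_2$. Your alternative duality route is also the one the paper itself records in the remark immediately following, so nothing genuinely new or missing here.
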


\begin{proof}
Follows by mimicking the proof of Prop \ref{AlternativeFormulaProp2}, now applying the identities $f_2'(k) = f_1'(k) + v'(k)$ and \eqref{gAndV1}.
\end{proof}

\begin{remark}
The two formulas \eqref{extensionComplexPg2v2} and \eqref{extensionComplexPg1v1} are the dual of each other: that is, they
are related via the 
transformations $p \leftrightarrow 1-p $, $v_1(z) \leftrightarrow v_2(-z)$.
In fact, instead of proving Equation \eqref{extensionComplexPg1v1}, we could infer it from \eqref{extensionComplexPg2v2}, applying the duality results of Section \ref{s:duality}. 
This goes as follows: use Equation \eqref{extensionComplexPg2v2} to compute $\hat M(1-p) = \esp^{\hat \Prob} \bigl[ \bigl(\frac{\hat S_T}F\bigr)^{1-p} \bigr]$ in terms of $\hat v_2(\cdot)$.
Applying the identities $\hat M(1-p) = M(p)$ from Remark \ref{r:dualityMoments} and $\hat v_2(z) = v_1(-z)$ from Proposition \ref{p:dualityNormalizedVol}, we obtain \eqref{extensionComplexPg1v1}.
\end{remark}

\begin{corollary}
Taking $p = i \eta$ for $\eta \in \mathbb R$ in
\eqref{extensionComplexPg2v2}, we obtain Matytsin's formula \cite{Mat2000}
\[
\mathbb E \left[ e^{i \eta X_T} \right] = \int e^{i \eta \left(z v_2(z) - \frac 12 v_2^2(z) \right)} [1 - i\eta v_2'(z)] \phi(z) dz.
\]
Taking $p = i \eta$ for $\eta \in \mathbb R$ in
\eqref{extensionComplexPg1v1}, we obtain the dual formula
\[
\mathbb E \left[ e^{i \eta X_T} \right] = \int e^{(i\eta-1) \left(z v_1(z) + \frac 12 v_1^2(z) \right)} [1 + (1-i\eta) v_1'(z)] \phi(z) dz.
\]
\end{corollary}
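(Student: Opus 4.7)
The plan is a direct specialization: both identities follow by setting $p = i\eta$, $\eta \in \mathbb R$, in the master formulas \eqref{extensionComplexPg2v2} and \eqref{extensionComplexPg1v1} respectively.

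First I would check admissibility, i.e.\ that $p = i\eta$ lies in the strip $\{p \in \mathbb C : -p_-(\beta_-) < \mathrm{Re}(p) < p_+(\beta_+)\}$ on which those two formulas are established. Since $\mathrm{Re}(i\eta) = 0$, this reduces to verifying $p_+(\beta_+) > 0$ and $p_-(\beta_-) > 0$. The former is immediate from the bound $p_+(\beta) \ge 1$ recalled right after \eqref{e:pPlusMinusDef}. The latter holds for every $\beta \in [0,2)$ by direct inspection of the definition \eqref{e:pPlusMinusDef}.

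With admissibility settled, the rest is mechanical substitution. Plugging $p = i\eta$ into \eqref{extensionComplexPg2v2} produces, without any simplification, the first claimed identity (which is Matytsin's \eqref{e:MatytsinFormula}); plugging it into \eqref{extensionComplexPg1v1} and expanding $p - 1 = i\eta - 1$, $1 - p = 1 - i\eta$ yields the dual formula.

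The only nontrivial point, and the main obstacle I foresee, is the degenerate case $\beta_- = 2$, where $p_-(\beta_-) = 0$ and $p = i\eta$ falls on the boundary of the admissible strip. I would handle it by approximating from the interior: set $p_\varepsilon = \varepsilon + i\eta$ with $\varepsilon \downarrow 0$, apply \eqref{extensionComplexPg2v2} at $p_\varepsilon$, and pass to the limit. On the left-hand side, $|e^{p_\varepsilon X_T}| \le 1 + (S_T/F)^{\varepsilon_0}$ for any fixed $0 < \varepsilon < \varepsilon_0 < p_+(\beta_+)$, and the bound is integrable by Theorem \ref{extensionFukThm}, so dominated convergence applies; on the right-hand side, the integrands are continuous in $p$ and controlled by the functions in \eqref{integrability} evaluated at $p = \varepsilon_0$, so Lemma \ref{auxiliaryIntegrability} again delivers dominated convergence.
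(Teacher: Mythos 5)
Your proposal is correct and follows exactly the route the paper intends: the corollary is nothing more than the specialization $p=i\eta$ of Propositions \ref{AlternativeFormulaProp2} and the dual one, and your admissibility check ($p_+(\beta_+)\ge 1>0$, $p_-(\beta_-)>0$ for $\beta_-<2$) is the only real content; the paper itself gives no separate proof. The one place where you go beyond the paper is the boundary case $\beta_-=2$, i.e.\ $p_-(\beta_-)=0$, which the paper silently ignores. Your limiting scheme $p_\varepsilon=\varepsilon+i\eta$ is the natural fix and the left-hand side is handled correctly, but the domination you invoke on the right-hand side is not quite the right one: the integrand of \eqref{extensionComplexPg2v2} carries the extra factor $1-p_\varepsilon v_2'(z)$, which is not controlled by the three functions in \eqref{integrability}. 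A uniform-in-$\varepsilon$ dominating function requires, besides $e^{\varepsilon_0 g_2(z)}\phi(z)$, the integrability of $|v_2'(z)|e^{\varepsilon_0 g_2(z)}\phi(z)$ and of $|v_2'(z)|\phi(z)$; after the change of variables $z=f_2(k)$ these amount to Lemma \ref{auxiliaryIntegrabilityAndLimits} applied with $\Psi(k)=e^{\varepsilon_0 k}$ (fine, since $\varepsilon_0\in(0,1)$ satisfies \eqref{pBoundsStronger}) and with $\Psi\equiv 1$, i.e.\ exponential growth of order $0$ -- and the latter is exactly what is \emph{not} guaranteed when $\max(-p_-(\beta_-),-q^*)=0$, which is the very case you are trying to cover. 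So either restrict the corollary, as the paper implicitly does, to smiles with the imaginary axis strictly inside the strip (equivalently $\beta_-<2$), or supply a separate argument for the integrability of $|v'(k)|\phi(f_2(k))$ near $k\to-\infty$ in the degenerate case before invoking dominated convergence.
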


As an exercise, we can reobtain Chriss and Morokoff's formula \cite{ChrissMorok99} for
$\mathbb E[X_T] = \mathbb E \left[\log \frac{S_T} F \right]$ by computing
$\frac 1i \frac d{d \eta} \left. \mathbb E \left[ e^{i \eta X_T} \right] \right|_{\eta = 0}$.

\section{Extension of Bergomi's formula to $p \in \mathbb C$, $\mathrm{Re}(p) \in [0,1]$}

In this section, we follow Bergomi's idea \cite{Bergomi2016} of interpolating the two transformations $f_1$ and $f_2$.
When $p$ is real-valued, this procedure allows to turn Equation \eqref{moment} into (yet) another formula for $\mathbb E [(S_T/F)^p]$, now written in terms of the $p$-normalized implied volatility $v^p(\cdot)$ : the result is the compact and elegant formula \eqref{BergomiFormula} (which is also self-dual, cf.\til Remark \ref{r:BergomiDuality} below).
Starting from Proposition \ref{p:complexExtension}, we can now extend Bergomi's formula to the complex plane.

\hspace{4mm} Note that we can extend the definition of the function $f(p,\cdot)$ to $p \in \mathbb C$ in the obvious way, setting $f(p,k) = p f_1(k) + (1-p)f_2(k)$ for every $k \in \R$.
This gives 
\be \label{e:f_p_complexExtension}
\begin{aligned}
f(p,k) &= f(\Re(p),k) + i \Im(p) (f_1(k) - f_2(k)) = f(\Re(p),k) - i \Im(p) v(k)
\\
&= f(\Re(p),k) - i \Im(p) v^{\Re(p)}(f(\Re(p),k)),
\end{aligned}
\ee
so that the map $f(p,\cdot)$ is one-to-one from $\R$ onto the following curve in the complex plane:
\be \label{e:gamma_p}
\gamma_p: \R \to \mathbb C,
\qquad \gamma_p: a \mapsto a - i \Im(p) v^{\Re(p)}(a).
\ee
Consequently, for every $p$ with $\Re(p) \in [0,1]$, we can define the inverse of $f(p,\cdot)$ from $\gamma_p$ to $\R$ by
\[
g(p,\cdot): z \in \gamma_p \mapsto f(p,\cdot)^{-1}(z) = g(\Re (p),\Re(z)),
\]
where, with a slight abuse of notation, we still denote $\gamma_p$ the support $\{ \gamma_p(a): a \in \R \}$ of the curve.
Finally, we can  extend the definition of the $p$-normalized implied volatility $v^p$ to complex $p$:
\[
\mbox{for every } z \in \gamma_p,
\ v^{p}(z) := v(g(p,z)) 
\]
so that $v^{p}(z) = v^{\Re(p)}(\Re(z))$.

\begin{theorem} \label{t:BergomiExtension}
For every $p$ in the strip $\{\mathrm{Re}(p) \in [0,1]\}$,
\begin{equation} \label{e:extensionBergomiComplex}
\mathbb E \left[ e^{p X_T} \right] = 
\int_{\gamma_p}
e^{i \mathrm{Im}(p) g(p,z)}
e^{\frac 12 \mathrm{Re}(p) (\mathrm{Re}(p)-1) v^{p}(z)^2}
\phi(\Re(z)) dz,
\end{equation}
where $g(p,z) = z v^p(z) - \left(\frac12 - \mathrm{Re}(p)\right) v^p(z)^2$ and the curve $\gamma_p$ is defined in \eqref{e:gamma_p}.
\end{theorem}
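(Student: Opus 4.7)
The plan is to mimic the proof of Proposition \ref{AlternativeFormulaProp2}, using the interpolated normalizing transformation $f(p,\cdot) = pf_1(\cdot)+(1-p)f_2(\cdot) = f_2(\cdot)-p\,v(\cdot)$ — which extends unambiguously to $p\in\mathbb C$ — in place of $f_2$. Starting from Proposition \ref{p:complexExtension}, the two changes of variable $k=g_i(z)$ together with the identity $e^{-k}\phi(f_1(k))=\phi(f_2(k))$ (a consequence of $f_2(k)^2-f_1(k)^2=2k$) yield, exactly as in \eqref{e:eq1},
\[
\mathbb E\bigl[e^{pX_T}\bigr] \;=\; \int_{\R} e^{pk}\,\phi(f_2(k))\,f'(p,k)\,dk,
\qquad f'(p,k) \,=\, \partial_k f(p,k) \,=\, f_2'(k)-p\,v'(k),
\]
for every complex $p$ with $\Re(p)\in(-p_-(\beta_-),p_+(\beta_+))$, and hence in particular for $\Re(p)\in[0,1]$.

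The crucial step is the purely algebraic identity
\[
pk-\tfrac12 f_2(k)^2 \;=\; -\tfrac12 f(p,k)^2 \,+\, \tfrac12\, p(p-1)\,v(k)^2, \qquad p\in\mathbb C,
\]
obtained by expanding $f(p,k)^2 = (f_2(k)-p\,v(k))^2$ and using $v(k)f_2(k) = k+\tfrac12 v(k)^2$. Letting $\tilde\phi(w):=(2\pi)^{-1/2}e^{-w^2/2}$ denote the entire extension of $\phi$ to $\mathbb C$, this reads $e^{pk}\phi(f_2(k)) = \tilde\phi(f(p,k))\,e^{\frac12 p(p-1)v(k)^2}$, and therefore
\[
\mathbb E\bigl[e^{pX_T}\bigr] \;=\; \int_{\R}\tilde\phi(f(p,k))\,e^{\frac12 p(p-1)\,v(k)^2}\,f'(p,k)\,dk \;=\; \int_{\gamma_p} \tilde\phi(z)\,e^{\frac12 p(p-1)\,v^p(z)^2}\,dz,
\]
the last equality being the parametrisation $z=f(p,k)$ of $\gamma_p$. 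For $\Re(p)\in[0,1]$ this is a genuine $C^1$ bijection $\R\to\gamma_p$, because $f(\Re(p),\cdot)$ is a convex combination of the strictly increasing maps $f_1,f_2$.

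To recast this into the form of \eqref{e:extensionBergomiComplex}, I parametrise $\gamma_p$ explicitly by $z=a-i\,\Im(p)\,v^{\Re(p)}(a)$, $a\in\R$. Expanding $z^2$ directly and using the decomposition $p(p-1) = \Re(p)(\Re(p)-1) - \Im(p)^2 + i\,\Im(p)\,(2\Re(p)-1)$, the two $\Im(p)^2\,v^p(z)^2$-contributions cancel, leaving
\[
\tilde\phi(z)\,e^{\frac12 p(p-1)\,v^p(z)^2}
\;=\; \phi(\Re z)\,e^{\frac12\Re(p)(\Re(p)-1)\,v^p(z)^2}\,
e^{i\,\Im(p)\bigl[a\,v^{\Re(p)}(a) - (\frac12-\Re(p))\,v^{\Re(p)}(a)^2\bigr]}.
\]
The bracketed expression is exactly $g(p,z)$ — the (real) inverse of $f(p,\cdot)\colon\R\to\gamma_p$ defined just before the theorem — and plugging in gives \eqref{e:extensionBergomiComplex}.

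The only substantive obstacle is the justification of the complex substitution: one must verify that $\gamma_p$ is a simple $C^1$ curve (which follows from the strict monotonicity of $f(\Re(p),\cdot)$ for $\Re(p)\in[0,1]$) and that the integrand along $\gamma_p$ is absolutely integrable. Since $|\tilde\phi(z)|=\phi(\Re z)\,e^{\Im(z)^2/2}$ on $\gamma_p$ and $|\Im z|=|\Im(p)|\,v^{\Re(p)}(a)=O(\sqrt{|a|})$ by the elementary wing bounds on $f_1,f_2$ recalled in the remark after Assumption \ref{standingAssumptionsV}, the product $|\tilde\phi(z)|$ still enjoys Gaussian decay in $a$, and absolute convergence of the full integrand then follows from the same estimates underlying Lemma \ref{auxiliaryIntegrability}, applied with $p$ replaced by $\Re(p)$.
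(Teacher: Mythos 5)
Your proof is correct and follows essentially the same route as the paper's: both start from the identity $\esp[e^{pX_T}]=\int_{\R}e^{pk}\phi(f_2(k))\,\partial_k f(p,k)\,dk$ established in the proof of Proposition \ref{AlternativeFormulaProp2}, pass to the curve $\gamma_p$ via the real monotone substitution $a=f(\Re(p),k)$, and separate real and imaginary parts; your ``completing the square'' identity $e^{pk}\phi(f_2(k))=\tilde\phi(f(p,k))\,e^{\frac12 p(p-1)v(k)^2}$ is just a reorganization of the paper's factorization $f_2(k)=f(\Re(p),k)+\Re(p)v(k)$, and your final expansion reproduces \eqref{e:extensionBergomiComplex} exactly. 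One slip in your closing integrability remark: since $g(\Re(p),\cdot)$ grows quadratically and $v(k)\lesssim\sqrt{\beta_\pm|k|}$, one has $|\Im z|=|\Im(p)|\,v^{\Re(p)}(a)=O(|a|)$, not $O(\sqrt{|a|})$, so $|\tilde\phi(z)|$ alone need not decay along $\gamma_p$; this is harmless, however, because the companion factor satisfies $|e^{\frac12 p(p-1)v^p(z)^2}|=e^{\frac12(\Re(p)(\Re(p)-1)-\Im(p)^2)v^p(z)^2}$, whose $-\frac12\Im(p)^2 v^p(z)^2$ term exactly cancels the growth of $|\tilde\phi|$ (as your own real/imaginary split shows), and in any case absolute convergence is inherited from the $k$-integral, which is a sum of the two pieces already covered by Lemma \ref{auxiliaryIntegrability}.
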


\begin{proof}
In the proof of Prop. \ref{AlternativeFormulaProp2}, we have shown
that $\mathbb E \left[\left( \frac{S_T}{F} \right)^p \right] = \int_{-\infty}^{+\infty} e^{pk} \phi(f_2(k)) [p f'_1(k) + (1-p) f'_2(k)] dk$.
Therefore, using \eqref{e:f_p_complexExtension},
\be \label{e:extBergomi1}
\begin{aligned}
\mathbb E \left[\left( \frac{S_T}{F} \right)^p \right] &=
\int_{-\infty}^{+\infty} e^{pk} \phi(f_2(k)) \partial_k f(p,k) dk
\\
&= \int_{-\infty}^{+\infty} e^{pk}
\phi \Bigl( f(\mathrm{Re}(p),k) + \mathrm{Re}(p) v(k) \Bigr)
\Bigl[ 1 - i \mathrm{Im}(p)\left(v^{\mathrm{Re}(p)}\right)' (f(\Re(p),k)) \Bigr] 
\partial_k f(\Re(p),k)
dk
\\
&=
\int_{-\infty}^{+\infty} e^{p g(\Re(p),a)} \phi \left(a + \Re(p) v^{\mathrm{Re}(p)}(a) \right)
\gamma_p'(a) da
\\
&=
\int_{\gamma_p} e^{p g(p,z)} \phi \bigl(\Re(z) + \Re(p) v^{p}(z) \bigr) dz,
\end{aligned}
\ee
where we have used the identity $f_2(k) = \frac k{v(k)} + \frac12 v(k) = \frac k{v(k)} + \left(\frac12 - a\right) v(k) + a v(k) = f(a,k) + a v(k)$ in the second line.
Now, Eqs \eqref{gAndV2} and \eqref{gAndV1} are easily generalized to $g(a,z) = z v^a(z) - \left(\frac12 - a \right) v^a(z)^2$, which yields
$g(p,z) = \Re(z) v^p(z) +\Re(p) v^p(z)^2 -\frac12 v^p(z)^2$.
Plugging this last identity inside \eqref{e:extBergomi1}, after some straightforward simplifications, we obtain \eqref{e:extensionBergomiComplex}.
\end{proof}

\begin{remark}
Theorem \ref{t:BergomiExtension} shows that Matytsin's formula \eqref{e:MatytsinFormula} and Bergomi's \eqref{BergomiFormula} can be written as line integrals of the \emph{same} function on different curves in the complex plane.
Bergomi's formula is obtained setting
$\mathrm{Im}(p) = 0$ in \eqref{e:extensionBergomiComplex}: in this case, $\gamma_p = \R$.
Matytsin's formula is obtained (again) setting $\mathrm{Re}(p) = 0$ in \eqref{e:extensionBergomiComplex}: in this case, $\gamma_p = \{ a - i \Im(p) v_2(a): a \in \R\}$ (recall that $v^0(\cdot)$ corresponds to Fukasawa's $v_2(\cdot)$).
\end{remark}

Re-injecting the expression of $\gamma_p'$, formula \eqref{e:extensionBergomiComplex} can of course be written as the following (less compact) integral on the real-line
\begin{equation} \label{extensionBergomi}
\mathbb E \left[ e^{p X_T} \right] = 
\int_{\mathbb R}
e^{i \mathrm{Im}(p) g(\mathrm{Re}(p),z)}
e^{\frac 12 \mathrm{Re}(p) (\mathrm{Re}(p)-1) v^{\mathrm{Re}(p)}(z)^2 }
\Bigl[ 1 - i \mathrm{Im}(p) \frac d{dz} v^{\mathrm{Re}(p)}(z) \Bigr] \phi(z) dz,
\end{equation}
which does not require the notion of transformation $g(a,z)$ and volatility $v^a(z)$ for complex-valued $a$ and $z$.

\begin{remark} \label{r:BergomiDuality}
Just as the Equation \eqref{moment} we started from, the formulas \eqref{e:extensionBergomiComplex} and \eqref{extensionBergomi} are also self-dual: they are invariant under the transformations
$p \leftrightarrow 1-p, v^p(z) \leftrightarrow v^{1-p}(-z)$.
\end{remark}

\hspace{4mm} The restriction $\mathrm{Re}(p) \in [0,1]$ in Theorem \ref{t:BergomiExtension} is imposed by the invertibility requirement for the interpolated transformation $f(\mathrm{Re}(p),\cdot)$.
In the next section, we investigate the invertibility of this map for values of $\mathrm{Re}(p)$ that lie outside $[0,1]$ .
We will see that (apart from the trivial Black-Scholes case) there are examples of smiles for which the map $f(a,\cdot)$ remains strictly monotone on some interval (and possibly on the whole $\R$) for all $a > 1$ or all $a < 0$.

\section{The invertibility of
$k \mapsto f(p,k)$}\label{s:monotonicity}

We know (from the result of Fukasawa \cite{Fukasawa2012}) that the interpolated transformation $k \mapsto f(p,k) = p f_1(k) + (1-p) f_2(k) = \frac{k}{v(k)} + \left(\frac12-p\right) \frac{v(k)}2$ associated to \emph{any} arbitrage-free implied volatility $v(\cdot)$ is strictly increasing if $p$ is in $[0,1]$. On the other hand, in the Black-Scholes model with constant volatility parameter $\sigma$, the map $k \mapsto f(p,k) = \frac k {\sigma \sqrt T} + \Bigl(\frac12 - p\Bigr) \sigma \sqrt T$ is (linear, hence) trivially invertible, for every $p \in \mathbb R$.
It is natural, then, to wonder what happens in general -- that is, if there are other cases where $f(p,\cdot)$ is invertible when $p$ lies outside $[0,1]$, and if so, whether one can characterize (in terms of the smile, or of the underlying distribution) the values of $p$ for which this happens.

\subsection{$f(p,\cdot)$ is surjective for every $p$ in an interval
larger than $(-p_-(\beta_-), p_+(\beta_+))$}\label{fpcdot-is-surjective-for-every-p-in-an-interval-larger-than--pux5f-betaux5f--pux5fbetaux5f}

A reasonable guess would seem to conjecture that $k \mapsto f(p,k)$ is invertible when $p$ is between the critical exponents, $p \in (-p_-(\beta_-), p_+(\beta_+))$.
The following proposition shows that $f(p,\cdot)$ is actually surjective for every $p$ in an interval that is \emph{larger} than $(-p_-(\beta_-), p_+(\beta_+))$.

\begin{proposition} \label{e:f_p_surjective}
Recall that $\beta_\pm = \limsup_{k \to \pm \infty} \frac{v(k)}{\sqrt{|k|}}$. Define
\[
\tilde p_+ = \frac 1{\beta_+} + \frac 12, \qquad \tilde p_- = \frac 1{\beta_-} - \frac 12,
\]
where $\tilde p_\pm = \infty$ if $\beta_\pm = 0$.
We have $\tilde p_+ \ge p_+(\beta_+)$ resp.\til$\tilde p_- \ge p_-(\beta_-)$, where the inequalities are strict if $\beta_+ \neq 2$, resp.\til$\beta_- \neq 2$.
Moreover
\begin{itemize}
\item[-] if $p < \tilde p_+$, then $\lim_{k \to \infty} f(p,k) = +\infty$.
\item[-] if $p > -\tilde p_-$, then $\lim_{k \to -\infty} f(p,k) = -\infty$.
\end{itemize}
In particular, if $p \in (-\tilde p_-, \tilde p_+)$, the map $f(p,\cdot)$ is surjective on $\R$.
\end{proposition}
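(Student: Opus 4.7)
The claim has three ingredients: two asymptotic limits of $f(p,\cdot)$, and the algebraic inequality $\tilde p_\pm \ge p_\pm(\beta_\pm)$. The latter reduces to
\[
\tilde p_+ - p_+(\beta_+) \;=\; \frac{1}{2\beta_+} - \frac{\beta_+}{8} \;=\; \frac{4-\beta_+^2}{8\beta_+},
\]
which is non-negative on $\beta_+ \in (0,2]$ and vanishes iff $\beta_+=2$; the minus side is symmetric, and the degenerate cases $\beta_\pm = 0$ (both sides $+\infty$) are trivial.

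For the $+\infty$ limit I would use the identity
\[
f(p,k) \;=\; p f_1(k) + (1-p) f_2(k) \;=\; \frac{k}{v(k)} + \left(\frac{1}{2}-p\right) v(k)
\]
and, for $k>0$, introduce $u(k):=v(k)/\sqrt{k}$, so that by \eqref{e:wingSlopes} one has $\limsup_{k\to\infty} u(k) = \sqrt{\beta_+}$, i.e.\ $u(k)\le \sqrt{\beta_+}+\varepsilon$ eventually for every $\varepsilon>0$. Factoring $\sqrt{k}$ outside gives
\[
f(p,k) \;=\; \sqrt{k}\, h_p(u(k)), \qquad h_p(u):=\frac{1}{u} + \left(\frac{1}{2}-p\right) u,
\]
so the task reduces to $\liminf_{k\to\infty} h_p(u(k))>0$. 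A case split does it: for $p<1/2$, $h_p$ has a positive global minimum $2\sqrt{1/2-p}$ on $(0,\infty)$; for $p\ge 1/2$, $h_p$ is strictly decreasing (as $h_p'(u)=-1/u^2+(1/2-p)<0$ on $(0,\infty)$), so $h_p(u(k))\ge h_p(\sqrt{\beta_+}+\varepsilon)$ eventually, and this lower bound is positive for small $\varepsilon>0$ iff $(p-1/2)\beta_+ < 1$, i.e.\ $p < 1/2 + 1/\beta_+ = \tilde p_+$.

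The $-\infty$ limit is parallel: setting $u(k):=v(k)/\sqrt{|k|}$ for $k<0$ yields $f(p,k)=\sqrt{|k|}\,\tilde h_p(u(k))$ with $\tilde h_p(u):=-1/u + (1/2-p)u$, and the case split is now on $p>1/2$ (where $\tilde h_p$ is bounded above by its negative global maximum $-2\sqrt{p-1/2}$) versus $p\le 1/2$ (where $\tilde h_p$ is strictly increasing, so $\tilde h_p(u(k))\le \tilde h_p(\sqrt{\beta_-}+\varepsilon)$ eventually, which is negative for small $\varepsilon$ iff $p > 1/2 - 1/\beta_- = -\tilde p_-$). Once both limits are in hand, surjectivity of $f(p,\cdot): \R \to \R$ for $p\in(-\tilde p_-, \tilde p_+)$ follows from continuity of $f(p,\cdot)$ (Assumption \ref{standingAssumptionsV}(i)) and the intermediate value theorem.

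The main subtlety is that $\beta_\pm$ is defined via $\limsup$, so one has only the one-sided control $u(k)\le \sqrt{\beta_\pm}+\varepsilon$ eventually, and no lower bound on $u(k)$; in principle $u(k)$ can dip arbitrarily close to $0$ along subsequences. Such dips are benign --- they send $1/u(k)\to +\infty$, which is the ``good'' direction for $h_p$ and for $-\tilde h_p$ --- so the binding regime is $u(k)$ near its $\limsup$, and monotonicity of $h_p$ (for $p\ge 1/2$) or of $\tilde h_p$ (for $p\le 1/2$) is precisely what converts the one-sided bound on $u(k)$ into the one-sided bound on $h_p(u(k))$ (resp.\ $\tilde h_p(u(k))$) needed for the conclusion.
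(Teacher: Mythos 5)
Your argument is correct and is essentially the paper's own proof: both reduce the two limits to the eventual one-sided bound $v(k) \le (\sqrt{\beta_\pm}+\varepsilon)\sqrt{|k|}$ furnished by the $\limsup$, check the sign of the resulting coefficient of $\sqrt{|k|}$ (whose threshold is exactly $\tilde p_\pm$), and conclude surjectivity by continuity and the intermediate value theorem. The only cosmetic difference is that you factor out $\sqrt{|k|}$ and analyse the scalar function $h_p(u)=\frac1u+\bigl(\frac12-p\bigr)u$ with a case split at $p=\frac12$ (AM--GM minimum versus monotonicity), whereas the paper splits at $p\le 0$, resp. $p\ge 1$, and invokes its appendix wing estimates \eqref{f2behavior} and \eqref{f2secondEstimate} for $f_2$ and $f_1$.
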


\begin{proof}
The comparison between $\tilde p_+$ and $p_+(\beta_+)$ (resp.\til $\tilde p_-$ and $p_-(\beta_-)$) is immediate to check.
Let us, then, prove the statements about the limits of $f(p,\cdot)$.

Assume $p < \tilde p_+$.
First note that, if $p \le 0$, we have $f(p,k) = \frac k{v(k)} + (\frac12 - p) v(k) \ge \frac k{v(k)} \to \infty$ as $k \to \infty$.
Then, assume $p > 0$. If $\beta_+ = 2$, there is nothing more to prove, because in this case $\tilde p_+=1$. 
If $\beta_+ \in [0,2)$, then using estimate \eqref{f2behavior}, for every $\beta_+ < \beta < 2$ we have
\be \label{f_p_estimate_right}
f(p,k) = f_2(k) - p v(k) \ge \left[\frac 1{\sqrt \beta} + \frac{\sqrt \beta}2 - p \sqrt{\beta} \right] \sqrt k,
\qquad \forall k > \overline k_{\beta},
\ee
for some $\overline k_{\beta} \maj 0$.
The condition $p < \tilde p_+$ entails $\frac 1{\sqrt \beta_+} + \frac{\sqrt \beta_+}2 - p \sqrt{\beta_+} > 0$.
Therefore, when $\beta$ is sufficiently close to $\beta_+$, the coefficient in front of $\sqrt k$ in \eqref{f_p_estimate_right} is positive, so that $\lim_{k \to \infty} f(p,k) = +\infty$.

The symmetric argument holds when $p > -\tilde p_-$.
Let us provide the details: First, if $p \le 1$, then $f(p,k) \le \frac k{v(k)} \to -\infty$ as $k \to -\infty$.
Then, assume $p < 1$. Once again, if $\beta_- = 2$, there is nothing more to prove, because in this case $\tilde p_- = 0$.
If $\beta_- \in [0,2)$, then using estimate \eqref{f2secondEstimate}, for every $\beta_- < \delta < 2$ we have
\[
f(p,k) = f_1(k) + (1-p) v(k) \le \left[-\biggl( \frac 1{\sqrt \delta} + \frac{\sqrt \delta}2 \biggr) + (1-p) \sqrt{\delta} \right] \sqrt{|k|},
\qquad \forall k < \underline k_{\delta},
\]
for some $\underline k_{\delta} \mino 0$.
The condition $p > -\tilde p_-$ entails $-\frac 1{\sqrt \delta} - \frac{\sqrt \delta}2 + (1-p) \sqrt{\delta} < 0$.
Therefore, when $\delta$ is sufficiently close to $\beta_-$, the coefficient in front of $\sqrt{|k|}$ is negative, so that $\lim_{k \to -\infty} f(p,k) = -\infty$.
\end{proof}

\subsection{Some results on the monotonicity of $f(p,\cdot)$}

In this section we show that there exist (non trivial, and practically interesting) situations where the map $f(p,\cdot)$ is strictly monotone for values of $p$ that lie outside $[0,1]$.
We start with the following lemma.

\begin{lemma} \label{lemmaPartialMonotonicity}
On the set $\{k : v'(k) \le 0 \}$, we have $\partial_k f(p,k) > 0$, for every $p \maj 1$.

On the set $\{k : v'(k) \ge 0 \}$, we have $\partial_k f(p,k) > 0$, for every $p \mino 0$.
\end{lemma}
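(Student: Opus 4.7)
The plan is to reduce the statement to the strict monotonicity of the two normalizing transformations $f_1$ and $f_2$, which is Fukasawa's theorem and is already invoked elsewhere in the paper. Since $f(p,k) = p f_1(k) + (1-p) f_2(k)$ and $f_2(k) - f_1(k) = v(k)$, I can rewrite the derivative in two symmetric ways, depending on whether I want to eliminate $f_2'$ or $f_1'$ from the expression.

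First I would compute
\[
\partial_k f(p,k) = p f_1'(k) + (1-p) f_2'(k).
\]
Using $f_2' = f_1' + v'$ gives $\partial_k f(p,k) = f_1'(k) + (1-p) v'(k)$, while using $f_1' = f_2' - v'$ gives $\partial_k f(p,k) = f_2'(k) - p\, v'(k)$. Both identities are purely algebraic rearrangements.

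For the first claim, assume $p > 1$ and $v'(k) \le 0$. Then $1-p < 0$, so $(1-p) v'(k) \ge 0$, and by Fukasawa's theorem $f_1'(k) > 0$ strictly. The first identity therefore yields $\partial_k f(p,k) \ge f_1'(k) > 0$. For the second claim, assume $p < 0$ and $v'(k) \ge 0$: then $-p\, v'(k) \ge 0$, and the second identity together with $f_2'(k) > 0$ gives $\partial_k f(p,k) \ge f_2'(k) > 0$.

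There is no real obstacle here; the only substantive input is the strict positivity of $f_1'$ and $f_2'$, which is Fukasawa's result that both normalizing transformations are strictly increasing on $\R$ for any arbitrage-free smile. The lemma is essentially a sign-chasing observation extending monotonicity from the edges $p=0$ and $p=1$ of Fukasawa's interval to values of $p$ outside $[0,1]$, at the price of restricting to the half-line where $v'$ has the favorable sign.
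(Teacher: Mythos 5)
Your decomposition is exactly the one the paper uses: writing $\partial_k f(p,k)=f_1'(k)+(1-p)v'(k)$ (the paper reaches the same expression in the form $\frac 1{v(k)}\bigl[1-v'(k)f_2(k)\bigr]+(1-p)v'(k)$, which is the same identity since $f_1'(k)=\frac1{v(k)}\bigl(1-v'(k)f_2(k)\bigr)$), and then sign-chasing. The only structural difference is that for the case $p<0$ you argue symmetrically with $\partial_k f(p,k)=f_2'(k)-p\,v'(k)$, whereas the paper deduces that case from the first one by the put--call duality of Section \ref{s:duality} ($f(p,k)=-\hat f(1-p,-k)$, $\hat v(k)=v(-k)$); your direct route is perfectly legitimate and arguably more self-contained, provided the input it needs is justified.

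That is where there is a small but genuine gap: you justify $f_1'(k)>0$ (and $f_2'(k)>0$) \emph{pointwise} by citing that $f_1$ and $f_2$ are strictly increasing. Strict monotonicity of a differentiable function does not imply strict positivity of its derivative at every point (think of $x\mapsto x^3$ at the origin), and the strictness matters here: at a point $k_0$ with $v'(k_0)=0$ your inequality would only give $\partial_k f(p,k_0)\ge f_1'(k_0)\ge 0$, not the claimed strict inequality. The correct input, and the one the paper uses, is Fukasawa's Lemma 2.6 in \cite{Fukasawa2012}, which gives the pointwise strict bound $1-v'(k)f_2(k)>0$ for all $k$, equivalently $f_1'(k)>0$ everywhere; for your second case you likewise need $1-v'(k)f_1(k)>0$, i.e.\ $f_2'(k)>0$ everywhere, which can be quoted from Fukasawa as well or obtained from the first inequality by the duality $\hat v(k)=v(-k)$ (this is how the paper proceeds). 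With that citation adjusted, your argument is complete and coincides in substance with the paper's.
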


The first case in Lemma \ref{lemmaPartialMonotonicity} is relevant in particular in Equity markets, where smiles on stock indices are often monotonically decreasing on the observed interval of strikes (in particular for larger maturities).

\begin{proof}
We first focus on the case $p \maj 1$.
Using the identities
\begin{equation} \label{f1f2derivatives}
\begin{aligned}
f_1'(k) &= \frac 1{v(k)} - \frac k{v(k)^2} v'(k) - \frac{v'(k)}2 = \frac 1{v(k)} \left(1 - v'(k) f_2(k) \right)
\\
f_2'(k) &= \frac 1{v(k)} - \frac k{v(k)^2} v'(k) + \frac{v'(k)}2 = \frac 1{v(k)} \left(1 - v'(k) f_2(k) \right) + v'(k)
\end{aligned}
\end{equation}
we have
\[
\begin{aligned}
\partial_k f(p,k) = p f_1'(k) +(1-p) f_2'(k)
&= \frac 1{v(k)} \left[ p(1 - v'(k) f_2(k)) + (1-p)(1 - v'(k) f_2(k)) \right] + (1-p) v'(k)
\\
&= \frac 1{v(k)} \left[ 1 - v'(k) f_2(k) \right] + (1-p) v'(k).
\end{aligned}
\]
We know from \cite[Lemma 2.6]{Fukasawa2012} that $1 - v'(k) f_2(k) > 0$ for all $k \in \mathbb R$. The conclusion follows.

The case $p \mino 0$ follows from duality (Section \ref{s:duality}): by Lemma \ref{l:ualityNormalizingTransforms}, we have $f(p,k) = -\hat f(q,-k)$, therefore $\partial_k f(p,k) = \partial_k \hat f(q,-k)$, where $q = 1-p$ is larger than $1$.
In the first part of the proof, we have shown that $\partial_k \hat f(q,l) 	\maj 0$ on $\{l: \hat v'(l) \le 0 \}$. 
Since $\hat v(k) = v(-k)$ from Eq \eqref{e:dualityImpliedVol}, we have $\{l: \hat v'(l) \le 0 \} = - \{k: v'(k) \ge 0 \}$, and the second claim follows.
\end{proof}


\begin{proposition} \label{p:monotoneSmile_f_invertible}
$i)$ If the implied volatility $k \mapsto v(k)$ is decreasing, then $\tilde p_+ = p_+({\beta_+}) = \infty$ and the function $f(p,\cdot)$ is invertible from $\R$ to $\R$, for all $p \ge 0$.

$ii)$ If the implied volatility $k \mapsto v(k)$ is increasing, then $\tilde p_- = p_-({\beta_-}) =\infty$ and the function $f(p,\cdot)$ is invertible from $\R$ to $\R$, for all $p \le 1$.
\end{proposition}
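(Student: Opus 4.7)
The plan is to prove part (i) directly and then deduce (ii) from it via the duality relations of Section \ref{s:duality}.

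For (i), I would first observe that if $v$ is decreasing then $v(k) \le v(0)$ for every $k \ge 0$, so $v(k)/\sqrt{k} \to 0$ as $k \to \infty$. This forces $\beta_+ = 0$, hence $p_+(\beta_+) = \tilde p_+ = +\infty$ by the convention \eqref{e:pPlusMinusDef} and the definition of $\tilde p_+$ in Proposition \ref{e:f_p_surjective}.

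To show that $f(p,\cdot)$ is a bijection from $\R$ to $\R$ for every $p \ge 0$, I would split into $p \in [0,1]$ and $p > 1$. The case $p \in [0,1]$ is already covered by Fukasawa's result, as recalled at the beginning of Section \ref{s:monotonicity}. For $p > 1$, strict monotonicity is immediate from Lemma \ref{lemmaPartialMonotonicity}: the hypothesis $v$ decreasing means $v'(k) \le 0$ everywhere, so $\partial_k f(p,k) > 0$ on all of $\R$. Surjectivity then reduces to checking the two limits at $\pm\infty$. The right limit $\lim_{k \to +\infty} f(p,k) = +\infty$ follows from Proposition \ref{e:f_p_surjective}, since $p < \infty = \tilde p_+$. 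For the left limit, I would use the identity $f(p,k) = f_2(k) - p v(k)$ (obtained from $f_1 = f_2 - v$) together with $p \ge 0$ and $v \ge 0$ to get $f(p,k) \le f_2(k)$; then Assumption \ref{standingAssumptionsV}(ii) yields $f_2(k) \to -\infty$ as $k \to -\infty$.

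For (ii) I would not repeat the symmetric argument but invoke duality. If $v$ is increasing, then by \eqref{e:dualityImpliedVol} the dual smile $\hat v(k) = v(-k)$ is decreasing, so part (i) applies to $\hat v$: one obtains $\hat \beta_+ = 0$, $\hat p_+ = \tilde{\hat p}_+ = \infty$, and $\hat f(p,\cdot)$ bijective on $\R$ for every $p \ge 0$. The identity $\hat v = v(-\cdot)$ gives $\hat \beta_+ = \beta_-$, while Lemma \ref{l:ualityNormalizingTransforms} gives $\hat f(p,k) = -f(1-p,-k)$. Translating back, this delivers $p_-(\beta_-) = \tilde p_- = \infty$ and the bijectivity of $f(q,\cdot)$ for every $q = 1-p \le 1$.

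I do not anticipate any serious obstacle in this proof: all the ingredients (Fukasawa's bijectivity on $[0,1]$, the one-sided partial monotonicity of Lemma \ref{lemmaPartialMonotonicity}, the surjectivity criterion of Proposition \ref{e:f_p_surjective}, the domination $f(p,\cdot) \le f_2$ for $p \ge 0$, and the duality of Section \ref{s:duality}) are already in place. The only thing to be careful about is the book-keeping of which ingredient handles which range of $p$ and which limit at infinity, and the fact that the left limit for $p > 1$ is not covered by Proposition \ref{e:f_p_surjective} and must be obtained by hand from the domination $f(p,\cdot) \le f_2$.
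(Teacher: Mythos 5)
Your proof is correct and rests on the same two ingredients as the paper's: Lemma \ref{lemmaPartialMonotonicity} for strict monotonicity of $f(p,\cdot)$ when $p>1$ (the smile being decreasing, so $v'\le 0$ everywhere), and a surjectivity argument via the limits of $f(p,\cdot)$ at $\pm\infty$ with $\tilde p_+ = p_+(\beta_+) = \infty$. Two minor deviations are worth noting. First, for case (ii) the paper just runs the symmetric argument (Lemma \ref{lemmaPartialMonotonicity} on $\{v'\ge 0\}$ together with Proposition \ref{e:f_p_surjective} and $\tilde p_-=\infty$), whereas you deduce it from case (i) by the duality of Section \ref{s:duality}; that is perfectly legitimate and consistent with how the paper itself uses duality (e.g.\ in the proofs of Lemma \ref{lemmaPartialMonotonicity} and Proposition \ref{p:f_p_not_monotone}). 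Second, your assertion that the left limit for $p>1$ ``is not covered by Proposition \ref{e:f_p_surjective}'' is a misreading: the second bullet of that proposition yields $\lim_{k\to-\infty} f(p,k)=-\infty$ for every $p>-\tilde p_-$, and since Assumption \ref{standingAssumptionsV}(ii) forces $\beta_-\le 2$, one has $\tilde p_-\ge 0$, so every $p>1$ qualifies --- this is exactly how the paper concludes surjectivity in one stroke. Your substitute argument $f(p,k)=f_2(k)-p\,v(k)\le f_2(k)\to-\infty$ (using $p\ge 0$, $v>0$ and Assumption \ref{standingAssumptionsV}(ii)) is nevertheless correct, and arguably more self-contained, so this misreading creates no gap.
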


Proposition \ref{p:monotoneSmile_f_invertible} allows us to extend the definition of the normalized implied volatility $v^p(\cdot)$ in Eq \eqref{e:pNormalizedImplVolDef} to every $p \ge 0$ if we start from a decreasing smile $v(\cdot)$ (resp.\til to every $p \le 1$ if we start from an increasing smile).
As an application:

\begin{corollary}
The extended Bergomi's formula \eqref{t:BergomiExtension} holds for all
$\{p: \mathrm{Re}(p) \ge 0\}$ in case i) of Proposition \ref{p:monotoneSmile_f_invertible}, resp.\til$\{p: \mathrm{Re}(p) \le 1\}$ in case ii).
\end{corollary}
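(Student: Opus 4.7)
The plan is to re-run the proof of Theorem \ref{t:BergomiExtension} on the enlarged domain, after verifying that the objects it manipulates all remain well-defined. Three ingredients enter that proof: (a) the intermediate identity $\esp[e^{p X_T}] = \int_\R e^{pk} \phi(f_2(k)) [p f_1'(k) + (1-p) f_2'(k)] dk$ inherited from Proposition \ref{AlternativeFormulaProp2}, valid whenever $\Re(p) \in (-p_-(\beta_-), p_+(\beta_+))$; (b) the algebraic decomposition $f_2(k) = f(\Re(p),k) + \Re(p) v(k)$ together with the derivative identity $\partial_k f(p,k) = \gamma_p'(f(\Re(p),k))$; and (c) the global invertibility of $f(\Re(p),\cdot) : \R \to \R$, which legitimises the change of variable $a = f(\Re(p),k)$ and the definitions of $v^p$, $g(p,\cdot)$ and $\gamma_p$.

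In case i), Proposition \ref{p:monotoneSmile_f_invertible} supplies exactly what is needed: the conclusion $p_+(\beta_+) = \infty$ extends (a) to the half-plane $\Re(p) > -p_-(\beta_-)$, which contains $\{\Re(p) \ge 0\}$ since $p_-(\beta_-) \ge 0$; and the bijectivity of $f(\Re(p),\cdot)$ for every $\Re(p) \ge 0$ secures (c), so that $v^p$, $g(p,\cdot)$ and the curve $\gamma_p$ of Eq.\ \eqref{e:gamma_p} are well-defined on that half-plane. Ingredient (b) is purely algebraic and carries over unchanged. The chain of equalities \eqref{e:extBergomi1} can then be reproduced verbatim, yielding formula \eqref{e:extensionBergomiComplex} on $\{\Re(p) \ge 0\}$; no new integrability or analyticity estimate is required beyond those already absorbed in Proposition \ref{AlternativeFormulaProp2}.

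Case ii) can be handled by mirroring the argument, using the second bullet of Proposition \ref{p:monotoneSmile_f_invertible}. A lighter alternative is to invoke duality: assumption ii) means $\hat v(k) = v(-k)$ is decreasing, so case i) applied to $\hat S = F^2/S$ yields the formula for $\hat M(q) = \esp^{\hat \Prob}[(\hat S_T/F)^q]$ on $\{\Re(q) \ge 0\}$. Setting $q = 1-p$, the identities $\hat M(1-p) = M(p)$ (Remark \ref{r:dualityMoments}) and $\hat v^q(z) = v^{1-q}(-z)$ (Proposition \ref{p:dualityNormalizedVol}), combined with the self-duality of the integrand noted in Remark \ref{r:BergomiDuality}, translate this into the validity of Equation \eqref{e:extensionBergomiComplex} for $S$ on $\{\Re(p) \le 1\}$.

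The only conceptual point to watch is that the definitions of $v^p$, $g(p,\cdot)$ and $\gamma_p$ introduced before Theorem \ref{t:BergomiExtension} rely \emph{only} on the invertibility of $f(\Re(p),\cdot)$, and not on the restriction $\Re(p) \in [0,1]$ per se. Once that is observed, no genuinely new work is involved: the corollary is really a remark that the proof of Theorem \ref{t:BergomiExtension} used the assumption $\Re(p) \in [0,1]$ \emph{solely} through the invertibility of $f(\Re(p),\cdot)$ and through Proposition \ref{AlternativeFormulaProp2}, both of which are granted on the larger strip under monotonicity of the smile.
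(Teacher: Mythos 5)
Your proposal is correct and follows essentially the route the paper intends: the corollary is immediate once one observes that the restriction $\mathrm{Re}(p)\in[0,1]$ in Theorem \ref{t:BergomiExtension} enters only through the invertibility of $f(\mathrm{Re}(p),\cdot)$ and through the validity strip of Proposition \ref{AlternativeFormulaProp2}, both of which Proposition \ref{p:monotoneSmile_f_invertible} extends ($\tilde p_+ = p_+(\beta_+)=\infty$ and invertibility for all $p\ge 0$ in case i), symmetrically in case ii)). Your duality shortcut for case ii) is a pleasant but inessential variant of the same argument.
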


\begin{proof}[Proof of Proposition \ref{p:monotoneSmile_f_invertible}]
Let us consider the first case ($v(\cdot)$ is increasing).
We have to prove the statement for $p\maj 1$, for we already know that $f(p,\cdot)$ is invertible for $p \in [0,1]$.
It follows from Lemma \ref{lemmaPartialMonotonicity} that $f(p,\cdot)$ is strictly monotone for every $p \maj 1$. 
Since, by assumption, $v(k)$ has a finite limit for $k \to \infty$, we have $\lim_{k \to \infty}\frac{v(k)}{\sqrt k} = \beta_+ = 0$, therefore $\tilde p_+ = p_+({\beta_+}) = \infty$.
It then follows from Proposition \ref{e:f_p_surjective} that $f(p,\cdot)$ is surjective on $\R$ for every $p \maj 1$, and the claim is proved.
The second case ($v(\cdot)$ is decreasing) is proven in the same way, using Lemma \ref{lemmaPartialMonotonicity} and Proposition \ref{e:f_p_surjective} with $\tilde p_- = \infty$.
\end{proof}

\hspace{4mm} \textbf{A class of smiles for which $f(p,\cdot)$ is not invertible when $p \notin [-\tilde p_-, \tilde p_+]$}.
We now exhibit a large class of implied volatilities (with finite critical moments, hence finite coefficients $\tilde p_\pm$) for which the function $f(p,\cdot)$ fails to be both monotone and surjective when $p$ is outside the interval $(-\tilde p_-, \tilde p_+)$.
Consider the following assumption
\[
\mathrm{(H)} \qquad v'(k) \sim \pm \frac{\sqrt{\beta_\pm}}{2 \sqrt{|k|}},  \qquad k \to \pm \infty,
\qquad \mbox{where } \beta_\pm \neq 0.
\]
By de L'Hopital's rule, $\mathrm{(H)}$ implies
$v(k) \sim \sqrt{\beta_\pm |k|}$ as $k \to \pm \infty$, therefore we have \eqref{e:wingSlopes}.
Note that the SSVI parameterisation \eqref{SSVI} with $\varphi \maj 0$ satisfies assumption $\mathrm{(H)}$ (recall Eq \eqref{e:SSVI_slopes}). 

Using the definition of $f(p,k)$, it is straightforward to check that assumption (H) implies
\[
f(p,k) \sim \sqrt{\beta_+} \left(\tilde p_+ - p \right) \sqrt{k}
\quad
\mbox{ as } k \to \infty,
\qquad
f(p,k) \sim -\sqrt{\beta_-} \left(\tilde p_- + p \right) \sqrt{|k|}
\quad
\mbox{ as } k \to -\infty.
\] 
The limits of $f(p,k)$ for $k \to \pm \infty$ are then easily assessed: the case $p \in (-\tilde p_-, \tilde p_+)$ is already considered in Proposition \ref{e:f_p_surjective}; if $p$ lies outside the closure of this interval, then both limits have the same sign:
\be \label{e:limits_f_p_non_surjective}
\begin{array}{lll}
p \maj \tilde p_+ & \Rightarrow & \lim_{k \to \pm \infty} f(p,k) = -\infty,
\\
p \mino -\tilde p_- & \Rightarrow & \lim_{k \to \pm \infty} f(p,k) = \infty.
\end{array}
\ee
In particular we see that, being continuous, the map $f(p,\cdot)$ cannot be surjective on $\R$ when $p \notin [-\tilde p_-, \tilde p_+]$.

\begin{proposition} \label{p:f_p_not_monotone}
Assume $\mathrm{(H)}$. Then, if $p > \tilde p_+$ or $p \mino -\tilde p_-$, the map $f(p,\cdot): \R \to \R$ is neither monotone, nor surjective.

More precisely: If $p > \tilde p_+$ (resp.\til$p \mino -\tilde p_-$), there exist $\underline k$ and $\overline k$ such that $f(p,\cdot)$ is strictly increasing (resp.\til decreasing) on $(-\infty,\underline k)$ and strictly decreasing (resp.\til increasing) on $(\overline k, \infty)$.
\end{proposition}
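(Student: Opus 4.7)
The plan is to pin down the sign of $\partial_k f(p,k)$ as $k \to \pm \infty$ under assumption $(\mathrm{H})$, and then read off both the non-monotonicity and the precise behavior on each wing. Writing $f(p,k) = f_2(k) - p\, v(k)$ (which follows from $f_2 - f_1 = v$) gives $\partial_k f(p,k) = f_2'(k) - p\, v'(k)$. Using the identity $f_2'(k) = \frac{1}{v(k)}\bigl(1 - v'(k) f_2(k)\bigr) + v'(k)$ from \eqref{f1f2derivatives}, and plugging in the first-order asymptotics $v(k) \sim \sqrt{\beta_+ k}$, $v'(k) \sim \frac{\sqrt{\beta_+}}{2\sqrt{k}}$, $f_2(k) \sim \bigl(\tfrac{1}{\sqrt{\beta_+}} + \tfrac{\sqrt{\beta_+}}{2}\bigr)\sqrt{k}$ as $k \to +\infty$ (all implied by $(\mathrm{H})$ together with de L'Hopital), a direct computation yields
\[
\partial_k f(p,k) \;\sim\; \frac{\sqrt{\beta_+}}{2\sqrt{k}} \bigl(\tilde p_+ - p\bigr) \qquad \text{as } k \to +\infty.
\]
The symmetric computation on the left wing, paying attention to the sign $v'(k) \sim -\frac{\sqrt{\beta_-}}{2\sqrt{|k|}}$ as $k \to -\infty$, gives
\[
\partial_k f(p,k) \;\sim\; \frac{\sqrt{\beta_-}}{2\sqrt{|k|}} \bigl(\tilde p_- + p\bigr) \qquad \text{as } k \to -\infty.
\]

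From these asymptotics, the conclusions are immediate. If $p > \tilde p_+$, the first asymptotic is strictly negative for $k$ large, so there exists $\overline k$ with $\partial_k f(p,k) < 0$ on $(\overline k, \infty)$; since $\tilde p_+ \ge \tfrac12 \ge -\tilde p_-$, we also have $\tilde p_- + p > 0$, so the second asymptotic is strictly positive for $k$ very negative, furnishing some $\underline k$ with $\partial_k f(p,k) > 0$ on $(-\infty, \underline k)$. The case $p < -\tilde p_-$ is handled symmetrically, with the roles of the two ends exchanged. In particular $f(p,\cdot)$ is not monotone on $\R$; non-surjectivity is then immediate from \eqref{e:limits_f_p_non_surjective}, already established in the preceding paragraph of the paper.

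The computation is essentially routine; the only point that requires a bit of care is the sign bookkeeping at $-\infty$, where $v'$ is negative and the term $k/v(k)$ contributes with a negative sign to $f_2(k)$, so that several cancellations have to be tracked in the right order. Once that is handled, the arithmetic collapses cleanly onto the constants $\tilde p_\pm$ of Proposition \ref{e:f_p_surjective}, and there is no deeper analytical obstacle — no higher-order correction from $(\mathrm{H})$ is needed because the leading coefficient $\tilde p_\pm \mp p$ is assumed strictly nonzero.
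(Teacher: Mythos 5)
Your proof is correct, and it reaches the conclusion by a more bare-hands route than the paper. You expand $\partial_k f(p,k)=\frac{v-kv'}{v^2}+\bigl(\tfrac12-p\bigr)v'$ term by term on each wing using $(\mathrm{H})$ and de L'Hopital, and the two points that make this legitimate are indeed the ones you flag: the difference $v-kv'$ has leading coefficients $1$ and $\tfrac12$, so no cancellation occurs there, and the total coefficient $\frac{\sqrt{\beta_\pm}}{2}(\tilde p_\pm \mp p)$ is nonzero precisely because $p$ lies strictly outside $[-\tilde p_-,\tilde p_+]$, so the $O(1/\sqrt{|k|})$ terms may be summed and the sign read off; your inequality $\tilde p_+\ge\tfrac12\ge-\tilde p_-$ correctly settles the sign on the opposite wing. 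The paper organizes the same sign analysis differently: for the increasing wing it invokes Lemma \ref{lemmaPartialMonotonicity} (resting on Fukasawa's inequality $1-v'(k)f_2(k)>0$), which needs only the eventual sign of $v'$ and not its rate; for the decreasing wing it uses the factorization $\partial_k f(p,k)=\frac1{v(k)}\bigl[1-v'(k)f(1-p,k)\bigr]$ from \eqref{e:f_p_derivative} and computes the limit $A(p,\beta_+)$; and it disposes of the case $p<-\tilde p_-$ by the duality of Lemma \ref{l:ualityNormalizingTransforms} rather than redoing the computation. What your approach buys is a self-contained, symmetric argument with an explicit decay rate $\partial_k f(p,k)\sim\frac{\sqrt{\beta_\pm}}{2\sqrt{|k|}}(\tilde p_\pm\mp p)$ on each wing, independent of Fukasawa's lemma and of Section \ref{s:duality}; what the paper's buys is reuse of structural facts (the increasing-wing statement holds under the weaker hypothesis $v'\le 0$, without $(\mathrm{H})$) and less bookkeeping. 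Non-surjectivity is handled identically in both, by citing \eqref{e:limits_f_p_non_surjective}.
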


A numerical example of the situation described in Proposition \ref{p:f_p_not_monotone} will be given in the next section -- see Figure \ref{f:f_p_SSVI}.

\begin{proof}[Proof of Proposition \ref{p:f_p_not_monotone}]
The statement about the surjectivity of $f(p,\cdot)$ has already been proven above (recall \eqref{e:limits_f_p_non_surjective}); we prove that $f(p,\cdot)$ is not monotone.

Let us first consider the case $p \maj \tilde p_+$.
The condition $v'(k) \sim -\frac{\sqrt {\beta_{-}}}{2 \sqrt{|k|}}$ as $k \to -\infty$ implies that $v'$ is negative on the half-line $(-\infty,\underline k)$, for some $\underline k$.
It follows from Lemma \ref{lemmaPartialMonotonicity} that $f(p,\cdot)$ is strictly increasing on $(-\infty,\underline k)$.

Using the first equation in \eqref{f1f2derivatives}, together with the identity $f_2'(k) = \frac 1{v(k)} \left(1 - v'(k) f_1(k) \right) + v'(k)$, we have
\be \label{e:f_p_derivative}
\begin{aligned}
\partial_k f(p,k) = \frac 1{v(k)} \left[ p(1 - v'(k) f_2(k)) + (1-p)(1 - v'(k) f_1(k)) \right]
&= \frac 1{v(k)} \left[ 1 - v'(k) (p f_2(k)) + (1-p) f_1(k)) \right]
\\
&= \frac 1{v(k)} \left[ 1 - v'(k) f(1-p,k) \right].
\end{aligned}
\ee
It follows from assumption $\mathrm{(H)}$ that
\be \label{e:f_oneMinusP_asymptotics}
f(1-p,k) = \frac k{v(k)} + \biggl(p-\frac12\biggr) v(k) \sim \biggl(\frac 1{\sqrt{\beta_+}} + \biggl(p-\frac12 \biggr) \sqrt{\beta_+} \biggr) \sqrt k
\qquad
\mbox{ as }k \to \infty,
\ee
therefore
\[
\lim_{k \to \infty} v'(k) f(1-p,k) =  \frac{\sqrt{\beta_+}}2 \left(\frac 1{\sqrt{\beta_+}} + \biggl(p-\frac12\biggr) \sqrt{\beta_+} \right)
= \frac 12 + \biggl(p-\frac12\biggr) \frac{\beta_+}2 =: A(p,\beta_+).
\]
It is immediate to see that $\mathrm{sign}(A(p,\beta_+) - 1) = \mathrm{sign}(p - \tilde p_+)$.
Consequently, if $p > \tilde p_+$, it follows from \eqref{e:f_p_derivative} that $\partial_k f(p,k)$ is negative for $k$ large enough, and the claim on the intervals of monotonicity of $f(p,\cdot) $is proven.

We can now deduce the claim in the case $p \mino -\tilde p_-$ from duality: consider the dual implied volatility $\hat v(\cdot)$ defined in Section \ref{s:duality}.
It follows from \eqref{e:dualityImpliedVol} and assumption (H) that $\hat v'(k) \sim \frac{\sqrt{\beta_-}}{2 \sqrt{k}}$ as $k \to \infty$ and $\hat v'(k) \sim -\frac{\sqrt{\beta_+}}{2 \sqrt{|k|}}$ as $k \to -\infty$, so that $\hat \beta_\pm = \beta_\mp$ (the duality transformation exchanges the right and left slopes of the smile).
Denote $\hat p_+$ the coefficient $\tilde p_+$ associated to $\hat v$, that is: $\hat p_+ = \frac 1{\hat \beta_+} + \frac 12 = \frac1{\beta_-} + \frac 12 = \tilde p_- + 1$. 
We know from Lemma \ref{l:ualityNormalizingTransforms} that $f(p,k) = -\hat f(1-p,-k)$. Since $p \mino -\tilde p_-$, then $1-p \maj 1 + \tilde p_- = \hat p_+$.
In the first part of the proof, we have proven that $\hat f(1-p,k)$ is strictly increasing on the half-line $(-\infty, \underline k)$ for some $\underline k$, and strictly decreasing on $(\overline k, \infty)$ for some $\overline k$, therefore the claim on the monotonicity of $f(p,\cdot)$ follows.
\end{proof}

\hspace{4mm} In view of Propositions \ref{e:f_p_surjective}, \ref{p:monotoneSmile_f_invertible} and \ref{p:f_p_not_monotone}, it seems reasonable to conjecture that the map $f(p,\cdot): \R \to \R$ is invertible if $p \in (-\tilde p_-,\tilde p_+)$, and only if $p \in [-\tilde p_-,\tilde p_+]$.
Leaving the proof of this statement for future work, we numerically check this fact on an arbitrage-free SSVI parameterisation in the next section.

\section{The case of the SSVI parameterisation}\label{the-ssvi-parameterisation}

Recall the SSVI parameterisation \eqref{SSVI}, where, for fixed $T>0$,
\[
\theta_T = \theta >0,
\qquad \varphi(\theta_T) = \varphi \ge 0,
\qquad \rho \in (-1,1).
\]

\subsection{No-arbitrage conditions}\label{no-arbitrage-conditions}

Theorem 4.2 in \cite{SSVI} proves that the implied variance $v_{\mathrm{SSVI}}^2(k)$ is free of arbitrage (for the given maturity $T$) if the following conditions are satisfied:

\begin{condition} \label{noArbSSVIslice}
\[
\begin{array}{l l}
(1) & \theta \varphi (1+|\rho|) < 4
\\
(2) & \theta \varphi^2 (1+|\rho|) \le 4.
\end{array}
\]
\end{condition}

Moreover, \cite[Lemma 4.2]{SSVI} shows that the condition
$\theta \varphi (1+|\rho|) \le 4$ is necessary.
The inequality 2) in Condition \ref{noArbSSVIslice} is sufficient, but not necessary.
\\
We cross-check below that, under Condition \ref{noArbSSVIslice}, our standing Assumptions \ref{standingAssumptionsV} (i) and (ii) on $v$ are (as one could expect) satisfied.
We also further discuss the limiting case $\theta \varphi (1+|\rho|)=4$.

\subsection{The limiting case $\theta \varphi (1+|\rho|)=4$ and
checking Assumption \ref{standingAssumptionsV} (ii) on
$v$}\label{the-limiting-case-theta-varphi-1rho4-and-checking-condition-ii-on-v}

We show that the case $\rho \ge 0$ and $\theta \varphi (1+|\rho|)=4$ is not arbitrage-free.
Therefore, if $\rho \ge 0$, Condition \ref{noArbSSVIslice}(1) (with strict inequality) is also necessary.
When $\rho < 0$, we show that the case $\theta \varphi (1+|\rho|)=4$ is ruled out by our Assumption \ref{standingAssumptionsV} (ii) of zero mass at $K=0$.

	Assume $\theta \varphi (1+|\rho|)=4$. We separate the two cases
$\rho \ge 0$ and $\rho < 0$: assume first that $\rho \ge 0$. It is easy
to see that
$\lim_{k \to \infty} \frac{v_{\mathrm{SSVI}}^2(k)}k = \theta \varphi \frac{1+\rho}2 = 2$.
Then we can compute, for $k > 0$:
\[
\begin{aligned}
v_{\mathrm{SSVI}}^2(k) - 2k = k \left[\frac{v^2(k)}k - 2 \right]
&= k \left[ \frac{\theta}2 \left(\frac 1k + \rho \varphi + \sqrt{\varphi^2 + 2 \frac{\varphi \rho}k + \frac 1{k^2} } \right) - 2 \right]
\\
&= k \left[ \frac{\theta}2 \left(\frac 1k + \rho \varphi + \varphi + \frac \rho k + O\Bigl(\frac 1{k^2}\Bigr) \right) - 2 \right]
\\
&= k \left[ \frac{\theta}{2k}  + O\Bigl(\frac 1{k^2}\Bigr) \right]
=
\frac{\theta}2 (1 + \rho) + O\Bigl(\frac 1{k}\Bigr)
\to \frac{\theta}2 (1 + \rho) > 0,
\qquad \mbox{as } k \to \infty.
\end{aligned}
\]
The limit above contradicts the following property from \cite[Lemma
3.1]{Lee04}: if $v$ is an arbitrage-free smile, there exists $\overline k$
such that $v^2(k) - 2k < 0$ for every $k > \overline k$.\footnote{This property was subsequently improved to
$\lim_{k \to \infty}(v^2(k) - 2k) = -\infty$ by Rogers and Tehranchi \cite{RogTeh}.}

Now assume $\rho < 0$.
The analogous computation for negative $k$ gives
\[
v_{\mathrm{SSVI}}^2(k) - 2|k| = \frac{\theta}2 (1 - \rho) + O\Bigl(\frac 1{|k|}\Bigr) \to \frac{\theta}2 - \rho,
\qquad \mbox{as } k \to -\infty.
\]
In general, a positive value of $\lim_{k \to -\infty} (v^2(k) - 2|k|)$ is
not in contradiction with no-arbitrage: indeed, we know from \cite[Propositions 2.4 and 2.5]{Fukasawa2010} that an arbitrage-free implied volatility satisfies
\[
\lim_{k \to -\infty} \bigl( v(k) - \sqrt{2|k|} \bigr) = N^{-1}(\mathbb P(S_T = 0)),
\]
where the right hand side is worth $-\infty$ if $\mathbb P(S_T = 0)$ (see also \cite[Thm 3.6]{ImpliedVolAtom}).
As discussed in the Introduction, our Assumption \ref{standingAssumptionsV} (ii) on the coefficient $d_2(k,v(k))$ is equivalent to $\mathbb P(S_T = 0) = 0$, therefore $v(k) - \sqrt{2|k|} \to -\infty$ as $k \to -\infty$.
Consequently, $v^2(k) - 2|k| = \bigl( v(k) - \sqrt{2|k|} \bigr) \bigl( v(k) + \sqrt{2|k|} \bigr)$ tends to $-\infty$ as well, and the case $\theta \varphi (1+|\rho|)=4$ and $\rho \le 0$ is ruled out by Assumption \ref{standingAssumptionsV} (ii).

Finally, a slight modification of the computation
above allows to see that, if $\theta \varphi (1+|\rho|)< 4$, we have $v_{\mathrm{SSVI}}^2(k) - 2|k| \sim -a|k|$ as $k \to -\infty$ with $a = \frac12 \bigl(4 - \theta \varphi (1+|\rho|) \bigr) \maj 0$,
therefore Assumption \ref{standingAssumptionsV} (ii) is satisfied when Condition \ref{noArbSSVIslice}(1) is in force.

\subsection{Checking Assumption \ref{standingAssumptionsV} (i) on $v$}\label{checking-condition-i}

Denote $w(k) = v_{\mathrm{SSVI}}(k)^2$. We can compute
\be \label{e:derivativesSSVI}
w'(k) = \frac{\theta\varphi}{2} \left( \rho + \frac{\varphi k + \rho}{\sqrt{(\varphi k + \rho)^2 + 1 -\rho^2}} \right), 
\qquad
w''(k) = \frac{\theta\varphi^2}{2}\frac{1-\rho^2}{((\varphi k + \rho)^2 + 1 -\rho^2)^{3/2}}.
\ee
Note that the second equation shows that $w(\cdot)$ is a convex function.
If $\varphi = 0$, $w(\cdot)$ is identically equal to $\theta \maj 0$ (so that this case corresponds to Black-Scholes implied volatility). Let us assume, then, $\varphi > 0$.
From the first equation in \eqref{e:derivativesSSVI}, $w'(k) = 0$ if and only if
$k = -\frac{2 \rho} \varphi =: k_{\min}$. At this point, we have
\[
w(k) \ge w(k_{\min}) = \frac \theta 2 \left( 1 - 2 \rho^2 + \sqrt{3 \rho^2 +1} \right),
\qquad \forall k \in \mathbb R.
\]
A straightforward computation shows that the function of $\rho$ in the RHS above is strictly positive, for any $\rho \in (-1,1)$. This guarantees that
$v(k) = \sqrt{w(k)} > 0, \ \forall k \in \mathbb R$.

Moreover, the argument of the square-root function in \eqref{SSVI} being lower bounded by $1 -\rho^2 > 0$, we have $w \in C^2(\mathbb R)$
(actually, $w \in C^{\infty}(\mathbb R)$).
Since $w(k) \ge w(k_{\min}) > 0$, we also have that
$k \mapsto v(k) = \sqrt{w(k)}$ is $C^2$.
Overall, Assumption \ref{standingAssumptionsV} (i) is also satisfied.

%
%

\subsection{Numerical tests}

It is immediate to check that the set of parameters
\be \label{SSVIparams}
\theta = (0.25)^2 = 0.0625, \quad \rho = -0.8, 
\quad \varphi = 1.40
\ee
satisfies Condition \eqref{noArbSSVIslice}.
Figure \ref{f:SSVI_f1_f2} shows the resulting SSVI implied volatility smile and the two corresponding transformations $f_1$ and $f_2$, on a large interval $(k_{min}, k_{max})$.

\begin{figure}[t]
\includegraphics[height=0.278\textheight, width=0.5\textwidth]{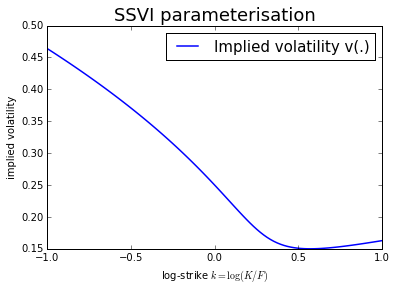}
\includegraphics[width=0.5\textwidth]{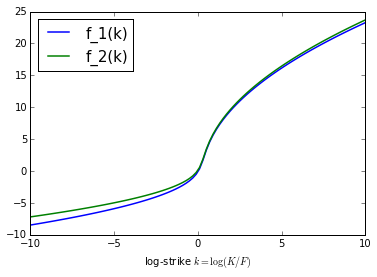}
\caption{Left: SSVI parameterisation \eqref{SSVI} of the implied volatility $v(\cdot)$ with the arbitrage-free parameters in \eqref{SSVIparams}. Right: the induced transformations $f_1, f_2$.}
\label{f:SSVI_f1_f2}
\end{figure}

Recall that $\beta_\pm(\mathrm{SSVI}) = \theta \varphi \frac{1\pm \rho}2$.
In Figure \ref{f:f_p_SSVI}, we compute the values of the coefficients $\tilde p_{\pm}$ in Proposition \ref{e:f_p_surjective}, and plot the function $k \mapsto f(p,k)$ for different values
of $p$.
As predicted by Proposition \ref{p:f_p_not_monotone}, one can see (and we did check on the numerical values) that $f(p,\cdot)$ is not increasing anymore for large (resp.\til small) $k$ when $p$ is larger than $\tilde p_+$ (resp.\til $p$ is smaller than $-\tilde p_-$).
On the contrary, $f(p,\cdot)$ does appear to be strictly increasing (at least on the considered interval of log-strikes) for $p$ within the interval $(-\tilde p_-, \tilde p_+)$.

\begin{figure}[t]
\includegraphics[height=0.3\textheight, width=0.5\textwidth]{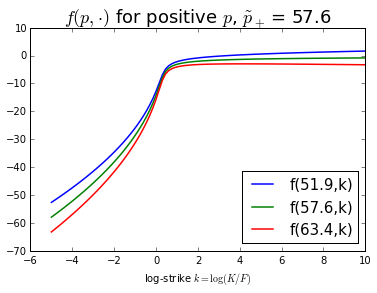}
\includegraphics[height=0.3\textheight,width=0.5\textwidth]{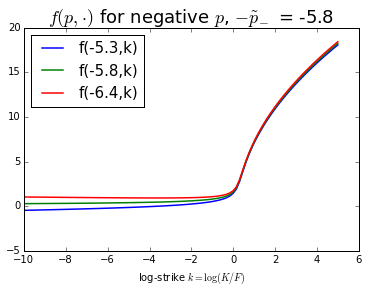}
\caption{Plot of the function $k \mapsto f(p,k)$ induced by the SSVI parameterisation \eqref{SSVI} with the parameters in \eqref{SSVIparams}, for different values of $p$ around the thresholds $\tilde p_+$ (left pane) and $-\tilde p_-$ (right pane) from Proposition \ref{e:f_p_surjective}.}
\label{f:f_p_SSVI}
\end{figure}

\section{Recovering the Black-Scholes formula}\label{recovering-black-scholes}

Having derived formulas for the extended characteristic function of the log-price, prices of European options on $S_T$ can (also) be recovered with standard transform-based methods.
As a consistency check, we show that the formula $P(K) = K N(f_2(k)) - F N(f_1(k))$ for a put option (the Black-Scholes formula) can be restored from Theorem \ref{extensionFukThm}.
We assume $B(0,T)=1$ and $F=1$ for simplicity.
We apply the following inversion theorem (see e.g.~\cite{LeeTransforms04}):


\begin{theorem} \label{t:FourierInversion}
Denote $\varphi_T(u) = \esp[e^{iuX_T}]$ the characteristic function of the log-price.
Then, for every $K>0$,
the price of a put option with strike $K$ and maturity $T$ is given by:
\[
P(K) =
R_{\alpha} + 
\frac{1}{2 \pi} \int_{-\infty}^{+\infty}
\frac{K^{-\alpha-iu+1}}{(\alpha+iu) (\alpha-1+iu)}
\varphi_T(u-i\alpha) du,
\]
where
\[
R_{\alpha} = \left\{
\begin{array}{ll}
K - 1 & \mbox{if } 1 \mino \alpha \mino p^*
\\
K & \mbox{if } 0 \mino \alpha \mino 1
\\
0 & \mbox{if } -q^* \mino \alpha \mino 0.
\end{array}
\right.
\]
\end{theorem}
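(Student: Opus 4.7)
The approach is to evaluate the integral on the right-hand side directly, by swapping the order of integration and performing a contour-integral residue calculation. Denote by $J(\alpha)$ the integral appearing in the statement. Since $\alpha \in (-q^*, p^*)$, $\varphi_T(u - i\alpha) = \esp[e^{(\alpha + iu) X_T}]$ is well-defined and bounded uniformly in $u \in \R$ by $\esp[e^{\alpha X_T}] \mino \infty$, while the rational factor $1/[(\alpha + iu)(\alpha - 1 + iu)]$ decays like $u^{-2}$ at infinity. Absolute integrability follows, so Fubini's theorem yields
\[
J(\alpha) = \esp\!\left[ K^{1 - \alpha} e^{\alpha X_T} \, I(X_T - \log K) \right], \qquad I(y) := \frac{1}{2\pi} \int_\R \frac{e^{iuy}}{(\alpha + iu)(\alpha - 1 + iu)}\, du.
\]

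The next step is to compute $I(y)$ by contour integration. The integrand has simple poles at $u = i\alpha$ and $u = i(\alpha - 1)$, lying in the upper or lower half-plane according to the signs of $\alpha$ and $\alpha - 1$ respectively. For $y \maj 0$ one closes the contour in the upper half-plane and for $y \mino 0$ in the lower; Jordan's lemma together with the $O(u^{-2})$ decay makes the semicircular contributions vanish. A direct residue computation yields, in each of the three regimes $\alpha \in (-q^*, 0)$, $\alpha \in (0, 1)$, and $\alpha \in (1, p^*)$, an explicit expression for $I(y)$ as a linear combination of $e^{-\alpha y}$ and $e^{(1 - \alpha) y}$ multiplied by the indicator of either $\{y \maj 0\}$ or $\{y \mino 0\}$, with the three cases corresponding to the three possible configurations of the poles relative to the real axis.

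Finally, one substitutes these expressions back into $J(\alpha)$ and simplifies, using the algebraic identities $K^{1-\alpha} e^{\alpha X_T} e^{-\alpha(X_T - \log K)} = K$ and $K^{1-\alpha} e^{\alpha X_T} e^{(1-\alpha)(X_T - \log K)} = S_T$, together with $\{X_T - \log K \maj 0\} = \{S_T \maj K\}$. A case-by-case check gives $J(\alpha) = \esp[(K - S_T)^+] = P(K)$ when $\alpha \in (-q^*, 0)$, $J(\alpha) = P(K) - K$ when $\alpha \in (0, 1)$, and $J(\alpha) = \esp[(S_T - K)^+] = C(K)$ when $\alpha \in (1, p^*)$; in the last case, put-call parity with $F = 1$ gives $C(K) = P(K) - (K - 1)$. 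These three outcomes are precisely $P(K) = R_\alpha + J(\alpha)$ with the prescribed $R_\alpha$. The main delicate point is the careful bookkeeping of residues and signs across the three regimes; the endpoints $\alpha \in \{0, 1\}$ must be excluded because at those values a pole of the rational kernel lies on the contour $\R$ and the integral diverges, consistently with the jumps of $R_\alpha$ at these thresholds.
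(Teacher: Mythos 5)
Your proof is correct, but note that the paper does not actually prove Theorem \ref{t:FourierInversion} at all: it imports it as a known inversion result, citing \cite{LeeTransforms04}, and only \emph{applies} it in the Black--Scholes consistency check. What you have written is essentially the standard argument behind that cited result (the damped-payoff / Mellin--Fourier inversion of Carr--Madan--Lee type): Fubini, justified by $|\varphi_T(u-i\alpha)| \le \esp[e^{\alpha X_T}] < \infty$ for $-q^* < \alpha < p^*$ together with the $O(u^{-2})$ decay of the kernel (which is integrable precisely because $\alpha \notin \{0,1\}$, consistent with the strict inequalities in the statement), then a residue evaluation of $I(y) = \frac{1}{2\pi}\int_\R \frac{e^{iuy}}{(\alpha+iu)(\alpha-1+iu)}\,du$ whose three pole configurations ($\alpha<0$, $0<\alpha<1$, $\alpha>1$) produce exactly the three values of $R_\alpha$; the identities $K^{1-\alpha}e^{\alpha X_T}e^{-\alpha(X_T-\log K)} = K$ and $K^{1-\alpha}e^{\alpha X_T}e^{(1-\alpha)(X_T-\log K)} = S_T$ (with $F=1$) then give $J(\alpha) = P(K)$, $P(K)-K$, or $C(K) = P(K)-(K-1)$ by put--call parity, as you state. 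Two minor points worth recording if this were to be included: the case $\alpha \in (1,p^*)$ uses $\esp[S_T] = F = 1$, which is exactly the normalization the paper adopts in that section; and the boundary event $\{X_T = \log K\}$, where the half-plane closure is ambiguous, is $\Prob$-null under Assumption \ref{standingAssumptionsV}, so it does not affect the expectations. The gain of your argument over the paper's treatment is simply self-containedness: it replaces an external citation by a complete proof, and it is interesting to note the structural similarity with the paper's own residue computation in Lemma \ref{l:residue}, which performs the analogous contour argument after (rather than before) substituting the representation \eqref{extensionComplexP} of $\varphi_T$.
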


Applying Theorem \ref{t:FourierInversion} and Proposition \ref{p:complexExtension} we get, choosing $\alpha \in (0,1)$ and using $i(u-i\alpha)=iu+\alpha$:
\begin{multline*}
P(K) = K 
\\
+ \frac{1}{2 \pi} \int_{-\infty}^{+\infty} \frac{K^{-\alpha-iu+1}}{(\alpha+iu) (\alpha-1+iu)}
\int_{-\infty}^{+\infty}
\left [(\alpha+iu) e^{g_1(z)(\alpha+iu-1)} + (1-\alpha-iu) e^{(\alpha+iu) g_2(z)} \right] \phi(z) dz \, du.
\end{multline*}
Using Fubini's Theorem,
\begin{multline*}
P(K) = K
\\ + \frac{1}{2 \pi} \int_{-\infty}^{+\infty} \phi(z) dz
\int_{-\infty}^{+\infty} \frac{K^{-\alpha-iu+1}}{(\alpha+iu) (\alpha-1+iu)}
\left[ (\alpha+iu) e^{g_1(z)(\alpha+iu-1)} + (1-\alpha-iu) e^{(\alpha+iu) g_2(z)} \right] du
\end{multline*}
or yet, after simplification:
\be \label{e:putMellin} 
P(K) = K +\frac{1}{2 \pi} \int_{-\infty}^{+\infty} \phi(z) dz
\int_{-\infty}^{+\infty}
\left [\frac{K^{-\alpha-iu+1} e^{g_1(z)(\alpha+iu-1)} }{\alpha-1+iu}
- \frac{K^{-\alpha-iu+1} e^{(\alpha+iu) g_2(z)} } {\alpha+iu} \right] du.
\ee
Set $k = \log{K},\; a(z) = k-g_1(z),\; b = k-g_2(z)$. Then
\be \label{e:putRepr}
P(K) = K + \frac{1}{2 i \pi} \int_{-\infty}^{+\infty} \phi(z) \bigl[ I(a(z), \alpha-1) - K I(b(z), \alpha) \bigr] dz
\ee
where
\[
I(c,d ) = \int_{d-i\infty}^{d+i\infty} \frac{e^{-c \omega}}{\omega} d\omega.
\]
We show now: 
\begin{lemma} \label{l:residue}
%
%
%

If $c<0$ and $d <0$, then $ I(c, d) = 0$.

If $c<0$ and $d >0$, then $ I(c, d) = 2 i\pi$.

If $c>0$ and $d <0$, then $ I(c, d) = -2 i\pi$.

If $c>0$ and $d >0$, then $ I(c, d) = 0$.
\end{lemma}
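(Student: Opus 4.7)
The integrand $\omega \mapsto e^{-c\omega}/\omega$ is meromorphic on $\mathbb{C}$ with a single simple pole at $\omega = 0$ of residue $1$. The plan is to close the vertical contour $\{\omega : \Re(\omega) = d\}$ with a large semicircular arc and apply the residue theorem, choosing the side of closure so that $|e^{-c\omega}| = e^{-c\Re(\omega)}$ decays on the arc as its radius grows: to the right if $c \maj 0$, to the left if $c \mino 0$.

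Concretely, for $c \maj 0$ I would close on the right via $\Gamma_R^+ = \{d + R e^{i\theta} : \theta \in [-\pi/2, \pi/2]\}$, traversed from $d+iR$ through $d+R$ down to $d-iR$. Together with the upward vertical segment, this bounds the right half-disk \emph{clockwise}, so the residue theorem gives
\[
\int_{d-iR}^{d+iR} \frac{e^{-c\omega}}{\omega} d\omega + \int_{\Gamma_R^+} \frac{e^{-c\omega}}{\omega} d\omega = -2i\pi \cdot \mathbf{1}_{\{d \mino 0\}},
\]
since $\omega = 0$ is enclosed if and only if $d \mino 0$. For $c \mino 0$ I would symmetrically close on the left via the arc $\Gamma_R^-$; the enclosed left half-disk is then traversed counterclockwise, and the analogous identity reads
\[
\int_{d-iR}^{d+iR} \frac{e^{-c\omega}}{\omega} d\omega + \int_{\Gamma_R^-} \frac{e^{-c\omega}}{\omega} d\omega = 2i\pi \cdot \mathbf{1}_{\{d \maj 0\}}.
\]
Passing to the limit $R \to \infty$ in each of these two displays yields the four claimed values of $I(c,d)$.

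The only routine step left is to show that the arc contribution vanishes as $R \to \infty$. For the right arc (where $c \maj 0$), parameterize $\omega = d + R e^{i\theta}$: then $|e^{-c\omega}| = e^{-c(d + R \cos\theta)}$ is bounded by $e^{-cd}$ and decays like $e^{-cR\cos\theta}$ for $\theta$ in the interior of $[-\pi/2, \pi/2]$; combined with $1/|\omega| \le 2/R$ for $R$ large, a standard Jordan-lemma estimate gives an arc integral of order $O(1/(cR))$, which tends to $0$. The symmetric bound handles the left arc when $c \mino 0$. I do not foresee any genuine obstacle: the main thing requiring care is tracking the orientation of each closed contour so as to get the sign of the residue contribution right.
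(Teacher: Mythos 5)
Your proof is correct and takes essentially the same approach as the paper: close the vertical line of integration on the side where $e^{-c\omega}$ decays, apply the residue theorem with the orientation carefully tracked to get the sign of the pole contribution, and verify that the closing piece vanishes in the limit. The only difference is technical rather than conceptual — you close with a semicircular arc and control it by a Jordan-lemma estimate of order $O(1/(|c|R))$, whereas the paper closes with a rectangle, sends the far vertical side to infinity first, and disposes of the horizontal sides by dominated convergence.
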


\begin{proof}
Consider first the case $c \maj 0$, and the rectangle defined by the real coordinates  $d$ and $d+R_1$ and the imaginary coordinates $-R_2, R_2$ wih  positive $R_1$ and $R_2$.
By Cauchy residue formula, the integral of the function $\omega \to \frac{e^{-c \omega}}{\omega}$ over this (clockwise) rectangle contour is equal to $0$ if $d>0$, or to the residue at the pole $0$, which is $1$, times $-2i\pi$ because we go clockwise, if the pole $0$ is inside the rectangle, i.e.\til if $d<0$.
Now take $R_1$ to $\infty$: since $c$ is positive, the integral on the right segment goes to zero, and the integral on the bottom and top segments are absolutely convergent. By Lebesgue dominated convergence theorem, those integrals go to zero as $R_2$ goes to infinity.
Since the remaining integral is exactly $I(c,d)$, we have proven the last two statements.	
The proof of the case $c<0$ is exactly the same, working with the real coordinates $d$ and $d-R_1$ instead, and running anticlockwise on the rectangle contour.
\end{proof}

Since $\alpha \in (0,1)$, the contribution of the first term on the RHS of \eqref{e:putRepr} is given by the case $d=\alpha-1<0$, so that
\[
\frac{1}{2 i \pi} \int_{-\infty}^{+\infty} \phi(z) I(a(z), \alpha-1)
=
-\int_{-\infty}^{+\infty} \phi(z) 1_{\{a(z)>0\} } dz
= -\int_{g_1(z)<k} \phi(z) dz = -N(f_1(k)).
\]
For the second term, we are in the case $d=\alpha>0$, so that the contribution is 
\[
-K \frac{1}{2 i \pi} \int_{-\infty}^{+\infty} \phi(z) I(b(z), \alpha) dz
=
- K \int_{-\infty}^{+\infty} \phi(z) 1_{ \{b(z) \mino 0\} } dz
=
- K \int_{g_2(z) \maj k} \phi(z) dz = - K + K N(f_2(k)).
\]
Summing up, we have recovered the Black-Scholes formula for the put option:
\[
P(K)
=
K N(f_2(k)) - N(f_1(k))
=
\mathrm{Put_{BS}}(k, v(k)).
\]

\appendix
\section{Appendix} \label{s:appendix}

\subparagraph{Proof of Lemma
\ref{auxiliaryIntegrability}}

We focus on the case $p > 0$. As explained above, the statement of the Lemma is true for every $p \in (0,1) = (-p_-(2), p_+(2))$.
Therefore, we can limit ourselves
to $\beta_+ \in [0,2)$,
and assume $p \maj 1$.

By definition of $\limsup$, we have the following: for every
$\beta > \beta_+$, there exists $\overline k_\beta > 0$ such that
$v(k) < \sqrt{\beta k}$ for all $k > \overline k_\beta$. We claim that
this implies
\begin{equation} \label{f2behavior}
\mbox{For every }  \beta_+ < \beta < 2, \ f_2(k) \ge \left(\frac 1{\sqrt{\beta}} + \frac{\sqrt{\beta}}2 \right) \sqrt{k}, \ \ \forall k > \overline k_\beta.
\end{equation}
It follows from \eqref{f2behavior} that
\[
z = f_2(g_2(z)) \ge \left(\frac 1{\sqrt{\beta}} + \frac{\sqrt{\beta}}2 \right) \sqrt{g_2(z)}, \qquad \mbox{for $z$ large enough,}
\]
which entails
$g_2(z) \le \frac{z^2}{\left(\frac 1{\sqrt{\beta}} + \frac{\sqrt{\beta}}2 \right)^2} = \frac{z^2}{2 p_+(\beta)}$.
On the other hand, using $f_1(z) = f_2(z) - v(z)$, we obtain
$f_1(k) \ge \left(\frac 1{\sqrt{\beta}} - \frac{\sqrt{\beta}}2 \right) \sqrt{k}$
for all $k > \overline k_\beta$, therefore
$g_1(z) \le \frac{z^2}{2 p_-(\beta)}$ for $z$ large enough. For such
$z$, we have \[
e^{(p-1) g_1(z)} \phi(z) \le \exp \left(z^2 \frac{p-1}{2 p_-(\beta)} - \frac {z^2}2 \right)
= \exp \left(z^2 \frac{p - p_+(\beta)}{2 p_-(\beta)} \right)
\] and \[
e^{p g_2(z)} \phi(z) \le \exp \left(z^2 \frac{p}{2 p_+(\beta)} - \frac {z^2}2 \right)
= \exp \left(z^2 \frac{p - p_+(\beta)}{2 p_+(\beta)} \right).
\]
Recall that $p_\pm(\beta) > 0$ for every $\beta \mino 2$.
Since $p - p_+(\beta_+) < 0$ by assumption, taking $\beta$ sufficiently close to
$\beta_+$, we have $p - p_+(\beta) < 0$, too.
Therefore, using the last two estimates above, we obtain that the functions in \eqref{integrability} are integrable at $+\infty$.
Using the fact that $g_1(z), g_2(z) \to -\infty$ as $z \to -\infty$ and $p \ge 1$, we
obtain that the functions in \eqref{integrability} are also integrable at $-\infty$.

The case $p < 0$ is proven analogously.

\emph{Proof of \eqref{f2behavior}} : We proceed along the lines of
\cite[Lemma 2.6]{Fukasawa2010} (note that we are referring here to an ArXiv preprint: this lemma was not reported in the published version of the article). For every $a > 1$ and $k > \overline k_\beta$, we have

\begin{equation} \label{f2firstEstimate}
f_2(k) = \frac{k}{v(k)} + \frac{v(k)}2 = \frac{k}{v(k)} + \frac{a v(k)}2 - \frac{(a-1)v(k)}2 \ge \sqrt{2ak} - \frac{(a-1)\sqrt{\beta k}}2
\end{equation}

where the last inequality holds because the arithmetic mean is larger
than the geometric mean. Estimate \eqref{f2behavior} then follows by choosing $a = 2/\beta$ (which in fact provides the optimal lower bound in \eqref{f2firstEstimate}).
\qed

\subsection{Proof of Theorem \ref{extensionFukThm}}

We first prove a weaker version of Theorem \ref{extensionFukThm}:

\begin{proposition} \label{p:extensionFukThmWeaker}
Assume 
\be \label{pBoundsStronger}
\max(-p_-(\beta_-),-q^*) < p < \min(p_+(\beta_+), p^*).
\ee
Then, Equation \eqref{extensionFukasawa} holds for every absolutely continuous function $\Psi$ such that $\Psi$ and $\Psi'$ have exponential growth of order $p$.
\end{proposition}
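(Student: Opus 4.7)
The plan is to derive Proposition \ref{p:extensionFukThmWeaker} from Fukasawa's Theorem \ref{FukasawaFormulaThm} by a truncation-plus-dominated-convergence argument. The additional restrictions $-q^* < p < p^*$ built into the hypothesis guarantee $\mathbb{E}[e^{pX_T}] < \infty$ which, combined with the growth bound $|\Psi(x)| \le Ce^{px}$, ensures $\mathbb{E}[|\Psi(X_T)|] < \infty$ so that the right-hand side of \eqref{extensionFukasawa} is well-defined. Note also that the $-q^* < p$ condition puts us in the regime where Fukasawa's theorem is directly applicable (the edge case $q^* = 0$ would force $p > 0$ and could instead be handled by the dual form of Theorem \ref{FukasawaFormulaThm}, which requires $p^* > 1$ and is always satisfied when $p^* > p > 1$; otherwise a symmetric approximation via $x \mapsto \Psi(x)e^{-x}$ against the $\hat{\Prob}$-expectation handles the remaining cases).

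Concretely, fix a smooth cutoff $\chi_n \in C^\infty(\R)$ with $\chi_n \equiv 1$ on $[-n,n]$, $\chi_n \equiv 0$ outside $[-n-1,n+1]$ and $\|\chi_n'\|_\infty \le 2$, and set $\Psi_n := \Psi \cdot \chi_n$. Then $\Psi_n$ is absolutely continuous and compactly supported, with $\Psi_n' = \Psi'\chi_n + \Psi\chi_n'$ bounded and compactly supported; in particular $\Psi_n'$ has polynomial growth, so Theorem \ref{FukasawaFormulaThm} applies and yields
\[
\int_\R \bigl[\Psi_n(g_2(z)) - \Psi_n'(g_2(z)) + \Psi_n'(g_1(z)) e^{-g_1(z)}\bigr]\phi(z)\, dz = \mathbb{E}\bigl[\Psi_n(\log(S_T/F))\bigr].
\]
For any fixed $z \in \R$, once $n$ is large enough that $|g_i(z)| \le n$, we have $\chi_n(g_i(z)) = 1$ and $\chi_n'(g_i(z)) = 0$; hence pointwise convergence of all integrands to their counterparts in \eqref{extensionFukasawa}, and of $\Psi_n(X_T)$ to $\Psi(X_T)$ almost surely.

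The main technical point is to justify the interchange of limit and integral uniformly in $n$. On the right-hand side, dominated convergence is immediate from $|\Psi_n(X_T)| \le |\Psi(X_T)| \le Ce^{pX_T}$ together with $\mathbb{E}[e^{pX_T}] < \infty$. On the left-hand side, the uniform bounds
\[
|\Psi_n(x)| \le Ce^{px}, \qquad |\Psi_n'(x)| \le |\Psi'(x)|\chi_n(x) + 2|\Psi(x)| \le 3Ce^{px},
\]
imply that the three integrands are dominated by constant multiples of $e^{pg_2(z)}\phi(z)$, $e^{pg_2(z)}\phi(z)$ and $e^{(p-1)g_1(z)}\phi(z)$, respectively. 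Each of these is integrable on $\R$ by Lemma \ref{auxiliaryIntegrability} -- this is precisely where the hypothesis $-p_-(\beta_-) < p < p_+(\beta_+)$ enters. Dominated convergence then gives convergence of the left-hand side to the expression in \eqref{extensionFukasawa}, completing the proof.
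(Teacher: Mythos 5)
Your main branch is fine, and it is essentially the route the paper itself sketches in the remark after Corollary \ref{c:LeeLowerBound} (there with $\Psi_n(x)=(1+\frac{px}{n})^n$ instead of a smooth cutoff): the dominations by $e^{pg_2(z)}\phi(z)$ and $e^{(p-1)g_1(z)}\phi(z)$ together with Lemma \ref{auxiliaryIntegrability}, and the bound $|\Psi_n(X_T)|\le Ce^{pX_T}$ with $\esp[e^{pX_T}]<\infty$, are exactly what is needed. The genuine gap is in the parenthetical where you dispose of the hypotheses of Theorem \ref{FukasawaFormulaThm}. The first form of that theorem needs $q^*>0$ and the dual form needs $p^*>1$, and neither is implied by \eqref{pBoundsStronger}: since $\esp[S_T]<\infty$ one always has $p^*\ge 1$, so the corner case $q^*=0$, $p^*=1$ is admissible (e.g.\ a smooth positive density behaving like $c/(x\log^2 x)$ near $0$ and like $c'/(x^{2}\log^2 x)$ near $\infty$ satisfies Assumption \ref{standingAssumptionsV}), and then \eqref{pBoundsStronger} still allows every $p\in(0,1)$. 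In that case your first fallback (the dual form, which you only justify when $p>1$) is unavailable, and your second fallback fails as well: writing $\esp^{\Prob}[\Psi(X_T)]=\esp^{\hat\Prob}\bigl[e^{\hat X_T}\Psi(-\hat X_T)\bigr]$ and applying Theorem \ref{FukasawaFormulaThm} under $\hat\Prob$ requires $\hat q^*>0$ or $\hat p^*>1$, but by put--call duality $\hat q^*=p^*-1=0$ and $\hat p^*=q^*+1=1$, so neither form applies under $\hat\Prob$ either. Thus ``a symmetric approximation \ldots handles the remaining cases'' is not a proof; it is precisely the case your method cannot reach, and it is not vacuous for the Proposition as stated.

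This is why the paper does not argue through Theorem \ref{FukasawaFormulaThm} at all: its proof reruns the computation directly --- the change of variables $\phi(f_1(k))=e^k\phi(f_2(k))$, one integration by parts whose boundary terms vanish by Lemma \ref{auxiliaryIntegrabilityAndLimits} (whose proof in the regime $p\in(0,1)$ uses only $\esp[e^{X_T}]=1$ and no negative moments), and identification of the Breeden--Litzenberger density \eqref{density}. That route needs only \eqref{pBoundsStronger}, which is exactly the point emphasized in the remark: the direct argument removes the limitation $q^*>0$ (or $p^*>1$) inherited from \cite[Theorem 4.6]{Fukasawa2012}. Note also that the missing corner case matters downstream, since Lemma \ref{l:criticalMomentsFirstInequality} invokes Proposition \ref{p:extensionFukThmWeaker} without knowing a priori that $q^*>0$ or $p^*>1$. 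To repair your argument you would have to supply a separate treatment of the case $q^*=0$, $p^*=1$, $p\in(0,1)$ --- at which point you are back to the paper's direct computation.
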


In order to prove Proposition \ref{p:extensionFukThmWeaker}, we need the following intermediate result.

\begin{lemma}  \label{auxiliaryIntegrabilityAndLimits}
Assume that $p$ satisfies \eqref{pBoundsStronger}, and let $\Psi$ be a function with exponential growth of order $p$. 
Then, the function $k \mapsto \Psi(k) v'(k)\phi(f_2(k))$ is integrable on $\mathbb R$, and satisfies $\lim_{k \to \pm \infty} \Psi(k) v'(k)\phi(f_2(k)) = 0$.
\end{lemma}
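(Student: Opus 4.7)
The plan is to handle the two claims separately: integrability follows from Fukasawa's monotonicity of the normalizing transformations plus a reduction to Lemma~\ref{auxiliaryIntegrability}, while the pointwise limits rest on a \emph{key identity} expressing $\phi(f_2(k)) v'(k)$ in terms of the cumulative distribution function of $X_T$. This second step is the only place where the stronger bounds $p < p^*$ and $p > -q^*$ in \eqref{pBoundsStronger} (beyond the ``wing'' bounds $-p_-(\beta_-) < p < p_+(\beta_+)$) are actually used.

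For integrability, recall that for any arbitrage-free smile both maps $f_1$ and $f_2$ are strictly increasing, so $f_1'(k), f_2'(k) \ge 0$. Since $v = f_2 - f_1$, we have the simple bound
\[
|v'(k)| \le f_1'(k) + f_2'(k),
\]
whence
\[
\int_{\R} |\Psi(k)|\,|v'(k)|\, \phi(f_2(k))\,dk \le \int |\Psi(k)| f_2'(k) \phi(f_2(k))\,dk + \int |\Psi(k)| f_1'(k) \phi(f_2(k))\,dk.
\]
The change of variable $z = f_2(k)$ turns the first integral into $\int |\Psi(g_2(z))| \phi(z)\,dz$; for the second, first rewrite $\phi(f_2(k)) = e^{-k} \phi(f_1(k))$, then change variables $z = f_1(k)$ to obtain $\int |\Psi(g_1(z))| e^{-g_1(z)} \phi(z)\, dz$. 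Both are finite by the very estimates used in the proof of Lemma~\ref{auxiliaryIntegrability}: the hypothesis there that $\Psi'$ has exponential growth of order $p$ enters only through the pointwise bound $|\Psi'| \le C e^{pk}$, which holds verbatim with $\Psi$ in place of $\Psi'$ here.

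For the limits, I will extract the key identity by differentiating the normalized Black--Scholes call price $\tilde C(k) = \esp[(e^{X_T} - e^k)^+] = N(-f_1(k)) - e^k N(-f_2(k))$. Using $N'(u) = \phi(u)$, $\phi(f_1(k)) = e^k \phi(f_2(k))$, and $f_2' - f_1' = v'$, one computes
\[
\tilde C'(k) = e^k\big[\phi(f_2(k)) v'(k) - N(-f_2(k))\big].
\]
On the other hand, $\tilde C'(k) = -e^k \Prob(X_T > k)$ by direct differentiation, yielding
\[
\phi(f_2(k)) v'(k) = N(-f_2(k)) - \Prob(X_T > k) = \Prob(X_T \le k) - N(f_2(k)).
\]
Multiplying by $e^{pk}$ and applying the triangle inequality, for $k \to +\infty$ we get $|e^{pk}\phi(f_2(k)) v'(k)| \le e^{pk} N(-f_2(k)) + e^{pk} \Prob(X_T > k)$. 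The first term tends to zero by Mills's ratio together with the estimate $f_2(k)^2/2 \ge p_+(\beta) k$ already established in the proof of Lemma~\ref{auxiliaryIntegrability}, taking $\beta > \beta_+$ close enough that $p < p_+(\beta)$. The second term tends to zero by Markov's inequality: since $p < p^*$, there exists $p' \in (p, p^*)$ with $\esp[e^{p' X_T}] < \infty$, so $e^{pk}\Prob(X_T > k) \le C e^{(p-p')k} \to 0$. The case $k \to -\infty$ is symmetric, using the other form of the identity together with $p > -p_-(\beta_-)$ and $p > -q^*$. Since $|\Psi(k)| \le C e^{pk}$, the claim $\Psi(k) v'(k) \phi(f_2(k)) \to 0$ follows.

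The main obstacle in a more naive approach is that pointwise control of $v'$ is unavailable: arbitrage-freeness gives only the one-sided bounds $v'(k) \le 1/f_i(k)$ in the right wing, and $v'$ can in principle be very negative there. The identity above circumvents this entirely by transferring the control of $v'$ to objects (Gaussian tails and distribution-function tails of $X_T$) that admit clean exponential bounds under exactly the hypothesis \eqref{pBoundsStronger}.
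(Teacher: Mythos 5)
Your proposal is correct, and it is worth separating its two halves. For the limits, you follow essentially the paper's own route: the identity $\phi(f_2(k))v'(k) = N(-f_2(k)) - \Prob(X_T>k)$ (and its mirror form with $\Prob(X_T\le k)$) is exactly Eq.\ \eqref{vPrime}--\eqref{vPrime2} there, and the subsequent control via Mill's ratio plus the wing estimates \eqref{f2behavior}, \eqref{f2secondEstimate} is the same; the only difference is that you bound the distribution-tail term by Markov's inequality using $p<p^*$ (resp.\ $p>-q^*$), where the paper invokes the characterization $p^*=\sup\{\alpha: \Prob(X_T>k)=O(e^{-\alpha k})\}$ from the cited reference -- your version is more elementary and self-contained. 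For the integrability, your argument genuinely differs: the paper simply observes that the very exponential bounds used for the limits are integrable in $k$, whereas you bound $|v'|\le f_1'+f_2'$ (Fukasawa's monotonicity) and change variables $z=f_i(k)$ to reduce to the integrals already controlled in Lemma \ref{auxiliaryIntegrability}, correctly noting that its proof only uses the pointwise exponential-growth bound and so applies with $\Psi$ in place of $\Psi'$. This buys a small refinement the paper does not make explicit: integrability holds under the wing bounds $-p_-(\beta_-)<p<p_+(\beta_+)$ alone, the stronger hypothesis \eqref{pBoundsStronger} being needed only for the vanishing of the boundary terms. One small point to tidy up: your prescription ``take $\beta>\beta_+$ with $p<p_+(\beta)$'' implicitly assumes $\beta_+<2$, since estimate \eqref{f2behavior} is only available for $\beta\in(\beta_+,2)$; when $\beta_+=2$ (so $p<p_+(2)=1$) you should instead use the universal bound $f_2(k)\ge\sqrt{2k}$, i.e.\ the case $\beta=2$, and likewise at the left wing when $\beta_-=2$ -- this is precisely why the paper first disposes of $p\in(0,1)$ separately before assuming $\beta_\pm\in[0,2)$.
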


In the proof of Lemma \ref{auxiliaryIntegrabilityAndLimits}, we will make use of the identity
\begin{equation} \label{vPrime}
\phi(f_2(k)) v'(k) = N(-f_2(k)) - \mathbb P(X_T > k), \qquad \forall k \in \mathbb R,
\end{equation}
which can easily be derived from Black-Scholes formula and the definition of $v(k)$, $\mathrm{Call_{BS}}(k,v(k)) = \esp\bigl[\bigl(\frac {S_T}F - e^k\bigr)^+\bigr]$.
Using the identity $N(-f) = 1 - N(f)$, we also have the equivalent formulation
\begin{equation} \label{vPrime2}
\phi(f_2(k)) v'(k) = -N(f_2(k)) + \mathbb P(X_T \le k), \qquad k \in \mathbb R.
\end{equation}

\begin{remark}
It follows from expression \eqref{vPrime} that $ -1 \leq v'(k)\phi(f_2(k)) \leq 1$, in particular this quantity is bounded, so that the condition $\lim_{k \to \pm \infty} \Psi(k) v'(k) \phi(f_2(k)) = 0$ holds for functions $\Psi$ going to zero at infinity.
\end{remark}

\begin{proof}[Proof of Lemma
\ref{auxiliaryIntegrabilityAndLimits}]
In what follows, $c$ denotes a positive constant that can change from line to line, but does not depend on $k$ nor on any other parameter.
Let $\Psi$ be of exponential growth of order $p$.

$\bullet$ Using the boundedness of $v'(k) \phi(f_2(k))$ from Eq \eqref{vPrime}, we have $|\Psi(k) v'(k) \phi(f_2(k))| \le c \, e^{pk}$. 
Therefore, if $p \maj 0$, $\lim_{k \to -\infty} \Psi(k) v'(k) \phi(f_2(k)) = 0$ and this function is integrable in a neighborhood of $-\infty$.
On the other hand, using again Eq \eqref{vPrime}, the bound $f_2(k) \ge \sqrt{2k}$ for $k > 0$, and the bound on Mill's ratio $N(-f) \le \frac{\phi(-f)}{f}$ for $f \maj 0$, we have
\[
\begin{aligned}
|\Psi(k) v'(k) \phi(f_2(k))| &= |\Psi(k) N(-f_2(k)) - \Psi(k) \Prob(X_T \maj k)|
\\
&\le \frac c{\sqrt{2k}} e^{pk} \phi(\sqrt{2k}) + c \, e^{pk} \Prob(X_T \maj k)
\\
&\le \frac c{\sqrt{2k}} e^{k(p-1)} + c \, e^{pk} \Prob(X_T \maj k),
\qquad \forall k \maj 0.
\end{aligned}
\]
For the second term, note that $\esp[e^{X_T}] = \esp\left[\frac{S_T}F\right] = 1$ entails $\Prob(X_T \maj k) = O(e^{-\alpha k})$ as $k \to \infty$, for every $\alpha \mino 1$ (for a proof of this fact, see \cite[Lemma 4.4]{GulisashSIAM}).
It follows that, if $p \mino 1$, $\lim_{k \to \infty} \Psi(k) v'(k) \phi(f_2(k)) = 0$ and that this function is integrable in a neighborhood of $+\infty$.
Overall, the conclusion of Lemma \ref{auxiliaryIntegrabilityAndLimits} is true for every $p \in (0,1) = (p_-(2), p_+(2))$ and every function $\Psi$ of exponential growth of order $p$ (regardless of the values of $\beta_\pm$).

$\bullet$ According to the first bullet point, we can limit ourselves to $\beta_+ \in[0,2)$.
Assume that $p$ is in the interval \eqref{pBoundsStronger}.
It follows from Eq \eqref{vPrime} and estimate \eqref{f2behavior} that, for every $\beta \in (\beta_+,2)$,
\[
\begin{aligned}
|\Psi(k) v'(k) \phi(f_2(k))| &\le
\frac {|\Psi(k)|}{\sqrt{2k}} \phi \biggl( \biggl( \frac 1{\sqrt{\beta}} + \frac{\sqrt{\beta}}2 \biggr) \sqrt{k} \biggr)
+  |\Psi(k)| \Prob(X_T \maj k)
\\
&= \frac c{\sqrt{2k}} \exp \left(pk  - \frac12 \biggl(\frac 1{\beta} + \frac {\beta}4 + 1 \biggr)k \right)
+  c \, e^{pk} \Prob(X_T \maj k)
\\
&= \frac c{\sqrt{2k}} \exp \left( k (p - p_+ (\beta)) \right)
+  c \, e^{pk} \Prob(X_T \maj k),
\qquad \forall  k > \overline k_\beta.
\end{aligned}
\]
By assumption, $p - p_+(\beta_+) < 0$.
Choosing $\beta$ sufficiently close to $\beta_+$, we have $p - p_+ (\beta) \mino 0$, too (in the particular case $\beta_+ = 0$, we can make $p-p_+(\beta)$ arbitrarily small by taking $\beta > 0$ small enough).

For the second term in the last line, note that the right critical moment $p^*$ satisfies $p^* = \sup \{ \alpha > 0: \mathbb P(X_T > k) = O(e^{-k\alpha}) \mbox{ as } k \to \infty \}$, see again \cite[Lemma 4.4]{GulisashSIAM}.
Therefore, for every $\alpha < p^*$, $e^{pk} \Prob(X_T \maj k) = O(e^{k (p-\alpha)})$ as $k \to \infty$. 
Taking $p< \alpha < p^*$, we can conclude that $\lim_{k \to \infty} \Psi(k) v'(k) \phi(f_2(k)) = 0$ and that $k \mapsto \Psi(k) v'(k) \phi(f_2(k))$ is integrable in a neighborhood of $+\infty$.

$\bullet$ The analogous argument holds for the left side behavior of $\Psi(k) v'(k) \phi(f_2(k))$: let us provide the details for completeness.
From the first bullet point, we can assume $\beta_- \in [0,2)$. By definition, for every $\delta > \beta_-$, there exists $\underline k_\delta < 0$ such that $v(k) < \sqrt{\delta |k|}$ for all $k < \underline k_\delta$.
It follows that
\begin{equation} \label{f2secondEstimate}
\mbox{For every } \beta_- < \delta < 2, \ f_2(k) \le -\left(\frac 1{\sqrt{\delta}} - \frac{\sqrt{\delta}}2 \right) \sqrt{|k|}
=: -r_{\delta} \sqrt{|k|}, \ \ \forall k < \underline k_\delta.
\end{equation}
In order to prove \eqref{f2secondEstimate}, note that for every $a > 1$ and $k < \underline k_\delta$
\[
f_1(k) = -\frac{|k|}{v(k)} - \frac{v(k)}2 = -\left( \frac{|k|}{v(k)} + \frac{a v(k)}2 \right) + \frac{(a-1)v(k)}2 \le -\sqrt{2a|k|} + \frac{(a-1)\sqrt{\delta |k|}}2
\] from which we obtain
$f_1(k) \le -\left( \frac1{\sqrt \delta} + \frac{\sqrt \delta}2 \right) \sqrt{|k|}$
by choosing $a = 2/\delta$.
Then, \eqref{f2secondEstimate} follows from $f_2(k) = f_1(k) + v(k)$.

Consequently, if $p$ is in the interval \eqref{pBoundsStronger}, using Eq \eqref{vPrime2} and estimate \eqref{f2secondEstimate}, for every $\delta \in (\beta_-,2)$ we have
\[
\begin{aligned}
|\Psi(k) v'(k) \phi(f_2(k))| &\le
\frac {|\Psi(k)|}{\sqrt{r_{\delta} |k|}} \phi \left( \Bigl( \frac 1{\sqrt{\delta}} - \frac{\sqrt{\delta}}2 \Bigr) \sqrt{k}\right)
+  |\Psi(k)| \Prob(X_T \mino k)
\\
&= \frac c{\sqrt{r_{\delta} |k|}} \exp \left(pk  - \frac12 \biggl(\frac 1{\delta} + \frac {\delta}4 - 1 \biggr) |k| \right)
+  c \, e^{pk} \Prob(X_T \mino k)
\\
&= \frac c{\sqrt{r_{\delta} |k|}} \exp \left( -|k| (p + p_- (\delta)) \right)
+  c \, e^{pk} \Prob(X_T \mino k),
\qquad \forall  k < \underline k_\delta.
\end{aligned}
\]
By assumption, $p + p_-(\beta_-) > 0$. Choosing $\delta$ sufficiently close to $\beta_-$, we have $p + p_-(\delta) > 0$, too.
For the second term in the last line, we use the property $q^* = \sup \{ \alpha > 0: \mathbb P(X_T \mino k) = O(e^{-|k|\alpha}) \mbox{ as } k \to -\infty \}$ \cite[Lemma 4.4]{GulisashSIAM}, which entails $e^{pk} \Prob(X_T \mino k) = O(e^{-|k|(\alpha-p)})$ for every $\alpha \mino q^*$.
Taking $p \mino \alpha \mino q^*$, we conclude that $\lim_{k \to -\infty} \Psi(k) v'(k) \phi(f_2(k)) = 0$ and that $k \mapsto \Psi(k) v'(k) \phi(f_2(k))$ is integrable at $-\infty$.

Putting the three bullet points together, we have shown that Lemma \ref{auxiliaryIntegrabilityAndLimits} holds for any value of $\beta_\pm \in [0,2]$.
\end{proof}

\subparagraph{Proof of Proposition \ref{p:extensionFukThmWeaker}}

We follow the lines of \cite[Theorems 4.6 and 4.4]{Fukasawa2012}.
Denote $L_{\Psi}(p)$ the LHS of \eqref{extensionFukasawa}: $L_{\Psi}(p) = \int_{-\infty}^{+\infty} [\Psi(g_2(z)) - \Psi'(g_2(z)) + \Psi'(g_1(z)) e^{-g_1(z)}] \phi(z) dz$.
Using the identity $\phi(f_1(k)) = \phi(f_2(k)) e^k$, we have

\begin{equation} \label{identity1}
\begin{aligned}
\int \Psi'(g_1(z)) e^{-g_1(z)} \phi(z) dz &= \int \Psi'(k) e^{-k} \phi(f_1(k)) f_1'(k) dk
\\
&=
\int \Psi'(k) \phi(f_2(k)) (f_2'(k) - v'(k)) dk
\\
&=
\int \Psi'(g_2(z)) \phi(z) dz - \int \Psi'(k) \phi(f_2(k)) v'(k) dk.
\end{aligned}
\end{equation}

Integrating by parts, applying Lemma
\ref{auxiliaryIntegrabilityAndLimits} and using the identity $\phi'(f) = -f \phi(f)$, we get

\begin{equation} \label{identity2}
\begin{aligned}
- \int \Psi'(k) \phi(f_2(k)) v'(k) dk &= -[\Psi(k) \phi(f_2(k)) v'(k))]^{k \to \infty}_{k \to -\infty} 
+ \int \Psi(k) \frac d{dk} [\phi(f_2(k)) v'(k)] dk
\\
&= \int \Psi(k) \left( -f_2(k) f_2'(k) v'(k) + v''(k) \right) \phi(f_2(k)) dk.
\end{aligned}
\end{equation}

It follows from Eqs \eqref{identity1} and \eqref{identity2} that
$L_{\Psi}(p) = \int \Psi(k) \left[ f_2'(k) (1 - f_2(k)v'(k)) + v''(k) \right] \phi(f_2(k)) dk$.
Now recall that, under the assumption that $v$ is twice differentiable, the
density function of $S_T$ at the point $K = F e^k$ is given by
\begin{equation} \label{density}
\begin{aligned}
\frac {d^2 P(K)}{dK^2} = \frac {d^2}{dK^2} \mathbb E [(K-S_T)^+] &= \frac {d^2}{dK^2} \mathrm{P_{BS}} \biggl(K,v\Bigl(\ln \frac K F\Bigr) \biggr)
\\
&= \frac d{dK} \left( N(f_2(k)) +  \partial_{v} \mathrm{P_{BS}} \biggl(K,v\Bigl(\ln \frac K F\Bigr) \biggr) v'(k) \frac 1{K} \right)
\\
&= \frac d{dK} \left( N(f_2(k)) + \phi(f_2(k)) v'(k) \right) 
\\
&= \phi(f_2(k)) \left[f_2'(k) (1-f_2(k) v'(k)) + v''(k) \right] \frac 1{F e^k}.
\end{aligned}
\end{equation}
where we have used the identities
$\partial_{v} \mathrm{P_{BS}}(F e^k,v) = F e^k \phi(f_2(k))$ and, again, $\phi'(f) = -f \phi(f)$.
Applying \eqref{density}, we obtain

\[
L_{\Psi}(p) = \int_{-\infty}^{\infty} \Psi(k) \frac {d^2 P(F e^k)}{dK^2} F e^k dk
= \int_0^\infty \Psi \Bigl(\log \frac K F\Bigr) \frac {d^2 P(K)}{dK^2} dK
= \mathbb E \left[ \Psi \Bigl(\log \frac{S_T} F\Bigr) \right]
\] 
therefore Equation \eqref{extensionFukasawa} is proved under condition \eqref{pBoundsStronger} on $p$.
\qed

\hspace{4mm} We finally have to strengthen Proposition \ref{p:extensionFukThmWeaker} into Theorem \ref{extensionFukThm}. 
If we know that $p^* \ge p_+(\beta_+)$ and $q^* \ge p_-(\beta_-)$, this is immediate.
Recall anyhow that our intention here is \emph{not} to make use of Lee's result \cite{Lee04}, therefore these bounds need to be proved.

The key ingredient will be the following result from the theory of Laplace transforms.
Let $\varphi \in L^1_{loc}(\R_+)$ such that $\varphi \ge 0$ a.e. 
Define $\Phi(p) = \int_0^{\infty} e^{pt} \varphi(t) dt$ and denote $abs(\Phi) = \sup\{p \in \R: \Phi(p) \mino \infty\}$ the abscissa of convergence of $\Phi$, where $\inf \emptyset = -\infty$.
Then, $\Phi$ defines a holomorphic function on $\{\mathrm{Re}(p) \mino abs(\varphi) \}$.

\begin{lemma}[Theorem 2.7.1 in \cite{LaplTransf}] \label{e:lemmaLaplTransform}
Let $\varphi \in L^1_{loc}(\R_+)$ such that $\varphi \ge 0$ a.e. Assume that $-\infty < abs(\varphi) \mino \infty$.
Then, $\Phi$ cannot be extended to a holomorphic function on a neighbourhood of $abs(\varphi)$.
\end{lemma}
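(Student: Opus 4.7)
The plan is to prove this by contradiction, following the classical argument of Landau/Pringsheim showing that the abscissa of convergence of a Laplace transform with non-negative integrand is necessarily a singular point. Denote $a = abs(\varphi) \in \R$ and suppose, on the contrary, that $\Phi$ admits a holomorphic extension $\tilde\Phi$ to an open disk $D(a, 2 \delta)$ for some $\delta > 0$. The strategy is to expand $\tilde\Phi$ in Taylor series around a real point $p_0$ slightly to the left of $a$, and exploit the positivity of $\varphi$ to recover the integral $\int_0^\infty e^{pt} \varphi(t) dt$ for some $p > a$, contradicting the definition of $a$.

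First, I would pick $p_0 \in (a-\delta, a)$ real, so that $\tilde\Phi$ is holomorphic on the disk $D(p_0, r)$ with $r := a + \delta - p_0 > a - p_0$. The power series
\[
\tilde\Phi(p) = \sum_{n \ge 0} \frac{\tilde\Phi^{(n)}(p_0)}{n!} (p - p_0)^n
\]
then converges absolutely on $D(p_0, r)$. Since $p_0 < a$, standard differentiation under the integral (justified by noting $t^n e^{p_0 t} \le C_n e^{q t}$ for any $q \in (p_0, a)$, and $\int_0^\infty e^{qt} \varphi(t) dt < \infty$) yields
\[
\tilde\Phi^{(n)}(p_0) = \Phi^{(n)}(p_0) = \int_0^\infty t^n e^{p_0 t} \varphi(t) dt \ge 0.
\]

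The key step is then to choose a real $p$ with $a < p < p_0 + r$ (possible since $r > a - p_0$). For such $p$, all terms of the Taylor series are non-negative, so by Tonelli's theorem we may exchange sum and integral:
\[
\tilde\Phi(p) = \sum_{n \ge 0} \frac{(p-p_0)^n}{n!} \int_0^\infty t^n e^{p_0 t} \varphi(t) dt
= \int_0^\infty e^{p_0 t} \varphi(t) \sum_{n \ge 0} \frac{((p-p_0) t)^n}{n!} dt
= \int_0^\infty e^{p t} \varphi(t) dt.
\]
In particular the rightmost integral is finite for some $p > a$, contradicting $a = abs(\varphi)$.

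I do not anticipate a major obstacle: the only delicate bookkeeping is verifying differentiation under the integral at $p_0 < a$ and applying Tonelli to swap sum and integral (which is clean thanks to $\varphi \ge 0$). The crucial use of the hypothesis is precisely that the derivatives $\Phi^{(n)}(p_0)$ are all non-negative; without sign control one would not be able to identify the value of the analytic continuation with the original defining integral at points beyond the abscissa.
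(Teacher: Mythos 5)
Your argument is correct: it is the classical Landau--Pringsheim proof that, for a non-negative integrand, the abscissa of convergence is a singular point of the Laplace transform. Note that the paper itself does not prove this lemma at all --- it simply quotes it as Theorem 2.7.1 of the cited monograph on Laplace transforms --- so your proposal supplies a self-contained proof of what the paper treats as an external ingredient, transposed correctly to the paper's sign convention $\Phi(p)=\int_0^\infty e^{pt}\varphi(t)\,dt$, which is holomorphic to the \emph{left} of $abs(\varphi)$. The one step worth spelling out is your claim that the Taylor series of $\tilde\Phi$ at $p_0$ has radius of convergence at least $r=a+\delta-p_0$: strictly speaking, the extension is only assumed holomorphic on $D(a,2\delta)$, so you should first glue it with $\Phi$ into a single function holomorphic on $U=\{\mathrm{Re}(p)<a\}\cup D(a,2\delta)$ and then check the elementary geometric fact that $D(p_0,\,a+\delta-p_0)\subset U$ when $p_0\in(a-\delta,a)$ (this does hold, using $0<a-p_0<\delta$). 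With that remark made explicit, the rest is exactly as you say: positivity of the derivatives $\Phi^{(n)}(p_0)=\int_0^\infty t^n e^{p_0 t}\varphi(t)\,dt$, justified by differentiation under the integral at $p_0<a$, plus Tonelli to resum the series, identifies $\tilde\Phi(p)$ with $\int_0^\infty e^{pt}\varphi(t)\,dt<\infty$ at some real $p>a$, contradicting the definition of the abscissa of convergence.
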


Lemma \ref{e:lemmaLaplTransform} allows to prove the bounds we need in order to conclude.

\begin{lemma} \label{l:criticalMomentsFirstInequality}
$p^* \ge p_+(\beta_+)$ and $q^* \ge p_-(\beta_-)$.
\end{lemma}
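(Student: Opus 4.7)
The plan is to establish the first inequality $p^* \ge p_+(\beta_+)$ by contradiction with Lemma \ref{e:lemmaLaplTransform}, and then to deduce $q^* \ge p_-(\beta_-)$ via the duality of Section \ref{s:duality}.

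For the first inequality, I would assume for contradiction that $p^* < p_+(\beta_+)$. Under Assumption \ref{standingAssumptionsV}, $X_T = \log(S_T/F)$ admits a density $\rho_X \in L^1(\R)$. Splitting $M(p) = M_+(p) + M_-(p)$ with $M_\pm(p) = \int_{\R_\pm} e^{p x} \rho_X(x)\,dx$, the map $M_+$ is the Laplace transform (in the sign convention of Lemma \ref{e:lemmaLaplTransform}) of the nonnegative locally integrable function $\rho_X 1_{(0,\infty)}$. Its abscissa of convergence coincides with $p^*$: indeed $M_-(p) \le 1$ for $p \ge 0$, so finiteness of $\esp[(S_T/F)^p]$ for $p > 0$ is equivalent to finiteness of $M_+(p)$. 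Since trivially $p^* \ge 1$ (because $\esp[(S_T/F)^p] \le 1 + \esp[S_T/F] = 2$ for $p \in [0,1]$), this abscissa is finite, so Lemma \ref{e:lemmaLaplTransform} forbids any holomorphic extension of $M_+$ across $p^*$. On the other hand, applying Proposition \ref{p:extensionFukThmWeaker} with $\Psi(k)=e^{pk}$ on the real interval $I=(\max(-p_-(\beta_-),-q^*),p^*)$ yields $L(p) = M_+(p) + M_-(p)$, hence $M_+(p) = L(p) - M_-(p)$ on $I$. By Lemma \ref{auxiliaryIntegrability}, $L$ is holomorphic on the strip $D = \{-p_-(\beta_-) < \Re(p) < p_+(\beta_+)\}$, while $M_-$ is holomorphic on $\{\Re(p) > -q^*\}$ (dominated there by $M_-(\Re(p)) < \infty$). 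Thus $\tilde M_+ := L - M_-$ is holomorphic on $D \cap \{\Re(p) > -q^*\}$ and, by the identity principle, agrees with $M_+$ wherever both are defined. Under the contradiction hypothesis the point $p^*$ lies in the interior of the extension domain (real, positive, strictly less than $p_+(\beta_+)$, and larger than $-q^*$), producing a holomorphic extension of $M_+$ to a neighborhood of $p^*$ — contradicting Lemma \ref{e:lemmaLaplTransform}.

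For the second inequality I would apply the first one to the dual model $(\hat S_T,\hat\Prob)$ of Section \ref{s:duality}. The dual smile $\hat v(k) = v(-k)$ still satisfies Assumption \ref{standingAssumptionsV}: condition (ii) for $\hat v$ amounts to $\lim_{k\to+\infty} f_1(k)=+\infty$, which is the no-arbitrage property $\lim_{k\to+\infty}\mathrm{Call_{BS}}(k,v(k))=0$ recalled after Assumption \ref{standingAssumptionsV} and which follows from $\esp[S_T/F]=1$. Its right critical slope is $\hat\beta_+=\beta_-$, so the first inequality under $\hat\Prob$ gives $\hat p^* \ge p_+(\beta_-) = p_-(\beta_-)+1$, using the elementary identity $p_+(\beta)-p_-(\beta)=1$. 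On the other hand, Remark \ref{r:dualityMoments} yields $\hat M(1-p)=M(p)$, and inspecting when both sides are finite one gets $\hat p^* = 1 + q^*$. Combining the two gives $q^* \ge p_-(\beta_-)$.

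The main obstacle will be the bookkeeping of domains of holomorphy, and in particular justifying cleanly that the abscissa of convergence of $M_+$ (in the sense of Lemma \ref{e:lemmaLaplTransform}) is exactly the critical exponent $p^*$ from \eqref{e:criticalExpDef}, so that the Laplace-transform obstruction applies at precisely the right point. A secondary technical point is verifying that the duality transformation preserves the standing Assumption \ref{standingAssumptionsV}, which reduces to the no-arbitrage behavior of $f_1$ at $+\infty$ recalled after Assumption \ref{standingAssumptionsV}.
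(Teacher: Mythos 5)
Your proof of the first inequality is correct and essentially the paper's own argument: split $M = M_+ + M_-$ with $M_\pm$ the Laplace transforms of the density of $X_T$ on the two half-lines, note $\mathrm{abs}(M_+) = p^*$, use Proposition \ref{p:extensionFukThmWeaker} (valid on the interval \eqref{pBoundsStronger}, which under the contradiction hypothesis has right endpoint $p^*$) together with the holomorphy of $L$ on $D$ (Lemma \ref{auxiliaryIntegrability}) and of $M_-$ on a half-plane, so that $L - M_-$ extends $M_+$ holomorphically across $p^*$, contradicting Lemma \ref{e:lemmaLaplTransform}. The only cosmetic slip is the sentence ``since $p^* \ge 1$, this abscissa is finite'': $p^* \ge 1$ gives the lower bound $\mathrm{abs}(M_+) > -\infty$, while finiteness from above is exactly the contradiction hypothesis $p^* < p_+(\beta_+) \le \infty$; both facts are present in your write-up, so this is not a gap. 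Where you genuinely diverge is the second inequality: the paper simply reruns the mirror argument (with $M_-$ playing the role of the one-sided Laplace transform, abscissa $-q^*$, and $L - M_+$ providing the forbidden extension below $-q^*$), whereas you deduce $q^* \ge p_-(\beta_-)$ from the first inequality applied to the dual model of Section \ref{s:duality}, using $\hat v(k) = v(-k)$, $\hat\beta_+ = \beta_-$, $\hat M(1-p) = M(p)$ (Remark \ref{r:dualityMoments}), hence $\hat p^* = 1 + q^*$, together with $p_+(\beta) = p_-(\beta) + 1$. This is a valid alternative: it buys you a shorter derivation that reuses the duality machinery already established (and your check that the dual smile satisfies Assumption \ref{standingAssumptionsV}, via $\lim_{k \to +\infty} f_1(k) = +\infty$, is exactly the point that needs verifying), at the cost of invoking Section \ref{s:duality} and the transfer of the standing assumptions to $(\hat S_T, \hat\Prob)$, whereas the paper's symmetric argument is self-contained and does not route through the dual measure.
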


\begin{proof}
We focus on the first inequality, $p^* \ge p_+(\beta_+)$.
Assume $p^* \mino p_+(\beta_+)$.
We have $M(p) = M_-(p) + M_+(p) := \int_{-\infty}^0 e^{px} f_X(x) dx + \int_0^{\infty} e^{px} f_X(x) dx$, where $f_X \in L^1(\R)$ is the density of $X_T$ (which exists under Assumption \ref{standingAssumptionsV} (i) on the implied volatility $v$).
From Proposition \ref{p:extensionFukThmWeaker}, the identity 
\[
M_+(p) = L(p) - M_-(p)
\]
holds on $\{0 \mino \mathrm{Re}(p) \mino p^* \}$.
By definition of $p^*$, $abs(M_+) = p^*$.
On the other hand, $M_-$ is holomorphic on the half-plane $\{\mathrm{Re}(p) \maj 0 \}$, and $L$ is holomorphic on the strip $\{0 \mino \mathrm{Re}(p) \mino p_+(\beta_+)\}$ by Lemma \ref{auxiliaryIntegrability}.
In other words, the function $L(\cdot) - M_-(\cdot)$ is a holomorphic extension of $M_+$ to the strictly larger strip $\{0 \mino \mathrm{Re}(p) \mino p_+(\beta_+)\}$, contradicting Lemma \ref{e:lemmaLaplTransform}.
The second inequality $q^* \ge p_-(\beta_-)$ is proven analogously.
\end{proof}

As pointed out above, the proof of Theorem \ref{extensionFukThm} is now immediate.

\textbf{Proof of Theorem
\ref{extensionFukThm}}
Lemma \ref{l:criticalMomentsFirstInequality} implies $\min(p^*, p_+(\beta_+)) = p_+(\beta_+)$ and $\min(q^*, p_-(\beta_-)) = p_-(\beta_-)$.
Equation \eqref{extensionFukasawa} then follows from Proposition \ref{p:extensionFukThmWeaker}.
Equation \eqref{moment} is \eqref{extensionFukasawa} for the function
$\Psi(k) = e^{pk}$.

\bibliographystyle{siam}
\bibliography{ReferencesMgfsNormalizedVols}
    
\end{document}